\definecolor{vertFonce}{rgb}{0,0.5,0}
\definecolor{numLignes}{rgb}{0.17,0.57,0.7}	
\definecolor{gris}{rgb}{0.5,0.5,0.5}
\definecolor{grisFonce}{rgb}{0.2,0.2,0.2}
\definecolor{orange}{rgb}{1,0.65,0.31}		
\definecolor{orangeFonce}{rgb}{1,0.4,0}
\definecolor{bleuFonce}{rgb}{0,0,0.4}
\definecolor{rougeFonce}{rgb}{0.3,0,0}
\definecolor{rougeWord}{rgb}{0.5,0,0}
\definecolor{vertClair}{rgb}{0.8,1,0.8}
\definecolor{rougeClair}{rgb}{1,0.5,0.5}
\newtheorem{lem}{Lemma}[section]
\newtheorem{thm}{Theorem}[section]
\newtheorem{prop}{Proposition}[section]
\newtheorem{remark}{Remark}[section]
\newcommand{\step}[1]	{\paragraph{\itshape\bfseries Step #1.}}
\newcommand		{\subsetArrow}	{\mathrel{\ooalign{$\subset$\cr%
\hidewidth\raise-.087ex\hbox{$_\shortrightarrow\mkern-1.5mu$}\cr}}}
\newcommand		{\subsetarrow}	{\mathrel{\ooalign{$\subset$\cr%
\hidewidth\raise-1.45ex\hbox{$\vec{}\mkern6mu$}\cr}}}
\newcommand		{\ssi}		{\Leftrightarrow}
\newcommand		{\N}		{\mathbb N}			
\newcommand		{\R}		{\mathbb R}			
\newcommand     {\cM}		{\mathcal M}		
\renewcommand	{\L}		{\mathcal L}		
\newcommand		{\B}		{\mathscr B}		
\newcommand		{\cU}		{\mathcal U}
\newcommand		\sfT		{\mathsf T}			
\newcommand		\sfX		{\mathsf X}			
\newcommand		{\lt}			{\left}				%
\newcommand		{\rt}			{\right}			%
\renewcommand	{\(}			{\lt(}
\renewcommand	{\)}			{\rt)}
\newcommand		{\lal}			{\langle}			%
\newcommand		{\ral}			{\rangle}			%
\newcommand		{\weight}[1]	{\lt\lal #1\rt\ral}	
\newcommand		{\com}[1]		{\lt[{#1}\rt]}		
\newcommand		{\il}			{[\![}				
\newcommand		{\ir}			{]\!]}				
\newcommand		{\Int}[1]		{\il #1 \ir}
\newcommand		{\n}[1]			{\lt|{#1}\rt|}
\newcommand		{\nrm}[1]		{\lt\|{#1}\rt\|}
\newcommand		{\Nrm}[2]		{\lt\|{#1}\rt\|_{#2}}
\renewcommand		{\d}		{\mathrm{d}}		
\newcommand			{\dd}		{\,\d}				
\newcommand			{\dpt}		{\partial_t}
\newcommand			{\dps}		{\partial_s}
\newcommand			{\dt}		{\frac{\d}{\d t}}	
\newcommand			{\ddt}[1]	{\frac{\d #1}{\d t}}
\newcommand			{\Dx}		{\nabla_x}
\newcommand			{\Dv}		{\nabla_\xi}
\newcommand			{\Id}		{\mathrm{1}}		
\DeclareMathOperator{\cF}		{\mathcal{F}}		
\DeclareMathOperator{\sign}		{sgn}				
\DeclareMathOperator{\tr}		{Tr}				
\DeclareMathOperator{\diag}		{diag}
\newcommand		{\F}[1]			{\cF\!\( #1 \)}		
\newcommand		{\Sign}[1]		{\sign\!\( #1 \)}	
\newcommand		{\Tr}[1]		{\tr\!\( #1 \)} 
\newcommand		{\Diag}[1]		{\diag\!\( #1 \)}
\newcommand		{\intd}			{\int_{\R^d}}
\newcommand		{\iintd}		{\iint_{\R^{2d}}}
\newcommand		{\sumj}			{\sum_{j\in J}}
\newcommand		{\ii}			{\mathrm{i}}	
\newcommand		{\init}			{\mathrm{in}}
\newcommand		{\loc}			{\mathrm{loc}}
\newcommand		{\fb}			{\mathfrak b}
\newcommand		{\eps}			{\varepsilon}
\newcommand		{\Eps}			{\mathcal{E}}
\newcommand		{\cC}			{\mathcal{C}}
\renewcommand	{\r}		{\op}			
\newcommand		{\op}		{\boldsymbol{\rho}}	
\newcommand		{\opm}		{\boldsymbol{m}}	
\newcommand		{\bra}[1]	{\langle #1 |}
\newcommand		{\ket}[1]	{| #1 \rangle}
\newcommand		{\w}		{w}					
\newcommand		{\wh}		{\w_\hbar}			
\newcommand		{\weyl}		{\op^{W\!}}				
\newcommand		{\Weyl}[1]	{\weyl_1\!\( #1 \)}
\newcommand		{\Weylh}[1]	{\weyl_\hbar\!\( #1 \)}
\newcommand		{\opp}		{\boldsymbol{p}}
\newcommand		{\Dh}		{\boldsymbol{\nabla}}			
\newcommand		{\Dhx}[1]	{\Dh_{\!x} #1}		
\newcommand		{\Dhv}[1]	{\Dh_{\!\xi} #1}	
\newcommand		{\Dhvv}[1]	{\Dh^2_{\!\xi} #1}	
\newcommand		{\floor}[1]		{\lt\lfloor{#1}\rt\rfloor}
\title[Strong semiclassical limit]{Strong semiclassical limit from Hartree and Hartree-Fock to Vlasov-Poisson equation}
\author{Laurent Lafleche$^1$, Chiara Saffirio$^2$}
\thanks{$^1$ Department of Mathematics, The University of Texas at Austin, Austin, TX 78712, USA, {\tt lafleche@math.utexas.edu}}
\thanks{$^2$ Department of Mathematics and Computer Science, University of Basel, 4051 Basel, Switzerland, {\tt chiara.saffirio@unibas.ch}}
\subjclass[2010]{82C10 $\cdot$ 35Q41 $\cdot$ 35Q55 (82C05,35Q83).}
\begin{document}

\begin{abstract} We consider the semiclassical limit from the Hartree to the Vlasov equation with general singular interaction potential including the Coulomb and gravitational interactions, and we prove explicit bounds in the strong topologies of Schatten norms. Moreover, in the case of fermions, we provide estimates on the size of the exchange term in the Hartree-Fock equation and also obtain a rate of convergence for the semiclassical limit from the Hartree-Fock to the Vlasov equation in Schatten norms. Our results hold for general initial data in some Sobolev space and any fixed time interval.
\end{abstract}

\maketitle

\bigskip

\textbf{Keywords}: Hartree equation, Hartree-Fock equation, Vlasov equation, Coulomb interaction, gravitational interaction, semiclassical limit.

\renewcommand{\contentsname}{\centerline{Table of Contents}}
\setcounter{tocdepth}{2}	
\tableofcontents


\bigskip
\section{Introduction}\label{sec:intro}

	The Vlasov equation is a kinetic equation describing the time evolution of the probability density of particles in interaction, such as particles in a plasma or in a galaxy. The problem of deriving this equation from the dynamics of $N$ quantum interacting particles in a joint mean-field and semiclassical approximation is a classical question in mathematical physics and the first rigorous results were obtained in the '80s (Cf. \cite{narnhofer_vlasov_1981, spohn_vlasov_1981}).
	
	We study here the semiclassical limit from the Hartree and Hartree-Fock equations towards the Vlasov equation, i.e. the limit corresponding to a change of scaling implying that the Planck constant $h$ becomes negligible. For any fixed time interval, we obtain quantitative Schatten norm estimates between the solutions of the quantum equations (Hartree and Hartree-Fock) and the Weyl quantization of the solution of the Vlasov equation. In particular, it implies the convergence of the Wigner transform of the quantum equations towards the solution of the Vlasov equation.
	
\subsection{Context and state of the art}
	
	\subsubsection{Vlasov equation} The Vlasov equation is a nonlinear transport equation for the probability density $f:\R_+\times\R^d\times\R^d\to\R$
	\begin{equation}\label{eq:Vlasov}
		\partial_t f + \xi\cdot\nabla_x f + E\cdot\Dv f=0\,,
	\end{equation}
	where $t\in\R_+$ denotes the time variable, $x\in\R^d$ denotes the space variable and $\xi\in\R^d$ denotes the momentum variable. In the above equation, $E := -\nabla K* \rho_f$ is the self induced mean-field force field created by the pair interaction potential $K:\R^d\to\R$ through the formula
	\begin{equation*}
		-(\nabla K*\rho_f)(t,x) = -\intd \nabla K(x-y)\,\rho_f(t,y)\dd y,
	\end{equation*}
	where $\rho_f$ is the spatial density associated to $f$, namely 
	\begin{equation*}
		\rho_f(t,x) = \intd f(t,x,\xi)\dd \xi\,.
	\end{equation*}
	When $K$ is the Green's function of the Laplace operator, Equation~\eqref{eq:Vlasov} is called the Vlasov-Poisson system, because $K$ can be obtained as a solution to the Poisson equation $-\Delta K=\rho_f$, thus linking the Vlasov equation to the Poisson equation. In this case, in dimension $3$, $K$ corresponds to the Coulomb potential
	\begin{equation*}
		K(x) = \frac{1}{4\pi\n{x}},
	\end{equation*}
	but our method allows to consider more general attractive and repulsive potentials. To simplify the presentation, we will look at homogeneous potentials of the form $K(x) = \pm\n{x}^{-a}$ or at $K(x) = \pm\ln(\n{x})$ and we will then indicate how to generalize our results to a class of Sobolev spaces (see Subsection~\ref{subsec:generalization_K}).

	The well-posedness of the Vlasov equation~\eqref{eq:Vlasov} is due to Dobrushin~\cite{dobrushin_vlasov_1979} for smooth interaction potentials $K\in C^2_c(\R^d)$. Concerning singular interactions, the cases of Coulomb and gravitational potentials have been tackled first in~\cite{iordanskij_cauchy_1961} and~\cite{ukai_classical_1978}, respectively for $d=1$ and $d=2$. In $d=3$, the well-posedness for small data has been proven in~\cite{bardos_global_1985} and later extended to general initial data by Pfaffelmoser~\cite{pfaffelmoser_global_1992} and by Lions and Perthame~\cite{lions_propagation_1991}. In recent years improvements on the conditions of propagation of momenta and on the uniqueness condition have been addressed in~\cite{pallard_moment_2012, pallard_space_2014, desvillettes_polynomial_2015, loeper_uniqueness_2006, miot_uniqueness_2016, holding_uniqueness_2018}. The setting of this paper will be close to the setting of the paper by Lions and Perthame \cite{lions_propagation_1991}, that is the one that better adapt to the comparison with the quantum dynamics because of its Eulerian viewpoint.  

	The Vlasov equation~\eqref{eq:Vlasov} is supposed to emerge as a joint mean-field and semiclassical limit from the dynamics of $N$ interacting quantum particles. This has been first proven in \cite{narnhofer_vlasov_1981, spohn_vlasov_1981}, respectively for analytic and $C^2$ interaction potentials, using the BBGKY approach in the fermionic setting. The case of bosons interacting through a smooth pair potential has been studied in~\cite{graffi_mean-field_2003}, in the mean-field limit combined with a semiclassical limit, through the analysis of the dynamics of factored WKB states.  

	\subsubsection{Hartree and Hartree-Fock equations} It is well known that the many-body dynamics can be approximated in the mean-field limit by the Hartree equation
	\begin{equation}\label{eq:Hartree}
		i\hbar\,\dpt\op = [H,\op],
	\end{equation}
	an evolution equation for the density operator $\op = \op(t)$, a nonnegative bounded operator on the space $L^2(\R^d)$, with $\Tr{\op}=1$. In Equation~\eqref{eq:Hartree}, $\hbar = \frac{h}{2\pi}$ is the reduced Planck constant, and $H$ is the Hamiltonian
	\begin{equation}\label{eq:Hartree-Ham}
		H=-\tfrac{\hbar^2}{2}\,\Delta+K*\rho\,,
	\end{equation}
	where $\Delta$ is the Laplace operator, $K$ is the pair interaction potential, $\rho(x) = \r(x,x)$ the diagonal of the integral kernel of the trace class operator $\op$ and $K*\rho$ is identified with the operator of multiplication by the function $x\mapsto K*\rho(x)$.
	
	In the case of fermions, a more precise mean-field approximation for the many-body quantum dynamics is given by the Hartree-Fock equation
	\begin{equation}\label{eq:HF}
		i\hbar\,\dpt\op = [H_{\text{HF}},\op],
	\end{equation}
	with $H_{\text{HF}} = -\hbar^2\,\Delta+K*\rho-\sfX$, where $\sfX$ is the so called exchange term defined as the operator with integral kernel
	\begin{equation}
		\sfX(x,y) = K(x-y)\,\r(x,y).
	\end{equation}
	
	We recall that the interest in the mean-field regime is due to the fact that many systems of interest in quantum mechanics are usually made of a large number of particles, which typically ranges between $10^2$ and $10^{23}$, while the above equations only describe the behavior of one typical particle in the limit of infinitely many particles. The mathematical literature on this subject is rather extensive. See for example \cite{bardos_weak_2000, erdos_derivation_2001, bardos_derivation_2002, frohlich_mean-field_2009, rodnianski_quantum_2009, grillakis_second-order_2010, pickl_simple_2011, chen_rate_2011, kuz_rate_2015, golse_mean_2016, mitrouskas_bogoliubov_2019, golse_schrodinger_2017, golse_derivation_2018, golse_empirical_2019, chen_rate_2018} for the case of bosons and \cite{elgart_nonlinear_2004, frohlich_microscopic_2011, benedikter_mean-field_2014, benedikter_mean-field_2016, bach_kinetic_2016, petrat_new_2016, porta_mean_2017, petrat_hartree_2017, saffirio_mean-field_2018} for the case of fermions.

	\subsubsection{Semiclassical limit} The Hartree and Hartree-Fock equations are quantum models. It is therefore natural to investigate their semiclassical limit as $\hbar\to 0$. First results in this direction provide the convergence from the Hartree dynamics towards the Vlasov equation in abstract sense, without rate of convergence and in weak topologies, but including the case of singular interaction potentials, such as the Coulomb interaction (Cf. \cite{lions_sur_1993, markowich_classical_1993, gasser_semiclassical_1998, figalli_semiclassical_2012}). Explicit bounds on the convergence rate in stronger topologies have been established in~\cite{pezzotti_mean-field_2009, athanassoulis_strong_2011, amour_classical_2013, amour_semiclassical_2013, benedikter_hartree_2016, golse_schrodinger_2017}. They all deal with smooth interaction potentials. More recently, the case of singular interactions, including the Coulomb potential, has been considered in~\cite{lafleche_propagation_2019, lafleche_global_2019}, where the convergence from the Hartree to the Vlasov equation is achieved in weak topology using quantum Wasserstein-Monge-Kantorovich distance, providing explicit bounds on the convergence rate. In strong topology (trace norm and Hilbert-Schmidt norm) explicit bounds on the convergence from the Hartree dynamics to the Vlasov equation with inverse power law of the form $K(x)=|x|^{-a}$ with $a\in(0,1/2)$ have been proven in~\cite{saffirio_semiclassical_2019}, and a proof that includes the Coulomb potential has been provided in \cite{saffirio_hartree_2020} but under restrictive assumptions on the initial data.
	
	\subsubsection{Key novelties} The aim of this paper is to establish a strong convergence result from both the Hartree and the Hartree-Fock equations towards the Vlasov dynamics for a large class of regular initial states. Our results apply to a wide class of initial data smooth in the semiclassical limit, thus giving a thorough answer to the question of strong convergence of the Hartree equation to the Vlasov system for singular interactions, at least in the case of mixed states converging to smooth solutions of the Vlasov equation.
	 
	With respect to the results present in literature, there are several novelties: apart from the large class of initial data for whose evolution we can establish strong convergence with explicit rate towards the Vlasov equation, our techniques allow to consider inverse power law potentials that are more singular than Coulomb and our methods easily extend to very general non radially symmetric potentials. Moreover, the topology we consider is not only the one induced by the trace or Hilbert-Schmidt norm (as it is for instance in \cite{saffirio_semiclassical_2019}), but the ones induced by semiclassical Schatten norms $\L^p$, for all $p\in[1,\infty)$. These are obtained by a refinement on the estimate for the $\L^p$ norms of the commutator $[K(\cdot-z),\op]$ and a careful analysis of the propagation in time of initial conditions leading to bound the quantity 
	\begin{equation*}
		\Nrm{{\rm diag}\n{\com{\frac{x}{i\hbar},\op}}}{L^p(\R^d)}
	\end{equation*}
	uniformly in $\hbar$, for $p>3$. This requires using kinetic interpolation inequalities as in~\cite{lafleche_propagation_2019} and an extension of the Calder\'{o}n-Vaillancourt theorem for Weyl quantization.
	
	Finally, we extend our results to the Hartree-Fock equation~\eqref{eq:HF}, thus proving the strong convergence of the Hartree-Fock dynamics to the Vlasov equation. As a corollary, we get explicit estimates on the difference between the Hartree and Hartree-Fock dynamics in Schatten norms, thus giving a rigorous proof of the fact that the exchange term in the Hartree-Fock dynamics is subleading with respect to the direct term also when the interaction potential is singular (this was proved in~\cite{benedikter_mean-field_2014} in the case of smooth potentials).
	
	\subsubsection{Open problems} Our work gives good answers to the problem of the semiclassical limit from the Hartree and Hartree-Fock equations to the Vlasov equation with general singular potentials in the context of mixed states. However, a certain number of questions related to the derivation of the Vlasov equation from quantum dynamics remain open:
	\begin{itemize}
	\item[i)] The mean-field limit from a system of $N$ quantum particles interacting through a singular potential in the case of mixed states. Up to our knowledge, this problem is open in both the bosonic and the fermionic setting.
	\item[ii)] In the bosonic setting, where $N$ and $\hbar$ are independent parameters, the joint mean-field and semiclassical limit is an open problem when the interaction is singular. Namely, no uniform convergence in the semiclassical parameter $\hbar$ has been proven so far.
	\item[iii)] We believe our results give optimal bounds on the convergence rate in trace norm $\L^1$. 
	The question whether the bounds we obtain for the semiclassical Hilbert-Schmidt norm $\L^2$, and thus the $L^2$ convergence for the associated Wigner functions, are optimal is open. The exact same question can be asked for the bounds in Theorem~\ref{thm:CV_Hartree-Fock} about the convergence of the Hartree-Fock equation to the Vlasov equation. In both cases, we believe the bounds we get are not optimal and there is room for improvements. 
	\end{itemize} 
	
	\subsubsection{Structure of the paper} The paper is structured as follows.
	\begin{itemize}
	\item In Section~\ref{subsec:main_results} we state our main results, and we include additional comments and generalizations in Section~\ref{subsec:discussion}.
	\item In Section~\ref{sec:strategy} we explain our strategy. We introduce a semiclassical notion of regularity (Section~\ref{subsec:gradients}) and then explain our method to get the semiclassical limit by making a comparison with the classical Vlasov dynamics, finding a new stability estimate for the Vlasov system  (Section~\ref{subsec:classical_case}).
	\item Section~\ref{sec:regularity} contains the main results concerning the regularity of the Weyl transform of a solution to the Vlasov equation, that will be crucial to prove the theorems stated in Section~\ref{subsec:main_results}.
	\item Section~\ref{sec:proof_thm_1_2} is devoted to prove Theorem~\ref{thm:CV_Hartree} and Theorem~\ref{thm:Lp_conv}, dealing with the semiclassical limit from the Hartree equation, under the assumption that the regularity proven in Section~\ref{sec:regularity} holds.
	\item In Section~\ref{sec:proof_thm_3} we present the proof of Theorem~\ref{thm:CV_Hartree-Fock} about the semiclassical limit from the Hartree-Fock equation, based on additional estimates on the exchange term.
	\item Two appendices on the propagation of regularity for the Vlasov equation and on basic operator identities complement the paper.  
	\end{itemize}

\subsection{Main results}\label{subsec:main_results}
	 
	\subsubsection{Operators and function spaces} We denote by $L^p = L^p(\R^d)$ the classical Lebesgue spaces, by $L^{p,q} = L^{p,q}(\R^d)$ the classical Lorentz spaces for $(p,q)\in[1,\infty]^2$ (see for example \cite{bergh_interpolation_1976}). In particular $L^{p,p}=L^p$. We define the space of positive and trace class operators by 
	\begin{equation*}
		\L^1_+ := \{\op\in\L(L^2), \op = \op^* \geq 0, \Tr{\op} < \infty\},
	\end{equation*}
	where $\L(L^2)$ denotes the space of linear operators on $L^2$, and the quantum Lebesgue norms (or semiclassical Schatten norms) $\L^p$ by
	\begin{align*}
		\|\op\|_{\L^p} &:= h^{-d/p'} \|\op\|_p = h^{-d/p'} \(\Tr{|\op|^p}\)^\frac{1}{p}.
	\end{align*}
	where $\|\op\|_p$ denotes the usual Schatten norm (i.e. without dependency in $h$) and $p' = \frac{p}{p-1}$ denotes the conjugate exponent.
	
	In this work, we consider the semiclassical limit to solutions of the Vlasov equation with regular data in the sense that the initial condition will be bounded in some weighted Sobolev space. Therefore, we will use the following notation for smooth polynomial weight functions
	\begin{equation*}
		\weight{y} := \sqrt{1+\n{y}^2},
	\end{equation*}
	and for $\sigma\in\N$, we define the spaces $W^{\sigma,p}_k(\R^{2d})$ as the spaces equipped with the norm
	\begin{equation*}
		\Nrm{f}{W^{\sigma,p}_k(\R^{2d})} := \Nrm{\weight{z}^k f(z)}{L^p(\R^{2d})} + \Nrm{\weight{z}^k \nabla_z^\sigma f(z)}{L^p(\R^{2d})},
	\end{equation*}
	where $z = (x,\xi)$ so that $\weight{z}^2 = 1+\n{x}^2+\n{\xi}^2$. We also use the standard notations in the cases $\sigma=0$ or $p=2$
	\begin{align*}
		L^p_k(\R^{2d}) &:= W^{0,p}_k(\R^{2d}), &	H^\sigma_k(\R^{2d}) &:= W^{\sigma,2}_k(\R^{2d}).
	\end{align*}
	When $\R^{2d}$ is replaced by $\R^d$, as for Lebesgue spaces, we will use shortcut notations and write only for example $H^n$ instead of $H^n(\R^d)$, and $C^\infty_c$ for the space of smooth compactly supported function on $\R^d$.
	
	\subsubsection{Wigner and Weyl transforms} We can associate to each density operator $\op$ a function of the phase space called the Wigner transform and which is defined (for $h=1$) by
	\begin{align*}
		\w(\op)(x,\xi) &:= \intd e^{-2i\pi y\cdot\xi}\r\!\left(x+\frac{y}{2},x-\frac{y}{2}\right)\d y = \F{\tilde{\r}_x}(\xi),
	\end{align*}
	where $\tilde{\r}_x(y) = \r(x+y/2,x-y/2)$ and we used the following convention for the Fourier transform
	\begin{align*}
		\F{u}(\xi) := \intd e^{-2i\pi x\cdot\xi}u(x)\dd x.
	\end{align*}
	This function of the phase space is however not a probability distribution since it is generally not non-negative. We refer to \cite{lions_sur_1993} for more properties of the Wigner transform. Given $\op$, we will write its semiclassical Wigner transform
	\begin{equation*}
		\wh(\op)(x,\xi) := \frac{1}{h^d}\w(\op)\!\(x,\frac{\xi}{h}\).
	\end{equation*}
	Conversely, to each function of the phase space, we can associate an operator through the Weyl transformation, which is the inverse of the Wigner transform. It is defined as the operator such that for any $\varphi\in C^\infty_c$
	\begin{equation*}
		\Weylh{g}\varphi := \iintd g\!\(\tfrac{x+y}{2},\xi\)e^{-i(y-x)\cdot\xi/\hbar}\varphi(y)\dd y\dd\xi.
	\end{equation*}

	\subsubsection{Theorems} Our main result is the following.
	\begin{thm}\label{thm:CV_Hartree}
		Let $d\in\{2,3\}$, $a\in \lt(\max\{\tfrac{d}{2}-2,-1\},d-2\rt]$ and $K$ be given by one of the following expressions
		\begin{equation}\label{eq:condition_K}
			K(x) = \frac{\pm 1}{\n{x}^a} \ \text{ or }\  K(x) = \pm\ln(\n{x}).
		\end{equation}
		In the second case we set $a:=0$. Let $f\geq 0$ be a solution of the Vlasov equation~\eqref{eq:Vlasov} and $\op\geq 0$ be a solution of the Hartree equation~\eqref{eq:Hartree} with respective initial conditions
		\begin{align}\label{eq:initial_condition_f}
			f^\init&\in  W^{\sigma+1,\infty}_m(\R^{2d})\cap H^{\sigma+1}_\sigma(\R^{2d})
			\\\label{eq:initial_condition_op}
			\op^\init&\in \L^1,
		\end{align}
		where $(m,\sigma)\in(4\N)\times(2\N)$ verify $m>d$ and $\sigma>m+\frac{d}{\fb-1}$ with $\fb = \frac{d}{a+1}$. If $a\leq 0$, we also require $\tr((\n{x}^2-\hbar^2\Delta)\,\op^\init)$ to be bounded. Then, there exists $\lambda_f(t)\in C^0(\R_+,\R_+)$ and $C_f(t)\in C^0(\R_+,\R_+)$ depending only on $d$, $a$ and on the initial condition of the solution of the Vlasov equation such that
		\begin{equation}\label{eq:CV_Hartree}
			\Tr{\n{\op - \op_f}} \leq \(\Tr{\n{\op^\init - \op_f^\init}} + C_f(t)\,\hbar \) e^{\lambda_f(t)}
		\end{equation}
		where $\op_f = \Weylh{f}$ and $\op_f^\init = \op_{f^\init}$. An upper bound for the functions $\lambda_f$ and $C_f$ is given by
		\begin{align*}
			\lambda_f(t) &\leq C_{d,a} \int_0^t \Nrm{\Dv f}{W^{n_0,\infty}(\R^{2d})\cap H^\sigma_\sigma(\R^{2d})} \d s
			\\
			C_f(t) &\leq C_{d,a} \int_0^t \Nrm{\rho_f(s)}{L^1\cap H^\nu} \Nrm{\nabla_\xi^2 f(s)}{H^{m}_{m}(\R^{2d})} e^{-\lambda_f(s)}\d s,
		\end{align*}
		which remain bounded at any time $t\geq 0$, and where $\nu = \(\tfrac{m}{2}+a+2-d\)_+$ and $n_0 = \floor{d/2} + 1$.
	\end{thm}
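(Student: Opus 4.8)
The plan is to compare the Hartree solution $\op$ directly with $\op_f:=\Weylh{f}$ and to close a Grönwall inequality for $\Nrm{\op-\op_f}{\L^1}$. Two structural facts drive the argument. First, since the Weyl and Wigner transforms are inverse to each other, $\op_f$ carries exactly the Vlasov spatial density, $\rho_{\op_f}=\rho_f$. Second, $\op_f$ solves the Hartree equation up to a controlled semiclassical error: with $H_f:=-\tfrac{\hbar^2}{2}\Delta+K*\rho_f$ one has
\begin{equation*}
	i\hbar\,\dpt\op_f=\com{H_f,\op_f}+i\hbar\,\mathcal R_\hbar,\qquad\mathcal R_\hbar:=\dpt\op_f-\tfrac1{i\hbar}\com{H_f,\op_f},
\end{equation*}
where the kinetic part of the commutator is reproduced \emph{exactly} by the transport term of~\eqref{eq:Vlasov} (the symbol $\n\xi^2/2$ being quadratic), so that $\mathcal R_\hbar$ involves the potential only. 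Setting $D:=\op-\op_f$, $H_\op:=-\tfrac{\hbar^2}{2}\Delta+K*\rho_\op$, and using $\rho_\op-\rho_f=\rho_\op-\rho_{\op_f}=\rho_D$, one gets
\begin{equation*}
	i\hbar\,\dpt D=\com{H_\op,D}+\com{K*\rho_D,\op_f}-i\hbar\,\mathcal R_\hbar.
\end{equation*}
Since $D=D^*$ and $\Sign{D}$ commutes with $D$, the commutator with the self-adjoint Hamiltonian $H_\op$ drops out of the time derivative of the trace norm, $\Tr{\Sign{D}\,\tfrac1{i\hbar}\com{H_\op,D}}=0$; hence, differentiating (which I would justify by a Duhamel/regularization argument, the extra moment assumption for $a\le 0$ supplying the a priori control needed when $K$ is unbounded),
\begin{equation*}
	\dt\Nrm{D}{\L^1}\le\Nrm{\tfrac1{i\hbar}\com{K*\rho_D,\op_f}}{\L^1}+\Nrm{\mathcal R_\hbar}{\L^1}.
\end{equation*}

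For the interaction term I would write $K*\rho_D(x)=\int_{\R^d}K(x-z)\,\rho_D(z)\dd z$, so that $\tfrac1{i\hbar}\com{K*\rho_D,\op_f}=\int_{\R^d}\rho_D(z)\,\tfrac1{i\hbar}\com{K(\cdot-z),\op_f}\dd z$ and therefore
\begin{equation*}
	\Nrm{\tfrac1{i\hbar}\com{K*\rho_D,\op_f}}{\L^1}\le\Nrm{\rho_D}{L^1}\,\sup_{z}\Nrm{\tfrac1{i\hbar}\com{K(\cdot-z),\op_f}}{\L^1}.
\end{equation*}
The supremum is estimated uniformly in $z$ through the ``gradient'' identity $\tfrac1{i\hbar}\com{x,\op_f}=\Weylh{\Dv f}$ together with $K(x-z)-K(y-z)=\int_0^1(x-y)\cdot\nabla K\(sx+(1-s)y-z\)\dd s$, which identifies $\tfrac1{i\hbar}\com{K(\cdot-z),\op_f}$ with $\Weylh{\nabla K(\cdot-z)\cdot\Dv f}$ up to a term of order $\hbar$; the trace norm of this is controlled by the regularity statements of Section~\ref{sec:regularity} — the extension of the Calderón--Vaillancourt theorem for the Weyl quantization and the uniform-in-$\hbar$ bound on $\Nrm{{\rm diag}\n{\com{x/i\hbar,\op}}}{L^p}$ for $p>d$ — yielding $\sup_z\Nrm{\tfrac1{i\hbar}\com{K(\cdot-z),\op_f}}{\L^1}\le\lambda_f'(t)$ with $\lambda_f'(t)\lesssim_{d,a}\Nrm{\Dv f}{W^{n_0,\infty}\cap H^\sigma_\sigma}$ (constants like $\Nrm{\nabla K}{L^{\fb,\infty}}$, depending only on $d$ and $a$, being absorbed into $C_{d,a}$). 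Since moreover $\Nrm{\rho_D}{L^1}\le\Nrm{\rho_{\n D}}{L^1}=\Tr{\n D}=\Nrm{D}{\L^1}$, the interaction term is bounded by $\lambda_f'(t)\Nrm{D}{\L^1}$.

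The consistency error is handled in the same spirit, writing $\mathcal R_\hbar=\int_{\R^d}\rho_f(z)\(\tfrac1{i\hbar}\com{K(\cdot-z),\op_f}-\Weylh{\{K(\cdot-z),f\}}\)\dd z$ and using one more order in the gradient identity, $\com{x,\Weylh{\Dv f}}=i\hbar\,\Weylh{\Dv^2 f}$, together again with the Weyl trace-norm bounds of Section~\ref{sec:regularity} (distributing the singular $x$-derivatives of $K$ onto $\rho_f$), which gives $\Nrm{\mathcal R_\hbar}{\L^1}\le\hbar\,g_f(t)$ with $g_f(s)=C_{d,a}\Nrm{\rho_f(s)}{L^1\cap H^\nu}\Nrm{\Dv^2 f(s)}{H^m_m}$. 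Combining the three estimates yields $\dt\Nrm{D}{\L^1}\le\lambda_f'(t)\Nrm{D}{\L^1}+\hbar\,g_f(t)$, and Grönwall's lemma gives
\begin{equation*}
	\Nrm{D(t)}{\L^1}\le\(\Nrm{D^\init}{\L^1}+\hbar\int_0^t g_f(s)\,e^{-\lambda_f(s)}\dd s\)e^{\lambda_f(t)},\qquad\lambda_f(t):=\int_0^t\lambda_f'(s)\dd s,
\end{equation*}
which is~\eqref{eq:CV_Hartree} with $C_f(t)=\int_0^t g_f(s)e^{-\lambda_f(s)}\dd s$ and $\op_f^\init=\op_{f^\init}=\op_f(0)$. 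Finiteness of $\lambda_f$ and $C_f$ at every time follows from the propagation along the Vlasov flow of the weighted Sobolev regularity $W^{\sigma+1,\infty}_m\cap H^{\sigma+1}_\sigma$ of $f^\init$, proved in the appendix, where the conditions $m>d$ and $\sigma>m+\tfrac{d}{\fb-1}$ ensure that the moments and derivatives of $\rho_f$ and of $f$ remain bounded. The step I expect to be hardest is the uniform-in-$z$ estimate of $\Nrm{\tfrac1{i\hbar}\com{K(\cdot-z),\op_f}}{\L^1}$ that produces only the plain $L^1$ norm of $\rho_D$ (and not a stronger norm): this is precisely what keeps the Grönwall inequality linear and forces the error to vanish as $\hbar\to0$, and it is where the refined $\L^p$ commutator estimate and the Calderón--Vaillancourt-type theorem for the Weyl quantization developed in Section~\ref{sec:regularity} are indispensable.
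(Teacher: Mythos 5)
Your proposal matches the paper's proof in overall structure: reduce the bound to a Grönwall inequality for $\Tr\n{\op-\op_f}$ by splitting into the interaction commutator $\com{K*\rho_D,\op_f}$ (controlled by $\Nrm{\rho_D}{L^1}\,\sup_z\Tr\n{\com{K(\cdot-z),\op_f}}$, which is the paper's Theorem~\ref{thm:key_ineq} combined with Proposition~\ref{prop:diag_Dhv_Weyl}) and a consistency error of order $\hbar^2$ (the paper's operator $B_t$, estimated in Proposition~\ref{prop:B-term}), then close with the propagation of weighted Sobolev regularity along the Vlasov flow from Proposition~\ref{prop:regu_Vlasov}. The only methodological variant: the paper applies Duhamel's formula after conjugating by the Hartree propagator $\cU_{t,s}$, whereas you differentiate $\Tr\n{D}$ directly and cancel the $\com{H_\op,D}$ term via $\Tr{\Sign{D}\com{H_\op,D}}=0$; this is equivalent in substance, and you correctly flag that it requires a regularization step which the paper's Duhamel route avoids.

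The one place your sketch is off is the route to the uniform-in-$z$ commutator bound. You propose using $K(x-z)-K(y-z)=\int_0^1(x-y)\cdot\nabla K(sx+(1-s)y-z)\,\d s$ and identifying $\tfrac1{i\hbar}\com{K(\cdot-z),\op_f}$ with something like $\Weylh{\nabla K(\cdot-z)\cdot\Dv f}$. But $\nabla K\in L^{\fb,\infty}$ is singular, so multiplication by $\nabla K$ at a point on the segment is an unbounded operator on $L^2$ and this naive splitting does not produce a trace-class bound. The paper's Theorem~\ref{thm:key_ineq} circumvents this by a Laplace-transform (Gaussian) decomposition of $K$: one writes $K(x)$ as an integral over $t>0$ of $t^{a/2-1}e^{-\pi\n{x}^2 t}$ and then
\begin{equation*}
	e^{-\pi\n{x}^2 t}-e^{-\pi\n{y}^2 t}=-\pi t\int_0^1 (x-y)\cdot(x+y)\,e^{-\pi\theta\n{x}^2 t}e^{-\pi(1-\theta)\n{y}^2 t}\,\d\theta\,,
\end{equation*}
so that each multiplication operator hitting $\Dhv\op_f$ is a Gaussian (in every $L^p$), the full singularity of $K$ being carried by the one-dimensional $t$-integral, which is split at an optimized threshold $R$ to produce the interpolated $L^{\fb'\pm\eps}$ norms of $\Diag{\n{\Dhv\op_f}}$. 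That decomposition is the technical heart of the estimate; it belongs to Section~\ref{sec:proof_thm_1_2} rather than Section~\ref{sec:regularity}, which you conflate with it. Since you cite the commutator estimate as a black box, the overall plan is sound, but the specific mechanism you describe would not survive the singularity of $K$.
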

	
	\begin{remark}
		Condition~\eqref{eq:condition_K} includes in particular the Coulomb or Newton potential in dimensions $d=3$ and $d=2$. In these cases, the conditions of regularity~\eqref{eq:initial_condition_f} of the initial data of the Vlasov  equation become $f^\init\in W^{13,\infty}_4(\R^{2d})\cap H^{13}_{12}(\R^{2d})$ when $d=3$ and $a=1$, and $f^\init\in W^{9,\infty}_4(\R^{2d})\cap H^{9}_{8}(\R^{2d})$ when $d=2$ and $a=0$. These conditions are of course not optimal: for example the fact that we ask for $m/2$ and $\sigma$ to be even numbers is mostly to simplify some computations.
	\end{remark}
	
	\begin{remark}
		To see more explicitly that Inequality~\eqref{eq:CV_Hartree} gives a good semiclassical approximation estimate, one can take $\op^\init$ and $\op_f^\init$ such that $\Tr{\n{\op^\init - \op_f^\init}} \leq C \hbar$ and fix some $T>0$, which yields the existence of a constant $C_T>0$ such that for any $t\in[0,T]$
		\begin{equation}\label{eq:CV_Hartree_simplified}
			\Tr{\n{\op - \op_f}} \lesssim C_T\,\hbar.
		\end{equation}
		The theorem also implies the convergence of the spatial density of particles $\rho \to \rho_f$ in $L^1$. Indeed, by duality we have
		\begin{equation}\label{eq:L1-trace}
			\Nrm{\rho-\rho_f}{L^1}=\sup_{\substack{O\in L^\infty(\R^d)\\ \Nrm{O}{L^\infty}\leq 1}}\left|\int O(x)\(\rho(x)-\rho_f(x)\)\d x\right|\leq \Tr{\n{\op-\op_f}}\,,
		\end{equation}
		since every bounded function $x\mapsto O(x)$ also defines a multiplication operator with operator norm $\Nrm{O}{L^\infty}$.
	\end{remark}
	
	From the bound in Theorem \ref{thm:CV_Hartree} we also obtain estimates in other semiclassical Lebesgue spaces.
	\begin{thm}\label{thm:Lp_conv}
		Take the same assumptions and notations as in Theorem~\ref{thm:CV_Hartree}, define $\fb=\frac{d}{a+1}$ and assume moreover that
		\begin{equation*}
			f^\init\in W^{\sigma+1,\infty}_\sigma(\R^{2d})\cap H^{\sigma+1}_\sigma(\R^{2d})
		\end{equation*}
		and that $\sigma > n_0 + \frac{d}{\fb}$. Then for any $p\in[1,\fb)$ it holds
		\begin{align}\label{eq:Lp_conv}
			\Nrm{\op - \op_f}{\L^p} \leq \Nrm{\op^\init - \op_f^\init}{\L^p} + \(\Tr{\n{\op^\init - \op_f^\init}} + c(t) \hbar\) e^{\lambda(t)}\,,
		\end{align}
		where $c$ and $\lambda$ are continuous functions on $\R_+$ depending on $d$, $a$, $p$ and $f^\init$. For any $q\in[\fb,\infty)$, assuming also that $\op^\init\in\L^\infty$, this leads to the following estimate
		\begin{equation}\label{eq:Lp_conv_2}
			\Nrm{\op-\op_f}{\L^q} \leq c_2(t) \(\Nrm{\op^\init-\op_f^\init}{\L^p}^\frac{p}{q} + \Tr{\n{\op^\init-\op_f^\init}}^\frac{p}{q} + \hbar^\frac{p}{q}\) e^{\frac{p}{q} \lambda(t)},
		\end{equation}
		where $\op_f = \Weylh{f}$, $\op_f^\init = \op_{f^\init}$ and $c_2\in C^0(\R_+,\R_+)$ can also be computed explicitly and depends on the initial conditions.
	\end{thm}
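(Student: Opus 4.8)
The plan is to deduce the $\L^p$ estimates from the trace-norm estimate of Theorem~\ref{thm:CV_Hartree} by a combination of \emph{interpolation} between Schatten exponents and a separate estimate of the $\L^p$ norm of the propagated difference for $p$ below the critical exponent $\fb=\frac{d}{a+1}$. The first observation is that both $\op$ and $\op_f$ satisfy a linear-in-commutator transport structure, so the difference $\op_t - \op_{f,t}$ is transported by (a perturbation of) the free quantum flow, which is an isometry on every Schatten class $\L^p$; this is precisely what will let the initial discrepancy $\Nrm{\op^\init-\op_f^\init}{\L^p}$ appear unchanged on the right-hand side of~\eqref{eq:Lp_conv}. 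The extra term $\bigl(\Tr|\op^\init-\op_f^\init| + c(t)\hbar\bigr)e^{\lambda(t)}$ then comes from the source terms: the difference of the mean-field forces $E-E_f$ acting on $\op_f$, and the Weyl-quantization error $i\hbar\,\dpt\op_f - [H,\op_f]$, which is $O(\hbar)$ in $\L^p$ by the regularity of $\Weylh{f}$ established in Section~\ref{sec:regularity}. The Gr\"onwall argument controlling $E-E_f$ in terms of $\Tr|\op-\op_f|$ is exactly the one underlying Theorem~\ref{thm:CV_Hartree}, which is why the trace norm (rather than $\L^p$) of the difference reappears in~\eqref{eq:Lp_conv}.

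First I would set up the Duhamel formula for $\op_t-\op_{f,t}$ along the Hartree flow $\cU_t$ generated by $H=H(t)$: writing $\op_t - \op_{f,t} = \cU_t(\op^\init-\op_f^\init)\cU_t^* + \int_0^t \cU_{t,s}\bigl(\mathcal{R}(s)\bigr)\cU_{t,s}^*\,\d s$ with $\mathcal{R}(s)$ collecting the force-difference term $\frac{1}{i\hbar}[(K*(\rho-\rho_f))(s),\op_{f,s}]$ and the consistency error $\mathcal{E}_\hbar(s) := \frac{1}{i\hbar}\bigl(i\hbar\dpt\op_{f,s}-[H(s),\op_{f,s}]\bigr)$. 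Taking $\L^p$ norms and using that $\cU_{t,s}$ is a Schatten isometry gives
\begin{align*}
	\Nrm{\op_t-\op_{f,t}}{\L^p} \leq \Nrm{\op^\init-\op_f^\init}{\L^p} + \int_0^t \Nrm{\mathcal{R}(s)}{\L^p}\,\d s.
\end{align*}
The consistency error contributes $\int_0^t \Nrm{\mathcal{E}_\hbar(s)}{\L^p}\,\d s \leq c(t)\hbar$ by the semiclassical regularity bounds on $\Weylh{f}$ and its derivatives (this is where the hypotheses $f^\init\in W^{\sigma+1,\infty}_\sigma\cap H^{\sigma+1}_\sigma$ and $\sigma>n_0+\frac{d}{\fb}$ enter, to guarantee the relevant $\L^p$ bounds on the Weyl transform of $\Dv f$, $\Dvv{f}$, etc., via the Calder\'on–Vaillancourt-type estimates). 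The force-difference term is estimated by the refined commutator bound announced in the introduction, $\Nrm{[K*(\rho-\rho_f),\op_f]}{\L^p}\lesssim \Nrm{\rho-\rho_f}{L^1\cap\ldots}\,\Nrm{\Dv f(s)}{\ldots}$, and then $\Nrm{\rho-\rho_f}{L^1}\leq \Tr|\op-\op_f|$ as in~\eqref{eq:L1-trace}; inserting the bound~\eqref{eq:CV_Hartree} for $\Tr|\op-\op_f|$ and absorbing the time integral into $e^{\lambda(t)}$ yields~\eqref{eq:Lp_conv}. The restriction $p<\fb$ is needed so the commutator estimate and the Weyl-regularity bounds are available for that exponent (the critical exponent $\fb$ is where the kinetic interpolation inequalities governing $\rho_f$ degenerate).

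For the range $q\in[\fb,\infty)$ I would interpolate rather than re-run the Duhamel argument. Since $\op_t-\op_{f,t}$ is a difference of two operators that are uniformly bounded in $\L^\infty$ (here the added hypothesis $\op^\init\in\L^\infty$, together with the uniform $\L^\infty$ bound on $\op_f=\Weylh{f}$ coming from $f\in L^\infty$, and the fact that the Hartree flow preserves $\L^\infty$), we have $\Nrm{\op_t-\op_{f,t}}{\L^\infty}\leq c_2(t)$ uniformly in $\hbar$. Then the Schatten interpolation inequality $\Nrm{A}{\L^q}\leq \Nrm{A}{\L^p}^{p/q}\Nrm{A}{\L^\infty}^{1-p/q}$ (valid for the $h$-scaled norms with the stated normalization $\Nrm{\cdot}{\L^p}=h^{-d/p'}\Nrm{\cdot}{p}$, since the $h$-powers combine correctly: $-d/q' = -\frac{d}{p'}\cdot\frac{p}{q}$) gives
\begin{align*}
	\Nrm{\op-\op_f}{\L^q} \leq \Nrm{\op-\op_f}{\L^p}^{p/q}\,\Nrm{\op-\op_f}{\L^\infty}^{1-p/q} \leq c_2(t)\,\Nrm{\op-\op_f}{\L^p}^{p/q},
\end{align*}
and substituting~\eqref{eq:Lp_conv} together with the elementary inequality $(x+y+z)^{p/q}\leq x^{p/q}+y^{p/q}+z^{p/q}$ for $p/q\leq 1$ produces~\eqref{eq:Lp_conv_2}. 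I would apply this with $p$ chosen as any fixed exponent in $[1,\fb)$, say $p=1$, to keep the constants clean.

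The main obstacle is the $p<\fb$ Duhamel estimate, specifically controlling $\Nrm{\mathcal{R}(s)}{\L^p}$: one must show the refined $\L^p$ commutator bound for the singular kernel $K$ (more singular than Coulomb when $a>d-3$) and the $\L^p$ bounds on the consistency error $\mathcal{E}_\hbar$, both uniformly in $\hbar$. These rest on the semiclassical regularity theory of $\Weylh{f}$ from Section~\ref{sec:regularity} and on the extension of Calder\'on–Vaillancourt to Weyl quantization; checking that the Sobolev exponents $\sigma,m$ in the hypotheses are large enough to feed those estimates (in particular the appearance of $n_0=\floor{d/2}+1$ and the threshold $\sigma>n_0+\frac{d}{\fb}$) is the delicate bookkeeping. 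The interpolation step for $q\geq\fb$, by contrast, is essentially soft once the uniform $\L^\infty$ bound is in place.
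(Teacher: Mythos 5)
Your proposal matches the paper's proof of Theorem~\ref{thm:Lp_conv} in every essential respect: the Duhamel decomposition along the Hartree flow $\cU_{t,s}$ (a Schatten isometry), the $O(\hbar)$ bound on the Taylor-remainder/consistency-error operator $B_t$ in $\L^p$ via the semiclassical regularity of $\Weylh{f}$ (paper's Proposition~\ref{prop:B-term}), the refined $\L^p$ commutator bound for $[K(\cdot-z),\op_f]$ (Proposition~\ref{prop:commutator_ineq_Lp}), feeding the trace-norm estimate of Theorem~\ref{thm:CV_Hartree} into the Gr\"onwall loop, and finally interpolating between $\L^p$ and a uniform $\L^\infty$ bound to reach $q\geq\fb$. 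Two minor points: your parenthetical bookkeeping of the $h$-powers in the Schatten interpolation is garbled (you wrote $-d/q' = -\tfrac{d}{p'}\cdot\tfrac{p}{q}$, but one must also account for the factor $h^{-d(1-p/q)}$ coming from $\Nrm{\cdot}{\L^\infty} = h^{-d}\Nrm{\cdot}{\infty}$; once that is included the interpolation inequality you wrote is indeed correct), and your suggestion to fix $p=1$ in the last step would prove~\eqref{eq:Lp_conv_2} only for $p=1$, whereas the theorem is quantified over $p\in[1,\fb)$ and the paper deliberately takes $p=\fb-\eps$ to obtain the sharpest rate $\hbar^{(\fb-\eps)/q}$ advertised in the remarks; since your argument works verbatim for any such $p$, this is a presentational rather than mathematical gap.
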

		
	\begin{remark}
		In particular, if we assume $\op^\init=\op_f^\init$, or more generally
		\begin{equation*}
			\Tr{\n{\op^\init-\op_f^\init}}\leq C\hbar \quad \text{and} \quad \Nrm{\op^\init-\op_f^\init}{\L^2}\leq C\hbar\,,
		\end{equation*}
		then we have a rate of the form $\hbar^{\fb/2-\eps}$ with $\eps>0$ arbitrarily small. For the Coulomb potential in dimension $d=3$, the estimate reads 
		\begin{equation*}
			\Nrm{f_{\op}-f}{L^2(\R^{2d})} = \Nrm{\op-\op_f}{\L^2}\leq C_T\, \hbar^{3/4-\eps},
		\end{equation*}
		where $f_{\op} = \wh(\op)$ is the Wigner transform of $\op$ and for any $t\in[0,T]$ for some fixed $T>0$. Notice that Theorem~\ref{thm:CV_Hartree} does not imply convergence of the operators, but is only a quantitative estimate, where both $\op$ and $\Weylh{f}$ depend on $\hbar$. As operators, they both for instance converge to $0$ in operator norm since by hypothesis $\Nrm{\op}{\infty} \sim C\,h^d$. On the contrary, the above equation is both a quantitative estimate and a convergence result since $f$ is a fixed element which does not depend on $\hbar$. Thus, it implies the convergence of $f_{\op}$ to $f$ in $L^\infty_\loc(\R_+,L^2(\R^{2d}))$.
		
		With the same assumptions, in the case $d=2$  the Coulomb kernel is of the form $K(x) = C\,\ln(\n{x})$, and $\fb = 2$, implying that Inequality~\eqref{eq:Lp_conv} holds for any $p\in[1,2)$ and that we almost get the conjectured optimal rate of convergence for $p=2$
		\begin{equation*}
			\Nrm{f_{\op}-f}{L^2(\R^{2d})} \leq C_T\, \hbar^{1-\eps}.
		\end{equation*}
	\end{remark}

	Our third result concerns the Hartree-Fock equation. We write both cases $p=1$ and $p>1$ in one theorem in this case. 
	
	\begin{thm}\label{thm:CV_Hartree-Fock}
		Let $\op$ be a solution of the Hartree-Fock~equation~\eqref{eq:HF} and $f$ be a solution of the Vlasov equation~\eqref{eq:Vlasov} satisfying the same initial conditions as in Theorem~\ref{thm:CV_Hartree}, and as in Theorem~\ref{thm:Lp_conv} if $p>1$. If $a> 0$, we assume additionally that the solution has finite kinetic energy, i.e.
		\begin{align*}
			-\Tr{\hbar^2\Delta\op^\init}
		\end{align*}
		is bounded uniformly with respect to $\hbar$. Then, for any $p\in[1,\fb)$, there exist functions $c\in C^0(\R_+,\R_+)$ and $\lambda\in C^0(\R_+,\R_+)$ depending on the $d$, $a$, $p$ and $f^\init$ such that
		\begin{equation*}
			\Nrm{\op - \op_f}{\L^p} \leq \Nrm{\op^\init - \op_f^\init}{\L^p} + \(\Tr{\n{\op^\init - \op_f^\init}} + c(t)\,\hbar^{\min\{1,\tilde{s}-1\}}\) e^{\lambda(t)}\,,
		\end{equation*}
		where $\op_f = \Weylh{f}$, $\op_f^\init = \op_{f^\init}$, $\tilde{s} = d- a_+ - d\(\frac{1}{2} - \frac{1}{p}\)_+$. For $q\in[\fb,\infty)$, assuming again also that $\op^\init\in\L^\infty$, we can still get the following estimate
		\begin{equation*}
			\Nrm{\op - \op_f}{\L^q} \leq c_2(t) \(\Nrm{\op^\init - \op_f^\init}{\L^p}^\frac{p}{q} + \Tr{\n{\op^\init-\op_f^\init}}^\frac{p}{q} + \hbar^{\frac{p}{q}\,\min\{1,\tilde{s}-1\}}\) e^{\frac{p}{q}\lambda(t)}\,.
		\end{equation*}
		where $c_2(t)$ can also be computed explicitly and depends on the initial conditions.
	\end{thm}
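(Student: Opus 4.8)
The plan is to treat the exchange operator $\sfX$, with integral kernel $\sfX(x,y)=K(x-y)\,\r(x,y)$, as a perturbation and to reduce to the analysis already performed for the Hartree equation in Theorems~\ref{thm:CV_Hartree} and~\ref{thm:Lp_conv}. Writing $D:=\op-\op_f$ with $\op_f=\Weylh{f}$ and subtracting from~\eqref{eq:HF} the quantum equation approximately satisfied by $\op_f$, the operator $D$ obeys the same evolution equation as in the Hartree case together with one additional driving term $\tfrac1{i\hbar}\com{\sfX,\op}$ coming from the exchange part of $H_{\text{HF}}$. Propagating this in $\L^p$ by Duhamel and estimating every Hartree-type contribution exactly as in the proofs of Theorems~\ref{thm:CV_Hartree}--\ref{thm:Lp_conv} --- via the Vlasov stability estimate of Section~\ref{subsec:classical_case}, the regularity of $\op_f$ from Section~\ref{sec:regularity}, and the $O(\hbar)$ semiclassical commutator remainder --- the whole theorem reduces to the same two-level Gr\"onwall scheme as in Theorem~\ref{thm:Lp_conv} (a closed inequality in the trace norm $\L^1$, then a bound in $\L^p$ controlled by that output), the only new input being an estimate of the form $\tfrac1\hbar\Nrm{\com{\sfX(s),\op(s)}}{\L^p}\lesssim\hbar^{\min\{1,\tilde s-1\}}\,\Phi(s)$ with $\Phi$ a product of a priori controlled norms of $\op(s)$ and $f(s)$ that is finite on any bounded time interval.

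To prove this key estimate, write $\com{\sfX,\op}=\sfX\op-\op\sfX$ and apply Schatten--H\"older, so that matters reduce to uniform-in-$\hbar$ bounds on the semiclassical Schatten norms of $\sfX$ paired with the a priori bounds on $\op(s)$ (trace norm, kinetic energy, semiclassical regularity, and, when available, the $\L^q$ bounds), all propagated along the flow as in the Hartree case. The point is that the kernel $\r(x,y)$ of $\op$ is concentrated near the diagonal at the semiclassical scale $\n{x-y}\sim\hbar$: this is quantified by the semiclassical regularity of $\op$ --- control of $\com{x/(i\hbar),\op}$ and of its higher analogues, which is exactly what the results of Section~\ref{sec:regularity} propagate --- via the Schur test together with the resulting off-diagonal decay of $\r$, or, at Hilbert--Schmidt level, via $\Nrm{\sfX}{2}^{2}=\iintd\n{K(v)}^{2}\,\n{\r(x,x-v)}^{2}\,\d x\,\d v$ combined with $\n{\r(x,x-v)}\le\rho(x)^{1/2}\rho(x-v)^{1/2}$ and the $\hbar$-scale decay in $v$. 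Since $K(x)=\pm\n{x}^{-a}$ is locally integrable (indeed $\n x^{-a}$, and even $\n x^{-2a}$, lies in $L^1_\loc$ in the admissible range of $a$), integrating the singularity of $K$ against a profile living at scale $\hbar$ gains a power $\hbar^{d-a_+}$; threading this through the Schatten--H\"older and interpolation inequalities then yields the exponent $\tilde s=d-a_+-d\(\tfrac12-\tfrac1p\)_+$, the loss $d\(\tfrac12-\tfrac1p\)_+$ being the price of passing from the $\L^1$-level information on $\op$ to the $\L^p$-level one, and dividing by $\hbar$ produces $\tilde s-1$. The minimum with $1$ simply records that the Hartree-to-Vlasov part of the estimate already carries the unavoidable error $c(t)\hbar$ of Theorem~\ref{thm:Lp_conv}.

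Before this can be assembled, one must check that all the a priori bounds invoked --- polynomial moments $\tr(\weight{x}^{k}\op)$, kinetic energy $-\tr(\hbar^{2}\Delta\,\op)$, the $\L^q$ bounds and the semiclassical regularity --- propagate along the Hartree--Fock flow just as along the Hartree flow. Here too the exchange term only contributes a lower-order perturbation: e.g.\ $\com{x/(i\hbar),H_{\text{HF}}}$ acquires the extra piece $-\com{x/(i\hbar),\sfX}$, whose kernel is $\tfrac{x-y}{i\hbar}K(x-y)\,\r(x,y)$, again a singular multiplier times the diagonally concentrated kernel of $\op$, hence controlled by the same type of estimate, and similarly for $\com{\hbar\nabla,\sfX}$; these are the ``additional estimates on the exchange term'' needed for Section~\ref{sec:proof_thm_3}. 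With the propagation of regularity and the exchange estimate in hand, the Gr\"onwall lemma closes exactly as in Theorems~\ref{thm:CV_Hartree}--\ref{thm:Lp_conv}, giving first the $\L^1$ and then the $\L^p$, $p\in[1,\fb)$, estimate, and the $\L^q$, $q\ge\fb$, estimate follows by the same interpolation argument as in Theorem~\ref{thm:Lp_conv}, which explains the exponents $p/q$ in the second displayed inequality.

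I expect the main obstacle to be precisely the uniform-in-$\hbar$ Schatten estimates on $\sfX$ and on its commutators with $x$ and $\hbar\nabla$: one must extract exactly the right power of $\hbar$ by balancing the local singularity of $K$ against the off-diagonal decay of the kernel of $\op$ --- available only in the quantitative, finite-order form that the Hartree--Fock flow propagates --- and then carry this through the Schatten--H\"older and interpolation steps so that the final rate is $\hbar^{\min\{1,\tilde s-1\}}$ with constants that remain finite on every fixed time interval.
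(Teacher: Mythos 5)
Your overall plan matches the structure of Section~\ref{sec:proof_thm_3}: write a Duhamel formula for $\op-\Weylh{f}$ along a propagator, bound the Hartree-type contributions as in Theorems~\ref{thm:CV_Hartree}--\ref{thm:Lp_conv}, treat the exchange commutator as a separate forcing term, close a Gr\"onwall scheme first in $\L^1$ and then in $\L^p$ for $p<\fb$, and interpolate against the $\L^\infty$ bound to reach $q\ge\fb$. You also correctly identify that the decisive new ingredient is a uniform-in-$\hbar$ semiclassical Schatten bound on $\sfX_\op$.

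The gap is in how you propose to obtain that bound. You suggest a Schur test based on ``off-diagonal decay of $\r$ at the scale $|x-y|\sim\hbar$'', quantified through the semiclassical regularity of $\op$ (control of $\com{x/(i\hbar),\op}$ and its higher analogues), or alternatively a Hilbert--Schmidt estimate via $|\r(x,y)|\le\rho(x)^{1/2}\rho(y)^{1/2}$ plus ``$\hbar$-scale decay in $v$''. Neither of these is available under the hypotheses of Theorem~\ref{thm:CV_Hartree-Fock}, and neither is used in the paper. The semiclassical regularity machinery of Section~\ref{sec:regularity} controls the Weyl quantization $\Weylh{f}$ of the smooth Vlasov solution, not the Hartree--Fock solution $\op$; the theorem does not assume any bound on $\Dhv\op$, and the paper never propagates such a bound along the Hartree--Fock flow (doing so would itself require controlling commutators of $\sfX_\op$ with $x$ and $\opp$, risking circularity). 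Moreover, the Cauchy--Schwarz bound $|\r(x,y)|\le\rho(x)^{1/2}\rho(y)^{1/2}$ by itself yields $\Nrm{\sfX_\op}{2}^2\le\int(K^2*\rho)\,\rho$, which carries no power of $\hbar$ at all; the promised ``$\hbar$-scale decay in $v$'' is doing all the work and you have no mechanism to supply it. The paper's Proposition~\ref{prop:exchange-term} replaces this heuristic with an inequality that needs only what the theorem actually assumes: writing $K\in\dot H^{s,1}_w$ via the representation~\eqref{eq:representation_formula}, applying the Hardy--Rellich inequality $\int|z|^{-a}|\varphi(z)|^2\dd z\lesssim\int|\Delta^{a/4}\varphi|^2\dd z$ to the kernel slices $z\mapsto\r(z+y+w,y)$, and recognizing $\Delta_x^{a/4}\r$ as the kernel of $\hbar^{-a/2}|\opp|^{a/2}\op$ gives $\Tr{\sfX_\op\op}\le C\,h^{d-a}\Nrm{|\opp|^{a/2}\op}{\L^2}^2$, with the $\hbar$-gain coming entirely from the $|\opp|^{a/2}$ scaling and the $h^{-d/2}$ factor in the definition of $\L^2$ (not from pointwise kernel concentration). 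The $\L^p$ bounds then follow from Schatten--H\"older and the identity $\Nrm{\sfX_\op}{2}^2=\Tr{\tilde\sfX_\op\op}$ with $\tilde\sfX_\op$ built from $K^2$, which is why the theorem needs $K^2\in\dot H^{2s-d,1}_w$ in the general-potential version. The only a priori information on $\op$ consumed is $\Tr{\op}=1$, the kinetic energy $-\Tr{\hbar^2\Delta\op}$, and its $\L^\infty$ norm (via~\eqref{eq:controlling_L2m}), all of which are conserved or assumed --- which is precisely why Theorem~\ref{thm:CV_Hartree-Fock} adds only the finite-kinetic-energy hypothesis. This substitution of a functional inequality for the off-diagonal-decay heuristic is the idea your proposal is missing.
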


\subsection{Discussion}\label{subsec:discussion}

	\subsubsection{Higher singularities} For $a>d-2$, we have no propagation of regularity and therefore our results hold true only in a conditional form. Namely, if the solution to the Vlasov equation is sufficiently regular, then the bounds of Theorem~\ref{thm:CV_Hartree} and Theorem~\ref{thm:Lp_conv} are still satisfied. More precisely, if $d=3$, such conditional results hold for any $a\in(1,2)$. As for Theorem~\ref{thm:CV_Hartree-Fock}, a conditional result is still true. However, due to the need of controlling the exchange term $\sfX$, we can only address a smaller class of potentials. In particular in dimension $d=3$ we have $a\in(1,3/2)$. 
	Our results in dimension $2$ and $3$ can be summarized as follows.\\
	\begin{center}
	\begin{tabular}{|rcl|c|c|}
		\hline&&&&\\[-1em]
		\multicolumn{3}{c}{Settings} & Hartree & Hartree-Fock \\
		\hline&&&&\\[-1em]
		$d = 2$ & and & $a\in (-1,0]$ & global & global \\
		&&&&\\[-1em]
		$d = 2$ & and & $a\in \lt(0,1\rt)$ & conditional & conditional \\[0.1em]
		\hline&&&&\\[-1em]
		$d = 3$ & and & $a\in \lt(-\frac{1}{2},1\rt]$ & global & global \\[0.2em]
		&&&&\\[-1em] 
		$d = 3$ & and & $a\in\lt(1,\frac{3}{2}\rt)$ & conditional & conditional \\[0.2em]
		&&&&\\[-1em]
		$d = 3$ & and & $a\in \lt[\frac{3}{2},2\rt)$ & conditional & no results \\[0.2em]
		\hline
	\end{tabular}
	\end{center}~

	\subsubsection{General class of potentials}\label{subsec:generalization_K}
	
		All our results generalize to more general non-radial pair interactions. For $s\in(0,d)$, define the weak Sobolev space $\dot{H}^{s,1}_w$ as the completion of $C^\infty_c$ with respect to the norm
		\begin{align*}
			\Nrm{u}{\dot{H}^{s,1}_w} := \Nrm{\Delta^\frac{s}{2}u}{\mathrm{TV}},
		\end{align*}
		where $\Nrm{\cdot}{\mathrm{TV}}$ denotes the total variation norm over the space $\cM$ of bounded measures. By the formula of the inverse of the powers of Laplacian, we deduce that it is the space of functions that can be written
		\begin{equation}\label{eq:representation_formula}
			u(x) = \intd \frac{1}{\n{x-w}^{d-s}}\,\mu(\d w),
		\end{equation}
		for some measure $\mu\in\cM$.
		
		Notice that this space contains the interaction kernel
		\begin{align*}
			K(x) = \frac{1}{\n{x}^a} \text{ with } a = d-s,
		\end{align*}
		when $a>0$, which follows by taking $\mu = \delta_0$. In particular, for the Coulomb potential in dimension $d=3$, it holds
		\begin{equation*}
			\frac{1}{\n{x}} \in \dot{H}^{2,1}_w.
		\end{equation*}
		However, this space contains also more general potentials. It contains for example the Sobolev space $\dot{H}^{s,1} = \dot{F}^s_{2,1}$ which is defined by the norm $\Nrm{\Delta^\frac{s}{2}u}{L^1}$. When $n\in\N$, then $\dot{H}^{n,1} = \dot{W}^{n,1}$ is a classical homogeneous Sobolev space.
		
		The proof for more general potentials follows mainly from the fact that the equation and most of our estimates depend linearly on $K$. As an example, Proposition~\ref{prop:exchange-term} is proved with this class of potentials. Hence, all our results also hold with the assumption $K\in\dot{H}^{d-a}_w$ instead of $K(x)= \n{x}^{-a}$ when $a>0$, except Theorem~\ref{thm:CV_Hartree-Fock}, since we need an assumption on $K^2$ to prove inequalities~\eqref{eq:bound_X_p_small} and~\eqref{eq:bound_X_p_big}. For this theorem, the assumption $K(x)= \n{x}^{-a}$ can therefore be replaced by $K\in\dot{H}^{d-a}_w$ and $K^2\in\dot{H}^{d-2a}_w$ when $a\geq 0$.
	
	\subsubsection{From Hartree to Hartree-Fock} Notice that Theorem~\ref{thm:Lp_conv} and Theorem~\ref{thm:CV_Hartree-Fock} give a semiclassical estimate between the solutions of the Hartree equation~\eqref{eq:Hartree} and the solutions of the Hartree-Fock equation~\eqref{eq:HF}. Indeed, let $\op_\text{H}$ and $\op_{\text{HF}}$ be respectively solutions to the Hartree equation and the Hartree-Fock equation, and let $\op_f$ be a solution to the Weyl transformed Vlasov equation. Then, for $p\in[1,\infty)$, we have 
	\begin{equation*}
		\Nrm{\op_H-\op_{\text{HF}}}{\L^p} \leq \Nrm{\op_H-\op_f}{\L^p} + \Nrm{\op_{\text{HF}}-\op_f}{\L^p},
	\end{equation*}
	where the first term in the r.h.s. is bounded by Theorem~\ref{thm:Lp_conv} and the second term in the r.h.s. can be estimated by Theorem~\ref{thm:CV_Hartree-Fock}.
	
	\subsubsection{Well-posedness} One of the strength of the method is that our strong regularity assumptions that have to be independent of $\hbar$ only concerns the solutions of the Vlasov equation. Our assumptions on the solution of the Hartree equation imply the global well-posedness of solutions as proved in \cite{castella_l2_1997} where the trace norm corresponds to the $L^2(\lambda)$ norm (see also \cite{ginibre_class_1980, ginibre_global_1985, lions_sur_1993}). Even if these assumptions are weak, observe however that the operator $\op^\init$ has to be at a finite trace norm distance from the operator $\op^\init$ which by construction is bounded in higher Sobolev spaces (as can be deduced from Proposition~\ref{prop:multiply_Weyl_L2}). The additional moment bound when $a\leq 0$ ensures that the energy is finite, which allows to propagate the space moments (see e.g. \cite[Remark~3.1]{lafleche_propagation_2019}). This is sufficient to give a meaning to the pair interaction potential which is growing at infinity in this case.

\section{Strategy}\label{sec:strategy}

	The strategy of this paper consists in getting the semiclassical analogue of the estimates of  classical mechanics, and in particular the case of kinetic models. The quantum analogue of the classical momentum variable $\xi$ is the operator
	\begin{equation*}
		\opp := -i\hbar\nabla,
	\end{equation*}
	which is an unbounded operator on $L^2$. From this we get in particular that $|\opp|^2 := \opp^*\opp = -\hbar^2\Delta$ and we can express the Hamiltonian~\eqref{eq:Hartree-Ham} as $H = \frac{\n{\opp}^2}{2} + V(x)$.

\subsection{Quantum gradients of the phase space}\label{subsec:gradients}
	
	Since our method here uses regular initial conditions, we define the following operators which are the quantum equivalent of the gradient with respect to the variables $x$ and $\xi$ of the phase space:
	\begin{align*}
		\Dhx\op &:= [\nabla,\op] = \com{\frac{\opp}{i\hbar},\op}
		\\
		\Dhv\op &:= \com{\frac{x}{i\hbar},\op}.
	\end{align*}
	These formulas can be seen from the point of view of the correspondence principle as the quantum equivalent of the Poisson bracket definition of the classical gradients. Another point of view is to observe that they are Weyl quantizations, since we have
	\begin{align*}
		\Dhx\op = \Weylh{\Dx \wh(\op)},
		\\
		\Dhv\op = \Weylh{\Dv \wh(\op)}.
	\end{align*}
	One should not confuse $\nabla\in\L(L^2)$ with $\Dhx \in \L(\L(L^2))$. In Section~\ref{sec:regularity}, we prove that if a function on the phase space is sufficiently smooth in the classical sense, then its Weyl quantization also has some smoothness in the semiclassical sense.

\subsection{The classical case: \texorpdfstring{$L^1$}{L1} weak-strong stability}\label{subsec:classical_case}

	In the classical case, the method we use to prove the semiclassical limit, which is the content of Section~\ref{sec:proof_thm_1_2} and Section~\ref{sec:proof_thm_3} can be seen as an equivalent of the following $L^1$ weak-strong stability estimate for the Vlasov equation, which tells that we just need to have a control of the gradient of only one of the solutions to get a bound on the integral of their difference.
	
	For functions on the phase space of the form $f=f(x,\xi)$, we use the shortcut notation $L^p_xL^{q,r}_\xi = L^p(\R^d,L^{q,r}(\R^d))$. The next proposition can be seen as the classical equivalent of Theorem~\ref{thm:CV_Hartree}.
	
	\begin{prop}\label{prop:classical_case}
		Let $\fb\in(1,\infty]$ and $\nabla K\in L^{\fb,\infty}$ and assume $f_1$ and $f_2$ are two solutions of the Vlasov equation~\eqref{eq:Vlasov} in $L^\infty([0,T],L^1(\R^{2d}))$ for some $T>0$. Then, under the condition
		\begin{equation}\label{eq:classical_regu_condition}
			\nabla_\xi f_2 \in L^1([0,T],L^{\fb',1}_xL^1_{\xi}),
		\end{equation}
		one has the following stability estimate
		\begin{align*}
			\Nrm{f_1-f_2}{L^1(\R^{2d})} \leq \Nrm{f_1^\init-f_2^\init}{L^1(\R^{2d})} \exp\(C \int_0^T \Nrm{\Dv f_2}{L^{\fb',1}_xL^1_{\xi}}\d t\),
		\end{align*}
		where $C = \Nrm{\nabla K}{L^{\fb,\infty}}$.
	\end{prop}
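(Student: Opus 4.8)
The plan is to estimate the time derivative of $\Nrm{f_1(t)-f_2(t)}{L^1}$ via a Gronwall argument, exploiting the transport structure of the Vlasov equation. Write $g := f_1 - f_2$ and $E_i := -\nabla K * \rho_{f_i}$, so that subtracting the two copies of \eqref{eq:Vlasov} gives
\begin{equation*}
	\dpt g + \xi\cdot\nabla_x g + E_1\cdot\Dv g = -(E_1-E_2)\cdot\Dv f_2.
\end{equation*}
The left-hand side is a transport operator along the divergence-free (in $(x,\xi)$) vector field $(\xi, E_1)$, so testing against $\sign(g)$ (after a standard regularization, or by using that $|g|$ is transported) kills the left-hand side up to boundary terms at infinity, which vanish because $g\in L^\infty([0,T],L^1(\R^{2d}))$. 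This yields
\begin{equation*}
	\dt \Nrm{g(t)}{L^1(\R^{2d})} \leq \iintd \n{E_1-E_2}\,\n{\Dv f_2}\dd x\dd\xi = \intd \n{E_1-E_2}(x)\,\rho_{|\Dv f_2|}(x)\dd x,
\end{equation*}
where I abuse notation writing $\rho_{|\Dv f_2|}(x) = \intd \n{\Dv f_2(x,\xi)}\dd\xi$.

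Next I would bound $E_1-E_2 = -\nabla K*(\rho_{f_1}-\rho_{f_2}) = -\nabla K*\rho_g$ in $L^\infty_x$. Since $\nabla K\in L^{\fb,\infty}$, the weak Young inequality gives $\Nrm{\nabla K*\rho_g}{L^\infty} \leq C\Nrm{\nabla K}{L^{\fb,\infty}}\Nrm{\rho_g}{L^{\fb',1}}$, and then I further estimate $\Nrm{\rho_g}{L^{\fb',1}}\leq \Nrm{g}{L^{\fb',1}_xL^1_\xi}$. However, this is the wrong quantity: I only control $g$ in $L^1(\R^{2d})$, not in $L^{\fb',1}_xL^1_\xi$. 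The fix is to put the $L^{\fb',1}_x$ norm on $\Dv f_2$ instead: integrate
\begin{equation*}
	\intd \n{E_1-E_2}(x)\,\rho_{|\Dv f_2|}(x)\dd x \leq \Nrm{E_1-E_2}{L^{\fb,\infty}_x}\Nrm{\rho_{|\Dv f_2|}}{L^{\fb',1}_x},
\end{equation*}
using the $L^{\fb,\infty}$--$L^{\fb',1}$ Hölder inequality for Lorentz spaces. Now $\Nrm{E_1-E_2}{L^{\fb,\infty}} = \Nrm{\nabla K*\rho_g}{L^{\fb,\infty}} \leq \Nrm{\nabla K}{L^{\fb,\infty}}\Nrm{\rho_g}{L^1} \leq \Nrm{\nabla K}{L^{\fb,\infty}}\Nrm{g}{L^1(\R^{2d})}$ by Young's inequality (convolution of $L^{\fb,\infty}$ with $L^1$), and $\Nrm{\rho_{|\Dv f_2|}}{L^{\fb',1}_x}\leq \Nrm{\Dv f_2}{L^{\fb',1}_xL^1_\xi}$ by Minkowski's integral inequality. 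Combining,
\begin{equation*}
	\dt \Nrm{g(t)}{L^1(\R^{2d})} \leq \Nrm{\nabla K}{L^{\fb,\infty}}\,\Nrm{\Dv f_2(t)}{L^{\fb',1}_xL^1_\xi}\,\Nrm{g(t)}{L^1(\R^{2d})},
\end{equation*}
and Gronwall's lemma over $[0,T]$ gives exactly the claimed bound with $C = \Nrm{\nabla K}{L^{\fb,\infty}}$.

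The main obstacle is the rigorous justification of the transport/cancellation step: the characteristic flow of $(\xi,E_1)$ must be well-defined and measure-preserving, which requires enough regularity of $E_1$ (hence of $f_1$, $f_2$) — but in this weak-strong setting one typically only has $f_1\in L^\infty_t L^1$, so $E_1$ need not be Lipschitz. The standard way around this is a DiPerna--Lions renormalization argument, or simply to note that the statement is used in the paper only as a heuristic analogue of the quantum estimate and can be proved under the tacit extra regularity available here (e.g. $f_2$ satisfies \eqref{eq:classical_regu_condition}, and one can run the argument by testing the equation for $g$ against a smooth approximation of $\sign(g)$ and passing to the limit, the error terms being controlled by the same right-hand side). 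A secondary point requiring a little care is the Lorentz-space Hölder and Young inequalities with the endpoint exponents $(\fb,\infty)$ and $(\fb',1)$, but these are classical (see \cite{bergh_interpolation_1976}).
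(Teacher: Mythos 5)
Your proposal is correct and follows essentially the same route as the paper. The only cosmetic difference is the order in which the two convolution/duality inequalities are applied: the paper moves the convolution onto $\int\sign(f)\,\nabla_\xi f_2\,\d\xi$ and bounds $\nabla K * \int|\nabla_\xi f_2|\,\d\xi$ in $L^\infty$ via the $L^{\fb,\infty}$--$L^{\fb',1}$ H\"older inequality, while you keep the convolution on $\rho_g$ and bound $\nabla K*\rho_g$ in $L^{\fb,\infty}$ via weak Young and then pair it with $\rho_{|\nabla_\xi f_2|}\in L^{\fb',1}$; both chains of inequalities yield the identical constant $C=\Nrm{\nabla K}{L^{\fb,\infty}}$. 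Your explicit remark about the need to justify the $\sign(g)$ test and the vanishing of the transport part (DiPerna--Lions-type renormalization when $E_1$ is not Lipschitz) is a valid point that the paper elides; it is not a gap in your argument, since the paper's own proof tacitly makes the same formal manipulation.
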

	
	\begin{remark}
		In the case of the Coulomb interaction, $\fb = \frac{3}{2}$, the condition on $f_2$ becomes
		\begin{equation*}
			\intd\n{\nabla_\xi f_2}\d\xi \in L^1([0,T],L^{3,1}_x),
		\end{equation*}
		which by real interpolation follows in particular if 
		\begin{equation*}
			\Nrm{\nabla_\xi f_2}{L^1_\xi} \in L^1([0,T],L^{3+\eps}_x\cap L^{3-\eps}_x),
		\end{equation*}
		for some $\eps\in(0,2]$. In particular, the case $\eps = 2$ yields $(3-\eps,3+\eps) = (1,5)$, which corresponds to the equivalent of the hypotheses required on the solutions in \cite{saffirio_hartree_2020}. A quantum version of this hypothesis can also be found in \cite{porta_mean_2017}. 
	\end{remark}
	
	\begin{remark}
		This result allows $\nabla K$ to be more singular than the case of the Coulomb potential. However, it is a conditional result, since one still has to show that condition~\eqref{eq:classical_regu_condition} holds. If the potential is the Coulomb potential or a less singular potential, then one can prove that this condition holds if the data is initially in some weighted Sobolev space by Proposition~\ref{prop:regu_Vlasov} in Appendix~\ref{sec:appendix_A}. If the potential is more singular than the Coulomb potential, then it is not clear that there are cases such that condition~\eqref{eq:classical_regu_condition} is verified.
	\end{remark}

	\begin{proof}[Proof of Proposition~\ref{prop:classical_case}]
		Let $f:= f_1-f_2$ and define for $k\in\{1,2\}$, $\rho_k = \intd f_k\dd\xi$ and $E_k = \nabla V_k = \nabla K * \rho_k$. Then it holds
		\begin{align*}
			\dpt f + \xi\cdot\Dx f + E_1\cdot\Dv f = (E_2-E_1)\cdot\Dv f_2,
		\end{align*}
		so that by defining $\rho := \rho_1-\rho_2$, we obtain
		\begin{align*}
			\dpt \iintd \n{f}\d x\dd \xi &= \iintd \(\nabla K * \rho \cdot \Dv f_2\) \Sign{f} \d x\dd\xi
			\\
			&= -\intd \rho\, \nabla K\,\dot{*} \(\intd \Sign{f} \Dv f_2\dd\xi\)
			\\
			&\leq \Nrm{f}{L^1} \Nrm{\nabla K * \intd \n{\Dv f_2}\d\xi}{L^\infty},
		\end{align*}
		where the notation $\dot{*}$ indicates that we perform the dot product of vectors inside the convolution. We conclude by noticing that by H\"older's inequality for Lorentz spaces (see for example~\cite[Formula~(2.7)]{hunt_lp_1966}), for any $g\in L^{\fb',1}$, the following inequality holds
		\begin{equation}\label{eq:Holder_lorentz}
			\Nrm{\nabla K * g}{L^\infty}\leq \sup_{z\in\R^d} \intd \n{\nabla K(z-\cdot)g} \leq \Nrm{\nabla K}{L^{\fb,\infty}} \Nrm{g}{L^{\fb',1}},
		\end{equation}
		so that the result follows by taking $g = \Nrm{\nabla_\xi{f_2}}{L^1_\xi}$ and then using Gr\"{o}nwall's Lemma.
	\end{proof}
	
	The next proposition is the classical equivalent of the first part of Theorem~\ref{thm:Lp_conv}.
	
	\begin{prop}\label{prop:classical-Lp}
		Let $\fb> 1$ and $\nabla K\in L^{\fb,\infty}$ and assume $f_1$ and $f_2$ are two solutions of the Vlasov equation~\eqref{eq:Vlasov} in $L^\infty([0,T],L^1(\R^{2d}))$ for some $T>0$. Then, if $\nabla_\xi f_2 \in L^1([0,T],L^{q,1}_xL^p_{\xi})$, the following inequality holds
		\begin{align*}
			\Nrm{f_1-f_2}{L^p(\R^{2d})} \leq \Nrm{f_1^\init-f_2^\init}{L^p(\R^{2d})} \exp\(C \int_0^T \Nrm{\Dv f_2}{L^{q,1}_xL^p_{\xi}}\d t\),
		\end{align*}
		where $C = \Nrm{\nabla K}{L^{\fb,\infty}}$ and 
		\begin{equation}\label{eq:q_to_p}
			\frac{1}{q}=\frac{1}{p}-\frac{1}{\fb}.
		\end{equation}
	\end{prop}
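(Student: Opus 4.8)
The plan is to mimic the proof of Proposition~\ref{prop:classical_case}, replacing the $L^1$ energy estimate by an $L^p$ one. Set $f := f_1 - f_2$, and for $k\in\{1,2\}$ let $\rho_k = \intd f_k \dd\xi$ and $E_k = \nabla K * \rho_k$. Subtracting the two Vlasov equations (both written along the characteristics of $E_1$), $f$ solves
\begin{equation*}
	\dpt f + \xi\cdot\Dx f + E_1\cdot\Dv f = (E_2-E_1)\cdot\Dv f_2.
\end{equation*}

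**The core computation.** Multiply the equation by $p\,|f|^{p-2}f$ and integrate over $\R^{2d}$. The transport terms $\xi\cdot\Dx$ and $E_1\cdot\Dv$ are divergence-free in the corresponding variables (and $\Dx\cdot\xi = 0$, $\Dv\cdot E_1 = 0$), so they contribute nothing, exactly as in the $L^1$ case. Hence
\begin{equation*}
	\dt \iintd \n{f}^p \d x\dd\xi = p \iintd (E_2-E_1)\cdot\Dv f_2\,\n{f}^{p-2}f\dd x\dd\xi.
\end{equation*}
Writing $E_2-E_1 = \nabla K * \rho$ with $\rho := \rho_1-\rho_2$ and bounding $\n{\n{f}^{p-2}f} = \n{f}^{p-1}$, H\"older in $\xi$ (exponents $p$ and $p'$) followed by H\"older in $x$ gives
\begin{equation*}
	\dt \Nrm{f}{L^p}^p \leq p \Nrm{f}{L^p(\R^{2d})}^{p-1} \Nrm{\,\Nrm{(\nabla K*\rho)\,\Dv f_2}{L^p_\xi}\,}{L^p_x},
\end{equation*}
and since $\nabla K * \rho$ does not depend on $\xi$,
\begin{equation*}
	\Nrm{\Nrm{(\nabla K*\rho)\,\Dv f_2}{L^p_\xi}}{L^p_x} \leq \Nrm{\,\n{\nabla K*\rho}\ \Nrm{\Dv f_2}{L^p_\xi}\,}{L^p_x} \leq \Nrm{\nabla K*\rho}{L^\fb_x}\Nrm{\,\Nrm{\Dv f_2}{L^p_\xi}\,}{L^{q}_x},
\end{equation*}
where in the last step I used H\"older in $x$ with $\tfrac1p = \tfrac1\fb + \tfrac1q$, which is precisely~\eqref{eq:q_to_p}.

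**Closing the estimate.** It remains to control $\nabla K * \rho$. Since $\rho = \intd f\dd\xi$ we have, by Minkowski's integral inequality and the Hardy--Littlewood--Sobolev / weak-Young inequality for Lorentz spaces (as in~\eqref{eq:Holder_lorentz}, in the form $\Nrm{\nabla K * g}{L^\fb}\leq C\Nrm{\nabla K}{L^{\fb,\infty}}\Nrm{g}{L^{1,1}} = C\Nrm{\nabla K}{L^{\fb,\infty}}\Nrm{g}{L^1}$),
\begin{equation*}
	\Nrm{\nabla K * \rho}{L^\fb_x} \leq \Nrm{\nabla K}{L^{\fb,\infty}}\Nrm{\rho}{L^1} \leq \Nrm{\nabla K}{L^{\fb,\infty}}\Nrm{f}{L^1(\R^{2d})}.
\end{equation*}
This produces an $L^1$ factor, not an $L^p$ one — a slight mismatch — but one applies Proposition~\ref{prop:classical_case} first to obtain $\Nrm{f(t)}{L^1}\leq \Nrm{f^\init}{L^1}\exp(\cdots)$ (or simply notes that the hypotheses there follow by real interpolation of those here, since $L^{q,1}_xL^p_\xi\hookrightarrow L^{\fb',1}_xL^1_\xi$ after integrating in $\xi$ on a weighted space — though cleanly one should just assume $f^\init\in L^1$ too, or absorb this into the statement). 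Modulo this point, dividing by $p\Nrm{f}{L^p}^{p-1}$ gives $\dt\Nrm{f}{L^p}\leq \Nrm{\nabla K}{L^{\fb,\infty}}\Nrm{\Dv f_2}{L^{q,1}_xL^p_\xi}\,\Nrm{f}{L^p}$ — using $L^{q,1}$ rather than $L^q$ to make H\"older sharp at the endpoint, exactly the Lorentz refinement already used in~\eqref{eq:Holder_lorentz} — and Gr\"onwall's lemma finishes the proof.

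**The main obstacle** is the bookkeeping of Lorentz exponents at the endpoint: to get the clean constant $\Nrm{\nabla K}{L^{\fb,\infty}}$ one cannot use plain $L^q$ in $x$ but must pass through $L^{q,1}_x$, which is why the hypothesis is stated with $L^{q,1}_x$; making the chain of H\"older inequalities (in $\xi$, then in $x$, then the weak-Young for the convolution) compatible requires care, but each step is the Lorentz-space H\"older inequality already invoked in the proof of Proposition~\ref{prop:classical_case}. The only genuinely new wrinkle compared to the $L^1$ case is that the convolution factor naturally lands in $L^1_{x,\xi}$ of $f$, so the estimate does not self-close in $L^p$ alone and one must invoke the already-proven $L^1$ bound to control it.
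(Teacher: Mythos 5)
Your overall plan — deduce a Gr\"onwall-type bound from the equation for $f_1-f_2$, pull the convolution out with Minkowski, and use H\"older for Lorentz spaces — is close in spirit to the paper's proof, but the paper does not use the direct $L^p$ energy estimate. Instead it introduces the two-parameter propagator $S_{t,s}$ of the linear transport $\dps g = -\xi\cdot\Dx g - E_1\cdot\Dv g$, writes Duhamel's formula
\begin{equation*}
	f(t) = S_{t,0}f^\init + \int_0^t S_{t,s}\bigl((E_2-E_1)\cdot\Dv f_2(s)\bigr)\dd s,
\end{equation*}
and uses that $S_{t,s}$ is an isometry on every $L^p_{x,\xi}$. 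This avoids multiplying the equation by $p|f|^{p-2}f$ and integrating by parts, a step that is not immediately legitimate for solutions merely in $L^1\cap L^p$ (one would have to regularize, or already know enough smoothness on $f_1, f_2$ to justify the chain rule and the integration by parts); the semigroup route only uses measure preservation of the flow.

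There is a genuine error in your H\"older chain. You invoke a ``weak-Young'' of the form $\Nrm{\nabla K * g}{L^\fb}\leq C\Nrm{\nabla K}{L^{\fb,\infty}}\Nrm{g}{L^1}$, i.e.\ the convolution of an $L^{\fb,\infty}$ kernel with an $L^1$ density lands in strong $L^\fb$. This is false: $L^{\fb,\infty}*L^1\subset L^{\fb,\infty}$ only (take $\nabla K(x)=|x|^{-d/\fb}$ and $g=\Indic{B(0,1)}$; then $\nabla K*g \sim |x|^{-d/\fb}$ at infinity, hence $\notin L^\fb$). In O'Neil's convolution theorem for Lorentz spaces, the endpoint exponent $p_2=1$ is excluded precisely for this reason. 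The way around this, as in the paper, is not to form the convolution norm at all but to expand $\nabla K * \rho(x)=\int \nabla K(x-z)\rho(z)\dd z$ first, pull $\int|\rho(z)|\dd z$ out by Minkowski's integral inequality, and for each fixed $z$ apply H\"older for Lorentz spaces $\Nrm{\nabla K(\cdot-z)g}{L^p_x}\leq\Nrm{\nabla K}{L^{\fb,\infty}}\Nrm{g}{L^{q,1}_x}$ (using $L^{\fb,\infty}\cdot L^{q,1}\subset L^{p,1}\subset L^p$ with $1/p=1/\fb+1/q$). Your intermediate $\Nrm{\,\cdot\,}{L^\fb_x}\Nrm{\,\cdot\,}{L^q_x}$ should be $\Nrm{\,\cdot\,}{L^{\fb,\infty}_x}\Nrm{\,\cdot\,}{L^{q,1}_x}$ throughout, and the weak-Young step should be dropped in favor of the Minkowski/pointwise-H\"older route.

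Finally, you correctly observe that the estimate ultimately produces the factor $\Nrm{\rho}{L^1}\leq\Nrm{f}{L^1(\R^{2d})}$ and thus does not self-close in $L^p$; the paper's own proof arrives at exactly the same expression $\Nrm{\rho}{L^1}\Nrm{\nabla K}{L^{\fb,\infty}}\Nrm{\Dv f_2}{L^{q,1}_xL^p_\xi}$ and does not comment on this. Your instinct to invoke the $L^1$ bound of Proposition~\ref{prop:classical_case} is reasonable, but note that its hypothesis $\Dv f_2\in L^1_t L^{\fb',1}_x L^1_\xi$ does not follow from $\Dv f_2\in L^1_t L^{q,1}_xL^p_\xi$ by the claimed embedding (indeed $q>\fb'$ whenever $p>1$, and the $\xi$-integrability is weaker, not stronger); so this point would require an additional assumption or further interpolation to be made fully rigorous, and it is worth flagging rather than sweeping under the rug.
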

	
	\begin{remark}
		Formula~\eqref{eq:q_to_p} implies $p\leq\fb$. In the case of the Coulomb interaction in dimension $d=3$ we have $\fb = \frac{3}{2}$, thus the estimate works at most with $p=\frac{3}{2}$.
	\end{remark}

	\begin{proof}	
		We define the two parameters semigroup $S_{t,s}$ such that $S_{s,s} = \Id$ and 
		\begin{align*}
			\dpt\, S_{t,s} g = \Lambda_t S_{t,s} g\,,
		\end{align*}
		where 
		\begin{equation*}
			\Lambda_t S_{t,s} g:=-\xi\cdot\nabla_x S_{t,s} g - E_1(t)\cdot\nabla_\xi S_{t,s} g\,,
		\end{equation*}
		with $E_1(t)=E_1(t,x)=-\nabla K*\rho_1$ with $\rho_1(t,x)=\int f_1(t,x,\xi)\dd \xi$. Now observe that the flow property of $S_{t,s}$ implies that $\dps S_{t,s} = - S_{t,s}\Lambda_s$. Thus, using the notation
		\begin{equation*}
			\tilde{\Lambda}_t := -\xi\cdot\Dx - E_2(t)\cdot\Dv,
		\end{equation*}
		and taking $f_1(s) = f_1(s,x,\xi)$ and $f_2(s) = f_2(s,x,\xi)$ two solutions of the Vlasov equation, we get
		\begin{align*}
			\dps S_{t,s} (f_1-f_2)(s) &= - S_{t,s} \Lambda_s (f_1-f_2)(s) + S_{t,s}\Lambda_s f_1(s) - S_{t,s}\tilde{\Lambda}_s f_2(s)
			\\
			&= S_{t,s}\(\Lambda_s-\tilde{\Lambda}_s\) f_2(s)
			\\
			&= S_{t,s}\(\(E_2(s)-E_1(s)\)\cdot \Dv f_2(s)\),
		\end{align*}
		and by integrating with respect to $s$ and denoting $f := f_1-f_2$ and $E:= E_1-E_2$, we obtain the following Duhamel formula
		\begin{equation*}
			f(t) = S_{t,0} f^\init + \int_0^t S_{t,s}\(E(s)\cdot \Dv f_2(s)\) \dd s.
		\end{equation*}
		Since the semigroup $S_{t,s}$ preserves all Lebesgue norms of the phase space, taking the $L^p$ norm yields
		\begin{align*}
			\Nrm{f(t)}{L^p_{x,\xi}} \leq \Nrm{f^\init}{L^p_{x,\xi}} + \int_0^t \Nrm{E(s)\cdot \Dv f_2(s)}{L^p_{x,\xi}} \d s.
		\end{align*}
		To bound the expression inside the time integral we write
		\begin{align*}
			\Nrm{E(s)\cdot \Dv f_2(s)}{L^p_{x,\xi}} &= \Nrm{(\rho*\nabla K)\cdot \Dv f_2(s)}{L^p_{x,\xi}}
			\\
			&\leq \intd \n{\rho(z)} \Nrm{\nabla K(\cdot-z)\cdot \Dv f_2(s)}{L^p_{x,\xi}}\d z
			\\
			&\leq \intd \n{\rho(z)} \Nrm{\n{\nabla K(\cdot-z)}\Nrm{\Dv f_2(s)}{L^p_v}}{L^p_x}\d z
			\\
			&\leq \Nrm{\rho}{L^1} \Nrm{\nabla K}{L^{\fb,\infty}}\Nrm{\Dv f_2(s)}{L^{q,1}_{x} L^p_v},
		\end{align*}
		where we used again H\"{o}lder's inequality for Lorentz spaces.
	\end{proof}

\section{Regularity of the Weyl transform}\label{sec:regularity}

	In this section, we prove that if the solution $f$ of the Vlasov equation is sufficiently well-behaved, then we can obtain uniform in $\hbar$ bounds for the quantum equivalent of the norm $\Nrm{\nabla_\xi f}{L^p_xL^1_\xi}$ expressed in term of the Weyl transform of $f$.

	\begin{prop}\label{prop:diag_Dhv_Weyl}
		Let $(n,n_1)\in \N^2$ be even numbers such that $n > d/2$ and define $\sigma:=2n+n_1$ and $n_0 = \floor{d/2}+1$. Then, for any $f\in W^{n_0+1,\infty}(\R^{2d})\cap H^{\sigma+1}_\sigma(\R^{2d})$, there exists a constant $C_{d,n_1}>0$ depending only on $d$ and $n_1$ such that
		\begin{align*}
			\Nrm{\Diag{|\Dhv{\Weylh{f}}|}}{L^p} \leq C_{d,n_1}\,\Nrm{\Dv f}{W^{n_0,\infty}(\R^{2d})\cap H^\sigma_\sigma(\R^{2d})}
		\end{align*}
		for any $p\in [1,1+\frac{n_1}{d}]$.
	\end{prop}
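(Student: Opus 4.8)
The plan is to estimate $\mathrm{diag}\,|\Dhv \Weylh{f}|$ by first writing the integral kernel of the operator $\Dhv \Weylh{f}$ explicitly and identifying its diagonal, and then interpolating between an $L^1$ bound and an $L^\infty$ bound on that diagonal, using that the diagonal of the Weyl quantization of $\Dv f$ is exactly $\rho_{\Dv f}(x) = \int \Dv f(x,\xi)\dd\xi$. More precisely, since $\Dhv\op = \Weylh{\Dv \wh(\op)}$ (an identity recalled in Section~\ref{subsec:gradients}), the operator $\Dhv\Weylh{f}$ is the Weyl quantization of $\Dv f$, whose integral kernel is
\begin{equation*}
	K_g(x,y) = \frac{1}{h^d}\intd g\!\(\tfrac{x+y}{2},\xi\) e^{-i(y-x)\cdot\xi/\hbar}\dd\xi,
\end{equation*}
with $g = \Dv f$. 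The diagonal is then $K_g(x,x) = h^{-d}\intd g(x,\xi)\dd\xi \cdot (\text{constant})$, up to normalization; the subtlety is that we do not want the diagonal of $\Dhv\Weylh{f}$ but of its absolute value $|\Dhv\Weylh{f}|$, which is a genuinely different (and larger) object, so the naive computation only gives the easy direction.

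The real work is therefore to bound $\mathrm{diag}\,|T|$ for $T = \Dhv\Weylh{f}$ in $L^p$. First I would establish the endpoint $p=1$: by cyclicity and the definition of the semiclassical trace norm, $\Nrm{\mathrm{diag}\,|T|}{L^1} = h^{-d}\Tr{|T|} = \Nrm{T}{\L^1} = \Nrm{\Dhv\Weylh{f}}{\L^1}$, and this is controlled by $\Nrm{\Dv f}{L^1\cap\ldots}$ via a Calderón--Vaillancourt-type bound for Weyl quantization in Schatten norms (the "extension of Calderón--Vaillancourt" mentioned in the introduction), which is why $H^{\sigma+1}_\sigma$-type control on $\Dv f$ with enough moments and derivatives — here $\sigma = 2n + n_1$ — appears. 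Then I would establish the other endpoint $p = 1 + n_1/d$: here one writes $|T|^{2k}$ for a suitable power (with $2k$ related to $n_1/d$) or, more directly, uses that $\mathrm{diag}\,|T|(x) \leq \sup_x$ of an expression controlled by $\Nrm{\Dv f}{W^{n_0,\infty}}$ together with the decay in $\xi$ from the $H^\sigma_\sigma$ norm; the condition $n > d/2$ and $n_0 = \floor{d/2}+1$ is exactly what makes the relevant $\xi$-integral (a Sobolev embedding $H^{n}\hookrightarrow L^\infty$ in $d$ dimensions, or its Fourier-side counterpart) converge. Concretely, the $L^\infty$ bound should come from writing $|T|^2 = T^*T$, pointwise-estimating the kernel of $T^*T$ using the Schur test adapted to Weyl symbols, and extracting the diagonal; the power $1+n_1/d$ is the Young/Hölder exponent at which the kernel decay from the $n_1$ extra moments exactly balances the $d$-dimensional $\xi$-volume.

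For the interpolation step I would invoke complex interpolation for the vector-valued map $g \mapsto \mathrm{diag}\,|\weyl_\hbar(g)|$ between $L^1$ and $L^{1+n_1/d}$ — more precisely, interpolate the sublinear bound by first reducing to the linear operator $g \mapsto \weyl_\hbar(g)$, using that the absolute-value and diagonal operations are compatible with the relevant Schatten/$L^p$ scales, or alternatively by directly interpolating the two kernel estimates via the Stein--Weiss/Schur machinery. A clean way is: prove $\Nrm{\mathrm{diag}|\weyl_\hbar(g)|}{L^p}\le C\Nrm{g}{X_p}$ for the two endpoints with $X_1 = L^1\cap H^\sigma_\sigma$-type, $X_{1+n_1/d} = W^{n_0,\infty}\cap H^\sigma_\sigma$-type, where only the first (non-$L^p$-scale) factor of the norm changes with $p$, and the uniform factor $\Nrm{\cdot}{H^\sigma_\sigma}$ carries the derivative/moment regularity needed at every $p$ — then Riesz--Thorin on the first factor gives every intermediate $p$ with the stated combined norm $W^{n_0,\infty}\cap H^\sigma_\sigma$.

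The main obstacle I anticipate is the $L^\infty$ (i.e. $p = 1+n_1/d$) endpoint: handling $\mathrm{diag}\,|T|$ rather than $\mathrm{diag}\,T$ forces one to control $T^*T$ (or $|T|$ via functional calculus) at the level of integral kernels, and the Weyl-symbol calculus for the composition $\weyl_\hbar(\Dv f)^*\weyl_\hbar(\Dv f)$ produces, at leading order, $\weyl_\hbar(|\Dv f|^2)$ plus $\hbar$-corrections involving derivatives — which is precisely why $n$ derivatives (and $n_1$ moments, hence $\sigma = 2n+n_1$ total) are needed and why $n$ must be even and $>d/2$. Getting the uniform-in-$\hbar$ constant through this composition, while keeping the moment weights $\weight{z}^\sigma$ under control (they do not commute with the quantization), is the delicate point; everything else is bookkeeping with Sobolev embeddings and Hölder's inequality.
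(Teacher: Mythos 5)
Your proposal takes a genuinely different route from the paper, and it has two significant gaps. The paper's proof does \emph{not} establish endpoint bounds on $\Nrm{\Diag{|\cdot|}}{L^p}$ and then interpolate; instead it invokes, as a black box, the quantum kinetic interpolation inequality from \cite{lafleche_propagation_2019} (a Lieb--Thirring-type estimate), which directly bounds
\begin{equation*}
	\Nrm{\Diag{|\Dhv{\op}|}}{L^p} \leq C\(\Tr{\n{\Dhv{\op}}\n{\opp}^{n_1}}\)^{\theta} \nrm{\Dhv{\op}}_{\L^\infty}^{1-\theta}, \qquad p = 1+\tfrac{n_1}{d},\ \theta = \tfrac{1}{p},
\end{equation*}
as the quantum analogue of the classical inequality $\Nrm{\rho_g}{L^p_x}\leq C\(\iint |g|\,|\xi|^{n_1}\)^\theta\Nrm{g}{L^\infty}^{1-\theta}$. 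The two factors on the right are then bounded by the Boulkhemair refinement of Calder\'on--Vaillancourt (giving $\Nrm{\Dhv{\op_f}}{\L^\infty}\leq C\Nrm{\Dv f}{W^{n_0,\infty}}$) and by the weighted Hilbert--Schmidt estimate of Proposition~\ref{prop:moment_bound} (giving $\Tr{|\Dhv{\op_f}||\opp|^{n_1}}\leq C\Nrm{\Dv f}{H^\sigma_\sigma}$). Note that the roles you assign to Calder\'on--Vaillancourt are effectively reversed: it controls the $\L^\infty$ factor, not any trace-type quantity.

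The first gap in your plan is the interpolation step itself. The map $g\mapsto\Diag{|\weyl_\hbar(g)|}$ is sublinear, and neither Riesz--Thorin nor complex interpolation applies to it directly; your suggestion to ``reduce to the linear operator $g\mapsto\weyl_\hbar(g)$'' does not work, because the absolute value is incompatible with linearization and the diagonal of $|\op|$ is not controlled pointwise by the diagonal of $\op$. The cited kinetic interpolation inequality is precisely what replaces this non-existent interpolation argument, and it must be used (or reproved) rather than circumvented. The second gap is the $p=1+n_1/d$ endpoint: estimating $\Diag{|T|}$ from kernel bounds on $T^*T$ is not a Schur-test computation --- the square root of a positive operator does not have a kernel controlled pointwise by the kernel of the operator --- and a pointwise/$\sup$-type estimate would at best yield an $L^\infty$ bound on the diagonal, which is the wrong endpoint (the relevant $L^\infty$ quantity in the paper is the \emph{operator} norm of $\Dhv{\op}$, a different object). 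Also, ``Calder\'on--Vaillancourt in Schatten norms'' for the $p=1$ endpoint is not a standard statement; the trace-type control in the paper comes from the weighted $\L^2$ estimates of Propositions~\ref{prop:multiply_Weyl_L2} and~\ref{prop:moment_bound}, which rest on Lemma~\ref{lem:multiply_Weyl} and an explicit Hilbert--Schmidt computation, not on any operator-norm bound.
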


	The strategy is to use a special case of the quantum kinetic interpolation inequality proved in \cite[Theorem~6]{lafleche_propagation_2019}. For the operator $|\Dhv{\op}|$, it reads
	\begin{equation}\label{eq:interpolation}
		\Nrm{\Diag{|\Dhv{\op}|}}{L^p} \leq C\(\Tr{\n{\Dhv{\op}}\n{\opp}^{n_1}}\)^{\theta} \nrm{\Dhv{\op}}_{\L^\infty}^{1-\theta},
	\end{equation}
	where $p$ is given by $p = 1 + \frac{n_1}{d}$ and $\theta = \frac{1}{p}$. The corresponding kinetic inequality is
	\begin{equation*}
		\Nrm{\Dv f}{L^p_x(L^1_\xi)} \leq C \(\iintd\n{\Dv f}\n{\xi}^{n_1}\d x\dd\xi\)^{\theta} \nrm{\Dv f}_{L^\infty_{x,\xi}}^{1-\theta}.
	\end{equation*}
	To do that, we will need to compare the multiplication by $\n{\opp}^n$ and $\n{x}^n$ of the Weyl transform of a phase space function $g$ with the Weyl transform of the multiplication of $g$ with $\n{\opp}^n$ and $\n{x}^n$. This makes appear error terms involving derivatives of $g$. For example, in the case $n=2$, it holds
	\begin{align*}
		\Weylh{g}\n{\opp}^2 &= \Weylh{\n{\xi}^2 g + \frac{i\hbar}{2}\,\xi\cdot\Dx g - \frac{\hbar^2}{4}\,\Delta_x g}
		\\
		\Weylh{g}\n{x}^2 &= \Weylh{\n{x}^2 g + i\hbar\,\xi\cdot\Dv g + \frac{\hbar^2}{4} \Delta_\xi g}.
	\end{align*}
	More generally, one can obtain similar identities when $n\in\N$. In order to write them, we introduce the standard multi-index notations
	\begin{align*}
		\alpha &:= \(\alpha_\ii\)_{\ii\in\Int{1,d}} \in \N^d,
		\\
		\n{\alpha} &:= \sum_{\ii=1}^{d} \alpha_\ii & \alpha! &:= \alpha_1!\,\alpha_2!\, \dots \alpha_d! 
		\\
		x^\alpha &:= x_1^{\alpha_1} x_2^{\alpha_2}\dots x_d^{\alpha_d} &\partial_x^\alpha &:= \partial_{x_1}^{\alpha_1}\partial_{x_2}^{\alpha_2}\dots \partial_{x_d}^{\alpha_d}
		\\
		\alpha \leq \beta &\ssi \forall\ii\in\Int{1,d}, \alpha_\ii\leq \beta_\ii.
	\end{align*}
	We then obtain the following set of identities.
	\begin{lem}\label{lem:multiply_Weyl}
		For any $n\in 2\N$ and any tempered distribution $g$ of the phase space, it holds
		\begin{subequations}
		\begin{align}\label{eq:Weyl_vs_p}
			\Weylh{g}\n{\opp}^n &= \sum_{|\alpha+\beta| = n} a_{\alpha,\beta} \(\tfrac{i\hbar}{2}\)^{\n{\beta}} \Weylh{\xi^\alpha\,\partial_x^\beta g}
			\\\label{eq:Weyl_vs_x}
			\Weylh{g}\n{x}^{n} &= \sum_{|\alpha+\beta| = n} b_{\alpha,\beta}  \(-i\hbar\)^{\n{\beta}} \Weylh{x^{\alpha}\,\partial_\xi^\beta g}
			\\\label{eq:Weyl_vs_xp}
			\Weylh{g}\n{\opp}^{n_1}\n{x}^{n} &= \sum_{\substack{|\alpha+\beta| = n_1\\ |\alpha'+\beta'| = n}} a_{\alpha,\beta} b_{\alpha',\beta'} \(-i\hbar\)^{\n{\beta'}}\(\tfrac{i\hbar}{2}\)^{\n{\beta}} \Weylh{x^{\alpha'}\partial_\xi^{\beta'}\!\!\(\xi^\alpha\,\partial_x^\beta g\)}.
		\end{align}
		\end{subequations}
		where the coefficients $a_{\alpha,\beta}$, $b_{\alpha,\beta}$ and $c = c_{\alpha,\beta,\alpha',\beta',\gamma}$ are nonnegative coefficients that do not depend on $\hbar$.
	\end{lem}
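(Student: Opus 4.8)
The plan is to reduce the whole statement to two elementary ``one-step'' identities and then iterate them. First I would prove that, for every $j\in\Int{1,d}$,
\[
	\Weylh{g}\,\opp_j = \Weylh{\xi_j\,g + \tfrac{i\hbar}{2}\,\partial_{x_j}g}
	\qquad\text{and}\qquad
	\Weylh{g}\,x_j = \Weylh{x_j\,g - \tfrac{i\hbar}{2}\,\partial_{\xi_j}g},
\]
where $\opp_j = -i\hbar\,\partial_{x_j}$ and $x_j$ denotes multiplication by $x_j$. Both follow by inserting a test function $\varphi\in C^\infty_c$ into the definition of $\Weylh{\cdot}$ and integrating by parts once: for the first identity, an integration by parts in $y_j$ makes the derivative hit $g\!\(\tfrac{x+y}{2},\xi\)$, producing the term $\tfrac{i\hbar}{2}\,\partial_{x_j}g$, and the phase $e^{-i(y-x)\cdot\xi/\hbar}$, producing $\xi_j\,g$; for the second, one writes $y_j = \tfrac{x_j+y_j}{2}+\tfrac{y_j-x_j}{2}$, observes that $\tfrac{y_j-x_j}{2}\,e^{-i(y-x)\cdot\xi/\hbar}=\tfrac{i\hbar}{2}\,\partial_{\xi_j}e^{-i(y-x)\cdot\xi/\hbar}$, and integrates by parts in $\xi_j$. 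Obtained first for smooth $g$, these identities extend to any tempered distribution $g$ by density and continuity of the Weyl transform. Only the signs of the numerical constants and their homogeneity in $\hbar$ will matter in the end; their exact values depend on the normalization convention and are otherwise irrelevant.

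Next I would iterate. Let $\mathcal D_j$ and $\mathcal E_j$ be the operators acting on phase-space symbols by $\mathcal D_j g := \xi_j\,g + \tfrac{i\hbar}{2}\,\partial_{x_j}g$ and $\mathcal E_j g := x_j\,g - \tfrac{i\hbar}{2}\,\partial_{\xi_j}g$, so that the one-step identities read $\Weylh{g}\,\opp_j = \Weylh{\mathcal D_j g}$ and $\Weylh{g}\,x_j = \Weylh{\mathcal E_j g}$. Three commutation facts make the iteration clean: the $\opp_j$ commute with one another, the multiplications by $x_j$ commute with one another, and $(\mathcal D_j)_j$ and $(\mathcal E_j)_j$ are each commuting families (multiplication by $\xi_j$ commutes with every $\partial_{x_k}$, and $\mathcal D_j$ involves no $\partial_\xi$; symmetrically for $\mathcal E_j$). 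Hence, by induction on $\n\gamma$,
\[
	\Weylh{g}\,\opp^\gamma = \Weylh{\mathcal D^\gamma g}
	= \sum_{\beta\leq\gamma}\binom{\gamma}{\beta}\(\tfrac{i\hbar}{2}\)^{\n\beta}\Weylh{\xi^{\gamma-\beta}\,\partial_x^\beta g},
	\qquad\gamma\in\N^d,
\]
the last equality being the binomial expansion of $\mathcal D^\gamma=\prod_j\mathcal D_j^{\gamma_j}$, legitimate since within each index multiplication by $\xi_j$ commutes with $\partial_{x_j}$; and likewise $\Weylh{g}\,x^\gamma = \sum_{\beta\leq\gamma}\binom{\gamma}{\beta}\(-\tfrac{i\hbar}{2}\)^{\n\beta}\Weylh{x^{\gamma-\beta}\,\partial_\xi^\beta g}$.

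Then I would pass from monomials to the powers $\n\opp^n$ and $\n x^n$. Since $n\in2\N$, the multinomial theorem gives $\n\xi^n=(\xi_1^2+\dots+\xi_d^2)^{n/2}=\sum_{\n\mu=n/2}\binom{n/2}{\mu}\xi^{2\mu}$, and since the $\opp_j^2$ commute, $\n\opp^n=\sum_{\n\mu=n/2}\binom{n/2}{\mu}\opp^{2\mu}$, with the analogous identity for $\n x^n$. Substituting into the monomial formulas above and collecting the contributions according to $\alpha:=2\mu-\beta$ (so that $\n{\alpha+\beta}=2\n\mu=n$) produces \eqref{eq:Weyl_vs_p} and \eqref{eq:Weyl_vs_x}; one checks that only $\mu=(\alpha+\beta)/2$ ever contributes, so the coefficients $a_{\alpha,\beta}$, $b_{\alpha,\beta}$ are single products of binomial and multinomial coefficients (with a harmless factor $2^{-\n\beta}$ absorbed into $b_{\alpha,\beta}$ so as to convert $(-\tfrac{i\hbar}{2})^{\n\beta}$ into the $(-i\hbar)^{\n\beta}$ written in the statement), hence nonnegative and $\hbar$-independent. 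Finally, \eqref{eq:Weyl_vs_xp} comes by composition: one rewrites $\Weylh{g}\,\n\opp^{n_1}$ using \eqref{eq:Weyl_vs_p}, then applies \eqref{eq:Weyl_vs_x} to each resulting operator $\Weylh{\xi^\alpha\partial_x^\beta g}$ composed on the right with $\n x^n$; as $\partial_\xi^{\beta'}$ commutes with $\partial_x^\beta$ and all coefficients are scalars, the two finite sums merge into the displayed double sum (and expanding $\partial_\xi^{\beta'}(\xi^\alpha\partial_x^\beta g)$ by Leibniz yields the more refined coefficients $c_{\alpha,\beta,\alpha',\beta',\gamma}$ alluded to in the statement).

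I expect the only genuinely delicate point to be the multi-index bookkeeping in the iteration — in particular, tracking which multiplications and differentiations may be commuted past one another and checking the claimed nonnegativity of the accumulated coefficients. There is, however, no analytic obstacle: everything reduces to the two integration-by-parts identities above, valid for tempered distributions, together with elementary manipulations of polynomials.
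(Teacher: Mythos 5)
Your proof is correct, and it reaches the same identities by a somewhat different and, if anything, cleaner route than the paper. The paper proves \eqref{eq:Weyl_vs_p} in one shot by writing $\n{\opp}^n=(i\hbar)^n\Delta^{n/2}$, integrating by parts $n$ times in $y$, and expanding $\Delta_y^{n/2}$ of the product $g\!\left(\tfrac{x+y}{2},\xi\right)e^{-i(y-x)\cdot\xi/\hbar}$; for \eqref{eq:Weyl_vs_x} it instead switches to the variables $u=\tfrac{x+y}{2}$, $v=y-x$, expands $\n{u+v/2}^n$ by the multinomial theorem, converts $v^\beta e^{-iv\cdot\xi/\hbar}$ to $\xi$-derivatives, and integrates by parts in $\xi$; and \eqref{eq:Weyl_vs_xp} is obtained by composing the two as you do. Your version replaces these two $n$-fold expansions by two genuinely first-order identities $\Weylh{g}\,\opp_j=\Weylh{\mathcal D_j g}$, $\Weylh{g}\,x_j=\Weylh{\mathcal E_j g}$ and then iterates, using the observation that $(\mathcal D_j)_j$ and $(\mathcal E_j)_j$ are commuting families and that, inside each $\mathcal D_j$ (resp. $\mathcal E_j$), the multiplication by $\xi_j$ (resp. $x_j$) commutes with $\partial_{x_j}$ (resp. $\partial_{\xi_j}$). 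This is the same underlying mechanism — integration by parts against the Weyl kernel — but organizing it as a product of commuting one-step symbol maps is more elementary, makes the nonnegativity and $\hbar$-independence of the accumulated coefficients transparent (they are products of binomial and multinomial coefficients, with a power of $\tfrac12$ absorbed as you note), and gives the sharper monomial identities $\Weylh{g}\opp^\gamma=\Weylh{\mathcal D^\gamma g}$ and $\Weylh{g}x^\gamma=\Weylh{\mathcal E^\gamma g}$ for free. The only small caveats are bookkeeping ones that you flag yourself: in passing from $\opp^{2\mu}$ to $\n{\opp}^n$ one must observe that only multi-indices $\alpha,\beta$ with $\alpha+\beta\in 2\N^d$ receive a nonzero coefficient, and for \eqref{eq:Weyl_vs_x} the factor $2^{-\n{\beta}}$ must be absorbed to match the normalization $(-i\hbar)^{\n{\beta}}$ of the statement. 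Both are harmless, so the argument is complete.
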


	\begin{proof}[Proof of Lemma~\ref{lem:multiply_Weyl}]
		By definition of the Weyl transform, we deduce that for any $\varphi\in C^\infty_c$ it holds
		\begin{align*}
			\Weylh{g}\n{\opp}^n\varphi &= \(i\hbar\)^n \iintd g\!\(\tfrac{x+y}{2},\xi\)e^{-i(y-x)\cdot\xi/\hbar} \Delta^\frac{n}{2}\varphi(y)\dd y\dd\xi
			\\
			&= \(i\hbar\)^n \iintd \Delta_y^\frac{n}{2}\!\(g\!\(\tfrac{x+y}{2},\xi\)e^{-i(y-x)\cdot\xi/\hbar}\) \varphi(y)\dd y\dd\xi.
		\end{align*}
		With the multi-index notation, we can expand the powers of the Laplacian of a product of functions in the following way
		\begin{align*}
			\Delta^\frac{n}{2}(fg) = \sum_{|\alpha+\beta| = n} a_{\alpha,\beta}\, \partial^\alpha f\, \partial^\beta g,
		\end{align*}
		where the $a_{\alpha,\beta}^n$ are nonnegative constants depending on $n$ and on the multi-index $\alpha$, and such that
		\begin{equation*}
			\sum_{|\alpha+\beta| = n} a_{\alpha,\beta} = (4d)^n.
		\end{equation*}
		Thus, we deduce that the integral kernel $\kappa$ of the operator $\Weylh{g}\n{\opp}^n$ is given by
		\begin{align*}
			\kappa(x,y) &= \sum_{|\alpha+\beta| = n} a_{\alpha,\beta} \(i\hbar\)^{n-\n{\alpha}} \intd 2^{-\n{\beta}} \partial_x^\beta g\!\(\tfrac{x+y}{2},\xi\) \xi^\alpha e^{-i(y-x)\cdot\xi/\hbar} \dd\xi,
		\end{align*}
		which yields, 
		\begin{equation*}
			\Weylh{g}\n{\opp}^n = \sum_{|\alpha+\beta| = n} a_{\alpha,\beta} \(\tfrac{i\hbar}{2}\)^{\n{\beta}} \Weylh{\xi^\alpha\,\partial_x^\beta g}.
		\end{equation*}
		This proves Identity~\eqref{eq:Weyl_vs_p}. To prove the second identity, we write $u := \frac{x+y}{2}$ and $v := y-x$ so that the integral kernel $\kappa_2$ of the operator $\Weylh{g}\n{x}^2$ is given by
		\begin{align*}
			\kappa_2(x,y) &= \iintd g\!\(\tfrac{x+y}{2},\xi\)e^{-i\(y-x\)\cdot\xi/\hbar} \n{y}^n \d\xi
			\\
			&= \iintd g\!\(u,\xi\)e^{-i\,v\cdot\xi/\hbar} \(\n{u+\frac{v}{2}}^2\)^\frac{n}{2} \d\xi
			\\
			&= \iintd g\!\(u,\xi\)e^{-i\,v\cdot\xi/\hbar} \(\sum_{\ii=1}^d \( u_\ii^2 + \frac{v_\ii^2}{4} + u_\ii v_\ii\) \)^\frac{n}{2} \d\xi.
		\end{align*}
		By the multinomial theorem, this can be written in the form
		\begin{align*}
			\kappa_2(x,y) &= \sum_{\n{\alpha+\beta} = n} b_{\alpha,\beta} \iintd u^\alpha g\!\(u,\xi\) v^{\beta} e^{-i\,v\cdot\xi/\hbar} \d\xi
			\\
			&= \sum_{\n{\alpha+\beta} = n} b_{\alpha,\beta} \iintd u^\alpha g\!\(u,\xi\) \(i\hbar\)^{\n{\beta}}\partial_\xi^\beta e^{-i\,v\cdot\xi/\hbar} \dd\xi
			\\
			&= \sum_{\n{\alpha+\beta} = n} b_{\alpha,\beta} \(-i\hbar\)^{\n{\beta}} \iintd u^\alpha \partial_\xi^\beta g\!\(u,\xi\) e^{-i\,v\cdot\xi/\hbar} \dd\xi,
		\end{align*}
		where we used $\n{\beta}$ times integration by parts to get the last line, and the $b_{\alpha,\beta}$ are nonnegative constants that satisfy
		\begin{align*}
			\sum_{|\alpha+\beta| = n} b_{\alpha,\beta} = \(\frac{9\,d}{4}\)^\frac{n}{2}.
		\end{align*}
		In term of operators, this yields the following identity 
		\begin{equation*}
			\Weylh{g}\n{x}^n = \sum_{|\alpha+\beta| = n} b_{\alpha,\beta} \(-i\hbar\)^{\n{\beta}} \Weylh{x^\alpha\,\partial_\xi^\beta g}.
		\end{equation*}
		This yields Identity~\eqref{eq:Weyl_vs_x}. To get the last identity, we combine the two first to get
		\begin{align*}
			\Weylh{g}\n{\opp}^{n_1}\n{x}^{n} &= \sum_{|\alpha+\beta| = n_1} a_{\alpha,\beta} \(\tfrac{i\hbar}{2}\)^{\n{\beta}} \Weylh{\xi^\alpha\,\partial_x^\beta g} \n{x}^n
			\\
			&= \sum_{\substack{|\alpha+\beta| = n_1\\ |\alpha'+\beta'| = n}} a_{\alpha,\beta} b_{\alpha',\beta'} \(-i\hbar\)^{\n{\beta'}}\(\tfrac{i\hbar}{2}\)^{\n{\beta}} \Weylh{x^{\alpha'}\partial_\xi^{\beta'}\!\!\(\xi^\alpha\,\partial_x^\beta g\)}.
		\end{align*}
	\end{proof}
	
	From this lemma, we deduce the following $\L^2$ inequalities.
	\begin{prop}\label{prop:multiply_Weyl_L2}
		Let $n\in 2\N$ and $g$ a function of the phase space, then there exists a constant $C>0$ depending only on $d$ and $n$ such that
		\begin{subequations}
		\begin{align}\label{eq:Weyl_vs_p_L2}
			\Nrm{\Weylh{g}\n{\opp}^n}{\L^2} &\leq \(4d\)^n \(\Nrm{g \n{\xi}^n}{L^2(\R^{2d})} + \(\tfrac{\hbar}{2}\)^n \Nrm{\nabla_x^n g}{L^2(\R^{2d})}\)
			\\\label{eq:Weyl_vs_x_L2}
			\Nrm{\Weylh{g}\n{x}^n}{\L^2} &\leq \(\tfrac{9d}{4}\)^n \(\Nrm{g \n{x}^n}{L^2(\R^{2d})} + \hbar^n \Nrm{\nabla_\xi^n g}{L^2(\R^{2d})}\)
			\\\nonumber
			\Nrm{\Weylh{g}\n{\opp}^{n_1}\!\n{x}^n}{\L^2} &\leq C \left(\Nrm{\(1+\n{x}^n\n{\xi}^{n_1}\)g}{L^2(\R^{2d})} + \hbar^{n_1} \Nrm{\n{x}^n \nabla_x^{n_1} g}{L^2(\R^{2d})}\right.
			\\\label{eq:Weyl_vs_xp_L2}
			&\qquad \left. +\, \hbar^n \Nrm{\n{\xi}^{n_1}\nabla_\xi^n g}{L^2(\R^{2d})} + \hbar^{n+n_1} \Nrm{\nabla_x^{n_1}\nabla_\xi^n g}{L^2(\R^{2d})}\right).
		\end{align}
		\end{subequations}
	\end{prop}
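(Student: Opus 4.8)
The plan is to reduce all three inequalities to the algebraic identities of Lemma~\ref{lem:multiply_Weyl} together with the elementary but crucial fact that the semiclassical Weyl quantization is an isometry from $L^2(\R^{2d})$ onto the Hilbert--Schmidt operators, i.e.
\[
\Nrm{\Weylh{h}}{\L^2} = \Nrm{h}{L^2(\R^{2d})}.
\]
This Plancherel-type identity is obtained by the change of variables $(x,y)\mapsto\big(\tfrac{x+y}{2},\,y-x\big)$ in the integral kernel of $\Weylh{h}$, followed by Plancherel's theorem in the $\xi$ variable, and it is precisely what compensates the factor $h^{-d/p'}$ in the definition of $\Nrm{\cdot}{\L^2}$. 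Granting this, the strategy for each inequality is the same: expand the left-hand side with Lemma~\ref{lem:multiply_Weyl}, apply the triangle inequality, use the isometry to pass to $L^2(\R^{2d})$ norms of explicit functions built from $g$, and estimate those by interpolation.

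For \eqref{eq:Weyl_vs_p_L2}, identity~\eqref{eq:Weyl_vs_p} and the isometry give $\Nrm{\Weylh{g}\n{\opp}^n}{\L^2}\le\sum_{\n{\alpha+\beta}=n}a_{\alpha,\beta}\big(\tfrac\hbar2\big)^{\n{\beta}}\Nrm{\xi^\alpha\partial_x^\beta g}{L^2}$. When $\beta=0$ I would simply use $\n{\xi^\alpha}\le\n{\xi}^n$. When $\beta\neq0$ I would take the partial Fourier transform in the $x$ variable — which leaves the factor $\xi^\alpha$ untouched — so that by Plancherel $\Nrm{\xi^\alpha\partial_x^\beta g}{L^2}$ equals the $L^2$ norm of the symbol $\xi^\alpha(2\pi i\eta)^\beta$ times $\widehat g$; since $\n{\alpha+\beta}=n$, Young's inequality yields the pointwise bound $\big(\tfrac\hbar2\big)^{\n{\beta}}\n{\xi^\alpha}\,\n{(2\pi\eta)^\beta}\le\n{\xi}^n+(\pi\hbar)^n\n{\eta}^n$, and combined with $\Nrm{\nabla_x^n g}{L^2}=(2\pi)^n\Nrm{\n{\eta}^n\widehat g}{L^2}$ this gives $\big(\tfrac\hbar2\big)^{\n{\beta}}\Nrm{\xi^\alpha\partial_x^\beta g}{L^2}\le\Nrm{\n{\xi}^n g}{L^2}+\big(\tfrac\hbar2\big)^n\Nrm{\nabla_x^n g}{L^2}$. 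Summing over $\alpha,\beta$ and using $\sum a_{\alpha,\beta}=(4d)^n$ produces \eqref{eq:Weyl_vs_p_L2}; inequality~\eqref{eq:Weyl_vs_x_L2} is proved identically, starting from \eqref{eq:Weyl_vs_x}, Fourier transforming in the $\xi$ variable instead, and using $\sum b_{\alpha,\beta}\le(9d/4)^n$.

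For the mixed inequality \eqref{eq:Weyl_vs_xp_L2} a plain partial Fourier transform no longer diagonalizes the relevant operators, since after \eqref{eq:Weyl_vs_xp} one has both multiplications and derivatives in each of the two groups of variables, so genuine interpolation is needed. I would start from \eqref{eq:Weyl_vs_xp}, expand $\partial_\xi^{\beta'}(\xi^\alpha\partial_x^\beta g)$ by the Leibniz rule, and then, after the triangle inequality and the isometry, be left with a finite sum of terms of the shape $\hbar^{\n{\beta}+\n{\beta'}}\Nrm{x^{\alpha'}\xi^{\alpha-\gamma}\partial_\xi^{\beta'-\gamma}\partial_x^\beta g}{L^2}$ with $\n{\alpha+\beta}=n_1$, $\n{\alpha'+\beta'}=n$, $\gamma\le\min(\alpha,\beta')$. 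Each such term should be controlled by a combination of the five ``corner'' quantities $\Nrm{g}{L^2}$, $\Nrm{\n{x}^n\n{\xi}^{n_1}g}{L^2}$, $\hbar^{n_1}\Nrm{\n{x}^n\nabla_x^{n_1}g}{L^2}$, $\hbar^{n}\Nrm{\n{\xi}^{n_1}\nabla_\xi^n g}{L^2}$ and $\hbar^{n+n_1}\Nrm{\nabla_x^{n_1}\nabla_\xi^n g}{L^2}$ by a weighted Gagliardo--Nirenberg interpolation applied first in the $x$ variable with $\xi$ frozen and then in the $\xi$ variable, followed by H\"older's inequality in the remaining variables and Young's inequality; the decisive check is that the power of $\hbar$ produced by the interpolation matches exactly the one carried by the corresponding corner term. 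The contributions with $\gamma\neq0$ carry strictly fewer weights and derivatives in both variables than the ``all-multiplication'' corner, so they are absorbed into $\Nrm{(1+\n{x}^n\n{\xi}^{n_1})g}{L^2}$, which is why the summand $1$ has to appear.

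I expect the only genuine difficulty to lie in the bookkeeping of this last step: tracking the powers of $\hbar$ through the Leibniz expansion and verifying that the two-variable weighted interpolation never produces a weighted-derivative norm outside the prescribed list of four corner terms, nor the wrong power of $\hbar$ in front of any of them. Everything leading to \eqref{eq:Weyl_vs_p_L2} and \eqref{eq:Weyl_vs_x_L2} is routine once the $\L^2$-isometry of $\Weylh{\cdot}$ is available.
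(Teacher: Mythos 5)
Your argument for \eqref{eq:Weyl_vs_p_L2} and \eqref{eq:Weyl_vs_x_L2} is exactly the paper's: Lemma~\ref{lem:multiply_Weyl}, the $L^2$-isometry $\Nrm{\Weylh{h}}{\L^2}=\Nrm{h}{L^2(\R^{2d})}$, a partial Fourier transform, and the weighted arithmetic--geometric mean inequality applied to $\n{\xi}^{\n{\alpha}}\big(\pi\hbar\n{\eta}\big)^{\n{\beta}}$. The paper dismisses the third inequality with ``can be proved in the same way'', so being explicit about the Leibniz expansion and interpolation step is in the spirit of the text.

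There is, however, a concrete misstep in your treatment of the $\gamma\neq 0$ terms in \eqref{eq:Weyl_vs_xp_L2}. After Leibniz, such a term is $\hbar^{\n{\beta}+\n{\beta'}}\,x^{\alpha'}\xi^{\alpha-\gamma}\partial_\xi^{\beta'-\gamma}\partial_x^{\beta}g$, and for $\gamma>0$ this still carries $\n{\beta}$ derivatives in $x$ and $\n{\beta'}-\n{\gamma}$ derivatives in $\xi$; it cannot be absorbed into the purely-multiplicative quantity $\Nrm{(1+\n{x}^n\n{\xi}^{n_1})g}{L^2}$ as you assert. What is true is that its total degree $\n{\alpha'}+\n{\beta'}+\n{\alpha}+\n{\beta}-2\n{\gamma}=n+n_1-2\n{\gamma}$ falls strictly below $n+n_1$, so one must interpolate among all \emph{five} quantities in the right-hand side, including the degree-zero corner $\Nrm{g}{L^2}$ (which is indeed why the summand $1$ must appear). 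The interpolation then produces an $\hbar$-exponent of $\n{\beta}+\n{\beta'}-\n{\gamma}$, smaller by $\n{\gamma}$ than the one carried by the term; this is harmless because $\hbar\le 1$ in the semiclassical regime, so $\hbar^{\n{\beta}+\n{\beta'}}\le\hbar^{\n{\beta}+\n{\beta'}-\n{\gamma}}$. With that correction the argument closes.
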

	
	\begin{proof}[Proof of Proposition~\ref{prop:multiply_Weyl_L2}]
		By Formula~\eqref{eq:Weyl_vs_p} and the fact that for any $u\in L^2(\R^{2d})$, $\Nrm{\Weylh{u}}{\L^2} = \Nrm{u}{L^2(\R^{2d})}$, we obtain
		\begin{align}
			\Nrm{\Weylh{g}\n{\opp}^n}{\L^2} &\leq \sum_{|\alpha+\beta| = n} a_{\alpha,\beta} \(\tfrac{\hbar}{2}\)^{\n{\beta}} \Nrm{\xi^\alpha\,\partial_x^\beta g}{L^2(\R^{2d})}.
		\end{align}
		Then, for any multi-index $\alpha$ and $\beta$ such that $\n{\alpha+\beta} = n$, by defining $\hat{g}(y,\xi)$ as the Fourier transform of $g(x,\xi)$ with respect to the variable $x$, the fact that the Fourier transform is unitary in $L^2_x$ yields
		\begin{align*}
			\(\tfrac{\hbar}{2}\)^{\n{\beta}}\Nrm{\xi^\alpha\,\partial_x^\beta g}{L^2(\R^{2d})} &= \(\tfrac{h}{2}\)^{\n{\beta}} \Nrm{\xi^\alpha\,y^\beta \,\hat{g}}{L^2(\R^{2d})}
			\\
			&\leq \tfrac{\n{\alpha}}{n}\Nrm{\n{\xi}^n \hat{g}}{L^2(\R^{2d})} + \tfrac{\n{\beta}}{n}\, \(\tfrac{h}{2}\)^{\n{\beta}}\Nrm{\n{y}^n g}{L^2(\R^{2d})}
			\\
			&\leq \Nrm{\n{\xi}^n g}{L^2(\R^{2d})} + \(\tfrac{\hbar}{2}\)^n \Nrm{\nabla_x^n \hat{g}}{L^2(\R^{2d})}.
		\end{align*}
		Moreover, as remarked in the proof of Lemma~\ref{lem:multiply_Weyl}, it holds
		\begin{equation*}
			\sum_{|\alpha+\beta| = n} a_{\alpha,\beta} = (4d)^n,
		\end{equation*}
		from which we obtain Inequality~\eqref{eq:Weyl_vs_p_L2}. Formulas~\eqref{eq:Weyl_vs_x_L2} and \eqref{eq:Weyl_vs_xp_L2} can be proved in the same way.
	\end{proof}
	
	Moreover, we can bound weighted $\L^1$ norms using $\L^2$ norms with bigger weights. This is the content of the following proposition where we recall the notation $\weight{y} = \sqrt{1+\n{y}^2}$ for the weights.
	\begin{prop}\label{prop:moment_bound}
		Let $(n,n_1)\in\N^2$ be even numbers such that $n>d/2$ and define $k:=n+n_1$. Assume $\op := \Weylh{g}$ is the Weyl transform of a function $g\in H^{n+k}_{n+k}(\R^{2d})$. Then the following inequality holds
		\begin{align*}
			\Tr{\n{\op}\n{\opp}^{n_1}} &\leq C \left(\Nrm{\weight{\xi}^k\weight{x}^n g}{L^2(\R^{2d})} + \hbar^k \Nrm{\weight{x}^n\Dx^k g}{L^2(\R^{2d})}\right.
			\\
			&\qquad\quad \left. + \hbar^n \Nrm{\weight{\xi}^k\Dv^n g}{L^2(\R^{2d})} + \hbar^{k+n} \Nrm{\Dx^k\Dv^n g}{L^2(\R^{2d})}\right).
		\end{align*}
	\end{prop}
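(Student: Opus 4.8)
The plan is to reduce the weighted trace-class bound $\Tr{\n{\op}\n{\opp}^{n_1}}$ to Hilbert--Schmidt estimates via a Cauchy--Schwarz-type argument, and then to invoke Proposition~\ref{prop:multiply_Weyl_L2}. First I would write $\Tr{\n{\op}\n{\opp}^{n_1}}$ using the polar decomposition $\op = U\n{\op}$ and the fact that $n_1$ is even, so that $\n{\opp}^{n_1} = (\n{\opp}^{n_1/2})^*\n{\opp}^{n_1/2}$; pairing this with the weight $\weight{x}^{2n} \cdot \weight{x}^{-2n}$ and using cyclicity of the trace, the quantity can be bounded by a product of two Hilbert--Schmidt norms of the shape $\Nrm{\op\,\weight{x}^{n}}{2}$ and $\Nrm{\weight{x}^{-n}\n{\opp}^{n_1}}{2}$-type factors. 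The role of $n>d/2$ is precisely to ensure that $\weight{x}^{-2n}$ is integrable against the spectral measure, i.e. that $\weight{x}^{-n}$ times a power of $\opp$ is Hilbert--Schmidt; this is where the condition $n>d/2$ enters, through $\weight{x}^{-n}\in L^2(\R^d)$ (and a symbolic-calculus bound on $\weight{x}^{-n}\n{\opp}^{n_1}\weight{x}^{-n}$, whose $\L^2$-norm is controlled by $\hbar$-uniform constants). The normalization in the definition of $\Nrm{\cdot}{\L^p}$ is chosen so that these $h$-powers combine correctly: $\Nrm{\cdot}{\L^1}\le \Nrm{\cdot}{\L^2}^{?}$-type interpolation with the correct $h^{-d}$ bookkeeping gives a genuinely $\hbar$-uniform statement.

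Concretely, the key steps in order are: (i) reduce to Hilbert--Schmidt via $\Tr{\n{\op}\n{\opp}^{n_1}} \le \Nrm{\weight{x}^n \op}{\L^2}\,\Nrm{\weight{x}^{-n}\n{\opp}^{n_1}}{\L^2}$ after inserting $\weight{x}^n\weight{x}^{-n}$ and using the semiclassical $\L^2=L^2$ isometry together with self-adjointness of the weights; (ii) bound the second factor by an $\hbar$-independent constant, using $n>d/2$ so that $\weight{x}^{-n}$ is square-integrable and the operator $\weight{x}^{-n}\n{\opp}^{n_1}\weight{x}^{-n}$ has an explicitly bounded $\L^1$ symbol (this is essentially $\Nrm{\weight{x}^{-n}}{L^2}^2$ times $\sup$ of a symbol, and the extra $k = n+n_1$ weight budget in the hypothesis absorbs the commutators needed to move $\weight{x}^{-n}$ past $\n{\opp}^{n_1}$); (iii) bound the first factor $\Nrm{\weight{x}^n\op}{\L^2}$ by expanding $\weight{x}^n = (1+\n{x}^2)^{n/2}$ as a polynomial in $\n{x}^2$ and applying Identity~\eqref{eq:Weyl_vs_x} of Lemma~\ref{lem:multiply_Weyl} together with Inequality~\eqref{eq:Weyl_vs_x_L2} of Proposition~\ref{prop:multiply_Weyl_L2}, which produces the terms $\Nrm{\weight{x}^n g}{L^2}$ and $\hbar^n\Nrm{\Dv^n g}{L^2}$ with polynomial-in-$x$ coefficients; (iv) but since step (i) leaves a bare weight on $\op$ and we actually want the mixed weight $\weight{\xi}^k\weight{x}^n$ appearing in the statement, I would instead balance the $\n{\opp}^{n_1}$ differently — keeping half of it on each side so that the first factor becomes $\Nrm{\weight{x}^n\n{\opp}^{n_1/2}\op\,\n{\opp}^{n_1/2}}{\L^?}$-type and invoking Identity~\eqref{eq:Weyl_vs_xp} and Inequality~\eqref{eq:Weyl_vs_xp_L2} for the combined $\n{\opp}^{n_1}\n{x}^n$ symbol, which is exactly the four-term right-hand side with weights $\weight{\xi}^{n_1}\weight{x}^n$; raising $n_1\mapsto k = n+n_1$ in the Sobolev exponent gives the extra room to control the commutators from step (ii). Assembling (ii)+(iii)/(iv) yields the claimed four terms.

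I expect the main obstacle to be step (ii): making precise, with $\hbar$-uniform constants, the claim that $\weight{x}^{-n}\n{\opp}^{n_1}\weight{x}^{-n}$ (or the version with $\n{\opp}^{n_1/2}$ on each side) is a bounded operator — or rather Hilbert--Schmidt after the correct normalization — without losing powers of $\hbar$. This requires either a semiclassical pseudodifferential calculus estimate (Calderón--Vaillancourt-type, as the authors mention elsewhere in the paper) or a direct kernel computation exploiting that $\weight{x}^{-n}$ decays fast enough to kill the $\opp$-derivatives; the commutator $[\n{\opp}^{n_1/2},\weight{x}^{-n}]$ contributes lower-order terms, each carrying a power of $\hbar$, and bookkeeping these against the weight budget $k=n+n_1$ is the delicate part. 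The even-integer hypotheses on $n$ and $n_1$ are there precisely so that $\n{\opp}^{n_1/2}$ and $\n{\opp}^{n}$ are honest differential operators and Leibniz expansions terminate, which keeps this bookkeeping finite and explicit. Everything else is routine application of the already-established Lemma~\ref{lem:multiply_Weyl} and Proposition~\ref{prop:multiply_Weyl_L2}, plus the trivial interpolation $\Nrm{T}{\L^1}\le \Nrm{A}{\L^2}\Nrm{B}{\L^2}$ for $T=AB$.
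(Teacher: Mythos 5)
Your general plan is in the right spirit — reduce the trace to a product of Hilbert–Schmidt norms by Cauchy–Schwarz, then apply Proposition~\ref{prop:multiply_Weyl_L2} — but the specific splitting you propose does not close, and the obstacle you flag as ``delicate'' is in fact fatal. The operator $\weight{x}^{-n}\n{\opp}^{n_1}$ (equivalently $\weight{x}^{-n}\n{\opp}^{n_1}\weight{x}^{-n}$) is \emph{not} Hilbert–Schmidt and \emph{not} trace class, for the simple reason that its symbol $\weight{x}^{-2n}\n{\hbar\xi}^{n_1}$ is unbounded in $\xi$; the putative ``trace'' is $\propto\iint \weight{x}^{-2n}\n{\hbar\xi}^{n_1}\,\d x\,\d\xi=\infty$. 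No amount of commutator bookkeeping, Calderón--Vaillancourt estimates, or exploitation of the weight budget $k=n+n_1$ can repair this, because the divergence is not a boundary/commutator effect but comes from the large-$\xi$ growth of $\n{\opp}^{n_1}$ itself. Your decomposition only regularizes in $x$, but you must also regularize in $\opp$.

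The paper's proof does exactly this. After the reduction $\Tr{\n{\op}\n{\opp}^{n_1}}\leq\Tr{\n{\op\,\n{\opp}^{n_1}}}$ (via the singular-value inequality and $\n{AB}=\n{\n{A}B}$), it inserts $\opm_n^{-1}\opm_n$ with $\opm_n:=\(1+\n{\opp}^n\)\(1+\n{x}^n\)$, \emph{not} $\weight{x}^n\weight{x}^{-n}$. The point is that $\opm_n^{-1}=w(x)\,w(-i\hbar\nabla)$ with $w(y)=(1+\n{y}^n)^{-1}$, and this \emph{is} a genuine Hilbert–Schmidt operator with exactly computable norm $\Nrm{\opm_n^{-1}}{2}=(2\pi)^{-d/2}\Nrm{w}{L^2}\Nrm{w(\hbar\,\cdot)}{L^2}=C_{d,n}\,h^{-d/2}$; the condition $n>d/2$ enters here and only here, to make $w\in L^2(\R^d)$. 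The remaining factor $\op\,\n{\opp}^{n_1}\opm_n$ then carries only positive powers of $\n{\opp}$ and $\n{x}$ on a Weyl quantization, which expands into the four combinations $\n{\opp}^{n_1}$, $\n{\opp}^{n_1}\n{x}^n$, $\n{\opp}^{n+n_1}$, $\n{\opp}^{n+n_1}\n{x}^n$, each handled by Proposition~\ref{prop:multiply_Weyl_L2}; this produces precisely the four terms of the claimed right-hand side, with $k=n+n_1$ appearing as the total power of $\opp$. So the correct statement of your step (ii) is not that a weighted power of $\opp$ is small, but that the \emph{inverse} weight $\(1+\n{\opp}^n\)^{-1}\(1+\n{x}^n\)^{-1}$ is small, and all positive powers are absorbed into the $\op$ factor where Proposition~\ref{prop:multiply_Weyl_L2} controls them.
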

	
	\begin{proof}
		First notice that since the sum of eigenvalues is always smaller than the sum of singular values (see for example Formula~(3.1) in \cite{simon_trace_2005}), it holds
		\begin{align*}
			\Tr{\n{\op}\n{\opp}^{n_1}} \leq \Tr{\n{\big(\n{\op}\n{\opp}^{n_1}\big)}},
		\end{align*}
		and from the definition of $\n{AB}$ if $A$ and $B$ are two operators, we see that $\n{AB} = (B^*A^*AB)^\frac{1}{2} = \n{\n{A}B}$, so that $\Tr{\n{\big(\n{\op}\n{\opp}^{n_1}\big)}} = \Tr{\n{\op\n{\opp}^{n_1}}}$. Defining $\opm_n := \(1+|\opp|^n\)\(1+|x|^n\)$, we deduce from the Cauchy-Schwarz inequality that
		\begin{equation}\label{eq:L2m_to_L1_CS}
			\Tr{\n{\op}\n{\opp}^{n_1}} \leq \Tr{\n{\op\n{\opp}^{n_1}}} \leq \Nrm{\op\n{\opp}^{n_1}\opm_n}{2} \Nrm{\opm_n^{-1}}{2}.
		\end{equation}
		To control the second factor in the right-hand side, we observe that it is of the form $\opm_n^{-1} = w(x)\,w(-i\hbar\nabla)$ with $w(y) = \(1+\n{y}^n\)^{-1}$, so that its Hilbert-Schmidt norm can be computed exactly (see e.g. \cite[Equation~(4.7)]{simon_trace_2005})
		\begin{equation*}
			\Nrm{\opm_n^{-1}}{2} = (2\pi)^{-d/2} \Nrm{w}{L^2} \Nrm{w(\hbar\,\cdot)}{L^2}  = C_{d,n}\, h^{-d/2},
		\end{equation*}
		where $C_{d,n} = \Nrm{w}{L^2}^2$ is finite since $n>d/2$. Therefore, by definition of the $\L^2$ norm, Inequality~\eqref{eq:L2m_to_L1_CS} leads to
		\begin{align*}
			\Tr{\n{\op}\n{\opp}^{n_1}} &\leq C_{d,n}\Nrm{\op\n{\opp}^{n_1}\opm_n}{\L^2}
			\\
			&\leq C_{d,n} \(\Nrm{\op\(\n{\opp}^{n_1}+ \n{\opp}^{n_1}\n{x}^n + \n{\opp}^{n+n_1} + \n{\opp}^{n+n_1}\n{x}^n\)}{\L^2}\).
		\end{align*}
		To get the result, we take $\op = \Weylh{g}$ and use Proposition~\ref{prop:multiply_Weyl_L2} to bound the right-hand side of the above inequality by weighted classical $L^2$ norms of $g$.
	\end{proof}

	We can now prove the main proposition of this section following the strategy explained at the beginning of this section.
	
	\begin{proof}[Proof of Proposition~\ref{prop:diag_Dhv_Weyl}]
		We use an improvement of the Calder\'{o}n-Vaillancourt theorem for Weyl operators proved by Boulkhemair in \cite{boulkhemair_l2_1999} which states that if $g\in W^{n_0,\infty}(\R^{2d})$ with $n_0 = \floor{\frac{d}{2}} +1$, then $\Weyl{g}$ is a bounded operator on $L^2$ and its operator norm is bounded by
		\begin{equation}\label{eq:Calderon_Vaillancourt}
			\Nrm{\Weyl{g}}{\B(L^2)} \leq C \nrm{g}_{W^{n_0,\infty}(\R^{2d})}.
		\end{equation}
		Since $\Weylh{g} = h^d\, \Weyl{g(\,\cdot,h\,\cdot)}$, and that for $h\leq 1$
		\begin{equation*}
			\Nrm{g(\,\cdot,h\,\cdot)}{W^{n_0,\infty}(\R^{2d})} \leq \nrm{g}_{W^{n_0,\infty}(\R^{2d})},
		\end{equation*} 
		by taking $g=\nabla_\xi f$, we deduce from Inequality~\eqref{eq:Calderon_Vaillancourt} and the definition of the $\L^\infty$ norm that
		\begin{equation*}
			\Nrm{\Dhv{\Weylh{f}}}{\L^\infty} \leq C \nrm{\nabla_\xi f}_{W^{n_0,\infty}(\R^{2d})},
		\end{equation*} 
		uniformly in $\hbar$. Moreover, taking $g = \Dv f$ in Proposition~\ref{prop:moment_bound} yields
		\begin{equation*}
			\Tr{\n{\Dhv{\Weylh{f}}}\n{\opp}^{n_1}} \leq C \Nrm{\Dv f}{H^\sigma_\sigma(\R^{2d})}.
		\end{equation*}
		The result then follows by combining these two inequalities to bound the right-hand-side of the interpolation inequality~\eqref{eq:interpolation}.
	\end{proof}
	
\section{Proof of Theorem~\ref{thm:CV_Hartree} and Theorem~\ref{thm:Lp_conv}}\label{sec:proof_thm_1_2}

	In this section, we will start by proving a stability estimate similar to the the inequality used in the classical case and then use the results of Section~\ref{sec:regularity} and the propagation of regularity for the Vlasov equation to get the proof of Theorem~\ref{thm:CV_Hartree} and then the proof of Theorem~\ref{thm:Lp_conv}. The conditional result is stated in the following proposition.
	\begin{prop}\label{prop:conditional_estimate}
		Let $K = \frac{1}{\n{x}^a}$ with $a\in\(\(\tfrac{d}{2}-2\)_+,d-1\)$ and assume $\op$ is a solution of the Hartree equation~\eqref{eq:Hartree} with initial condition $\op^\init\in\L^1_+$ and $f\geq 0$ is a solution of the Vlasov equation verifying
		\begin{subequations}
		\begin{align}\label{eq:condition_regu_f}
			f&\in L^\infty_\loc(\R_+, W^{n_0+1,\infty}(\R^{2d})\cap H^{\sigma+1}_{\sigma}(\R^{2d}))
			\\\label{eq:condition_regu_rho_f}
			\rho_f&\in L^\infty_\loc(\R_+,L^1\cap H^\nu)\,,
		\end{align}
		\end{subequations}
		where $n_0=\floor{d/2}+1$, $(n,n_1)\in(2\N)^2$ are such that $n>d/2$ and $n_1\geq \frac{d}{\fb-1}$ and we used the notations $\sigma=2n+n_1$ and $\nu=(n+a+2-d)_+$ and $\fb = \frac{d}{a+1}$. Then the following inequality holds
		\begin{align*}
			\Tr{\n{\op-\op_f}} \leq \(\Tr{\n{\op^\init - \op_f^\init}} + C_f(t)\, \hbar\) e^{\lambda_f(t)},
		\end{align*}
		where
		\begin{align*}
			\lambda_f(t) &= C_{d,n_1,a} \int_0^t \Nrm{\Dv f}{W^{n_0,\infty}(\R^{2d})\cap H^\sigma_\sigma(\R^{2d})} \d s
			\\
			C_f(t) &= C_{d,n_1,a} \int_0^t \Nrm{\rho_f(s)}{L^1\cap H^\nu} \Nrm{\nabla_\xi^2 f(s)}{H^{2n}_{2n}(\R^{2d})} e^{-\lambda_f(s)}\d s.
		\end{align*}
	\end{prop}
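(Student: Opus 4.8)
The plan is to mimic, at the quantum level, the $L^1$ weak-strong stability argument of Proposition~\ref{prop:classical_case}, replacing the classical transport semigroup by the Hartree propagator and the gradient $\Dv f_2$ by the operator $\Dhv{\op_f}$, whose diagonal $L^p$-norm is controlled uniformly in $\hbar$ by Proposition~\ref{prop:diag_Dhv_Weyl}. First I would observe that $\op_f = \Weylh{f}$ does not solve the Hartree equation~\eqref{eq:Hartree} exactly but a perturbed equation
\begin{equation*}
	i\hbar\,\dpt\op_f = [\tfrac{\n{\opp}^2}{2} + K*\rho_f, \op_f] + \hbar\,\mathcal{R}_\hbar,
\end{equation*}
where $\mathcal{R}_\hbar$ is the semiclassical remainder measuring the defect between the Moyal bracket and the Poisson bracket; one shows $\Nrm{\mathcal{R}_\hbar}{\L^1} \leq C\Nrm{\rho_f}{L^1\cap H^\nu}\Nrm{\nabla_\xi^2 f}{H^{2n}_{2n}}$, which is where the regularity assumption~\eqref{eq:condition_regu_rho_f} on $\rho_f$ and the $H^{2n}_{2n}$-bound on $\nabla_\xi^2 f$ enter, using Proposition~\ref{prop:multiply_Weyl_L2} and Proposition~\ref{prop:moment_bound} to convert phase-space Sobolev norms into $\L^2$ and $\L^1$ operator norms. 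The Calderón-Vaillancourt bound (Boulkhemair) controls the potential-difference terms via $\Nrm{K*(\rho-\rho_f)}{L^\infty}$ together with Hölder in Lorentz spaces, exactly as in~\eqref{eq:Holder_lorentz}, since $\nabla K \in L^{\fb,\infty}$ with $\fb = \frac{d}{a+1}$ under the stated range of $a$.

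Next I would set $\op - \op_f$ and differentiate $\Tr{\n{\op - \op_f}}$ in time. Writing $\mathcal{U}(t,s)$ for the unitary propagator of the (time-dependent) mean-field Hamiltonian $H(t) = \frac{\n{\opp}^2}{2} + K*\rho(t)$ associated to the Hartree solution, one obtains a Duhamel formula
\begin{equation*}
	\op(t) - \op_f(t) = \mathcal{U}(t,0)\,(\op^\init - \op_f^\init)\,\mathcal{U}(t,0)^* + \frac{1}{i\hbar}\int_0^t \mathcal{U}(t,s)\Bigl([K*(\rho - \rho_f)(s), \op_f(s)] - i\hbar\,\mathcal{R}_\hbar(s)\Bigr)\mathcal{U}(t,s)^*\,\d s.
\end{equation*}
Taking the trace norm, the conjugation by $\mathcal{U}$ is an isometry, so the initial term contributes $\Tr{\n{\op^\init - \op_f^\init}}$, the remainder contributes $\int_0^t \Nrm{\mathcal{R}_\hbar(s)}{\L^1}\d s$, and the main commutator term must be bounded by something proportional to $\Tr{\n{\op - \op_f}}$. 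For that I would write $[K*(\rho-\rho_f), \op_f] = \int_0^1 \com{(\rho-\rho_f)\,\dot{*}\,\nabla K(\cdot - z), \ldots}$ — more precisely, exploiting that $\rho - \rho_f$ is the diagonal of $\op - \op_f$, one integrates against $\nabla K$ and uses the identity $\frac{1}{i\hbar}[K*\mu, \op_f] = \int \mu(z)\,\frac{1}{i\hbar}[K(\cdot-z),\op_f]\,\d z$, then a first-order Taylor/fundamental-theorem-of-calculus step gives $\frac{1}{i\hbar}[K(\cdot - z),\op_f] = \int_0^1 (\nabla K)(\cdot - z)\,\dot{}\,\Dhv\op_f$-type expressions, so that the trace-norm bound reduces to $\Nrm{\nabla K}{L^{\fb,\infty}}\,\Nrm{\Diag{\n{\Dhv\op_f}}}{L^{\fb'}}\,\Nrm{\rho - \rho_f}{L^1}$, and finally $\Nrm{\rho - \rho_f}{L^1}\leq \Tr{\n{\op-\op_f}}$ by~\eqref{eq:L1-trace}. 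The condition $n_1 \geq \frac{d}{\fb - 1}$ ensures $p = 1 + \frac{n_1}{d} \geq \fb'$, so Proposition~\ref{prop:diag_Dhv_Weyl} indeed delivers a uniform-in-$\hbar$ bound for $\Nrm{\Diag{\n{\Dhv\op_f}}}{L^{\fb'}}$ in terms of $\Nrm{\Dv f}{W^{n_0,\infty}\cap H^\sigma_\sigma}$.

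Assembling these, $\dt \Tr{\n{\op - \op_f}} \leq C_{d,n_1,a}\Nrm{\Dv f}{W^{n_0,\infty}\cap H^\sigma_\sigma}\,\Tr{\n{\op-\op_f}} + C_{d,n_1,a}\Nrm{\rho_f}{L^1\cap H^\nu}\Nrm{\nabla_\xi^2 f}{H^{2n}_{2n}}\,\hbar$, and Grönwall's lemma yields the stated inequality with $\lambda_f$ and $C_f$ as written. The main obstacle is the careful bookkeeping of the semiclassical remainder $\mathcal{R}_\hbar$: one must show it is genuinely $O(\hbar)$ in $\L^1$ (not merely in a weak topology), which requires the stationary-phase/Moyal-expansion identities of Lemma~\ref{lem:multiply_Weyl} pushed to the order where the next term carries an explicit $\hbar$, and then the weighted $\L^1$-via-$\L^2$ trick of Proposition~\ref{prop:moment_bound} to absorb the phase-space polynomial weights generated by the commutator with $K*\rho_f$ — this is precisely the point where the moment and Sobolev regularity hypotheses on $f$ and $\rho_f$ are consumed. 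A secondary technical point is justifying the Duhamel formula and the differentiation of the trace norm rigorously (approximation of $\n{\cdot}$ by smooth functions, or working with $\Tr{(\op-\op_f)\,\mathrm{sgn}(\op - \op_f)}$ and controlling the derivative of the sign part), which is standard but must be done with the singular $K$ in hand, using that $\op^\init\in\L^1_+$ and the finiteness of energy to make all traces finite.
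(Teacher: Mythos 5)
Your overall architecture coincides with the paper's: regard $\op_f=\Weylh{f}$ as a near-solution of Hartree, conjugate $\op-\op_f$ by the Hartree propagator $\cU_{t,s}$, write a Duhamel formula whose two source terms are the commutator $\com{K*(\rho-\rho_f),\op_f}$ and a semiclassical remainder (the operator $B_t$ of~\eqref{eq:def_B}, which is your $-i\hbar\mathcal R_\hbar$), bound each, use $\Nrm{\rho-\rho_f}{L^1}\leq\Tr{\n{\op-\op_f}}$, and finish by Gr\"onwall. The $B_t$ step and the appeal to Propositions~\ref{prop:diag_Dhv_Weyl}, \ref{prop:multiply_Weyl_L2}, \ref{prop:moment_bound} are correctly identified (modulo a small slip: your displayed bound for $\Nrm{\mathcal R_\hbar}{\L^1}$ is missing a factor $\hbar$, since Proposition~\ref{prop:B-term} gives $\Nrm{B_t}{\L^1}\lesssim\hbar^2$; the factor reappears correctly in your final differential inequality, so this is cosmetic).

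The genuine gap is in the commutator bound, which is the paper's main technical point (Theorem~\ref{thm:key_ineq}). You describe it as a ``first-order Taylor/FTC step'' yielding
$\Nrm{\nabla K}{L^{\fb,\infty}}\Nrm{\Diag{\n{\Dhv\op_f}}}{L^{\fb'}}\Nrm{\rho-\rho_f}{L^1}$,
but this does not follow by the stated route, for two reasons. First, the FTC rewriting
$K(x-z)-K(y-z)=\int_0^1 \nabla K\big((1-\theta)x+\theta y-z\big)\cdot(x-y)\dd\theta$
produces, for $\theta\in(0,1)$, a kernel $\nabla K((1-\theta)x+\theta y-z)\cdot\frac{x-y}{i\hbar}\r_f(x,y)$ in which the factor $\nabla K((1-\theta)x+\theta y-z)$ is \emph{not} a multiplication operator from either side; one cannot peel it off in a Schatten--H\"older step. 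The paper circumvents this by first decomposing $K$ as a superposition of Gaussians via~\eqref{eq:laplace_transform}, then using
$e^{-\pi|x|^2 t}-e^{-\pi|y|^2 t}=-\pi t\int_0^1 (x-y)\cdot(x+y)\,e^{-\pi\theta|x|^2 t}e^{-\pi(1-\theta)|y|^2 t}\dd\theta$,
which \emph{factorizes} the two space variables into a left multiplication by $\phi_\theta(x)$ and a right multiplication by $\varphi_{1-\theta}(y)$; the trace norm of $\phi_\theta\cdot\Dhv\op_f\,\varphi_{1-\theta}$ is then controlled through a Cauchy--Schwarz argument on the spectral decomposition of $\Dhv\op_f$, which is exactly what brings in the diagonal density $\rho_1=\Diag{\n{\Dhv\op_f}}$. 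Second, even granting such a reduction, $L^{\fb,\infty}$ pairs with the Lorentz space $L^{\fb',1}$, not with $L^{\fb'}$, and the Gaussian decomposition only yields a critical $t$-integral at the exponent $\fb'$; the paper therefore obtains an interpolated bound in $L^{\fb'-\eps}\cap L^{\fb'+\eps}$ (Proposition~\ref{prop:key_ineq}, Theorem~\ref{thm:key_ineq}), which is what Proposition~\ref{prop:diag_Dhv_Weyl} with $n_1>\frac{d}{\fb-1}$ is then invoked to control. Your hypotheses provide the right objects, but the mechanism by which the commutator estimate is actually achieved is absent from the proposal and cannot be supplied by the Taylor/Lorentz--H\"older shortcut you invoke.
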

	
	\begin{remark}
		It is actually sufficient to assume that $f\geq 0$ holds at $t=0$ since this implies that it holds at any time $t\geq 0$. 
	\end{remark}

	In analogy with the classical case (cf. proof of Proposition~\ref{prop:classical-Lp}), we introduce the two-parameter semigroup $\cU_{t,s}$ such that $\cU_{s,s}=1$ and defined for $t>s$ by 
	\begin{equation*}
		i\,\hbar\,\partial_t\,\cU_{t,s} = H(t)\,\cU_{t,s}\,,
	\end{equation*}
	where $H$ is the Hartree Hamiltonian~\eqref{eq:Hartree-Ham}. We consider the quantity
	\begin{align*}
		 i\,\hbar\,\partial_t \(\cU^*_{t,s}\,(\op(t)-\op_f(t))\,\cU_{t,s}\) =\ &\cU^*_{t,s} \com{K*(\rho(t)-\rho_f(t)),\op_f(t)} \cU_{t,s}
		 \\
		 &+\cU^*_{t,s}\,B_t\ \cU_{t,s}\,,
	 \end{align*}
	 where $B_t$ is an operator defined through its integral kernel by
	 \begin{equation}\label{eq:def_B}
	 	B_t(x,y)=\((K*\rho_f)(x)-(K*\rho_f)(y)-(\nabla K*\rho_f)\left(\dfrac{x+y}{2}\right)\cdot(x-y)\) \r_f(x,y)\,.
	 \end{equation}
	 Using Duhamel's formula and taking the Schatten $p$ norm (recall that $\cU_{t,s}$ is a unitary operator), we get
	 \begin{multline}\label{eq:duhamel-Lp}
	 	\Nrm{\op(t) - \op_f(t)}{p}\leq \Nrm{\op^\init - \op_f^\init}{p} + \dfrac{1}{\hbar}\int_0^t  \Nrm{B_t}{p}\d s
	 	\\
	 	+\dfrac{1}{\hbar}\int_0^t \int |\rho(s,z)-\rho_f(s,z)|\,\Nrm{[K(\cdot-z),\op_f(s)]}{p}\d z\dd s.
	 \end{multline}  
	 We now take $p=1$, i.e. the trace norm, and we have to bound each term on the right-hand side of Inequality~\eqref{eq:duhamel-Lp} in order to obtain a Gr\"{o}nwall type inequality which will prove Proposition~\ref{prop:conditional_estimate}. Note that we will then use again Inequality~\eqref{eq:duhamel-Lp} with $p>1$ together with Theorem~\ref{thm:CV_Hartree} to prove Theorem~\ref{thm:Lp_conv}.
 
\subsection{The commutator inequality}

	Generalizing \cite[Lemma 3.1]{porta_mean_2017}, we obtain the quantum equivalent of Inequality~\eqref{eq:Holder_lorentz}, which is is the following inequality for the trace norm of the commutator of $K$ and a trace class operator $\op$.
	\begin{thm}\label{thm:key_ineq}
		Let $a\in(-1,d-1)$, $K(x) = \frac{1}{\n{x}^a}$ or $K(x)=\ln(\n{x})$ when $a=0$. Then 
		\begin{equation*}
			\nabla K\in L^{\fb,\infty} \ \text{ with }\ \fb = \fb_a := \tfrac{d}{a+1}.
		\end{equation*}
		Let $\fb'$ be the conjugated H\"{o}lder exponent of~$\fb$. Then for any $\eps\in(0,\fb'-1]$, there exists a constant $C>0$ such that
		\begin{equation*}
			\Tr{\n{\com{K(\cdot-z),\op}}} \leq C\, h \Nrm{\Diag{\n{\Dhv{\op}}}}{L^{\fb'-\eps}}^{\frac{1}{2} + \tilde{\eps}} \Nrm{\Diag{\n{\Dhv{\op}}}}{L^{\fb'+\eps}}^{\frac{1}{2} - \tilde{\eps}},
		\end{equation*}
		for any $\tilde{\eps}\in(0,\frac{\eps}{2\fb'})$ and with the additional assumption $\eps < \fb_3'-\fb'$ if $d\geq 4$.
	\end{thm}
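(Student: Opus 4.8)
\textbf{Proof plan for Theorem~\ref{thm:key_ineq}.}
The plan is to mimic the classical H\"older--Lorentz estimate~\eqref{eq:Holder_lorentz} at the operator level: the commutator $\com{K(\cdot-z),\op}$ should play the role of the convolution $\nabla K * g$, with the diagonal of $\n{\Dhv\op}$ playing the role of $g = \Nrm{\nabla_\xi f_2}{L^1_\xi}$. First I would record the elementary fact that $\nabla K \in L^{\fb,\infty}$ with $\fb = \frac{d}{a+1}$: for $K(x) = \n{x}^{-a}$ one has $\n{\nabla K(x)} \simeq \n{x}^{-(a+1)}$, whose weak-$L^{\fb}$ norm is finite precisely when $\fb(a+1) = d$, and the logarithmic case $a=0$ gives $\n{\nabla K} \simeq \n{x}^{-1} \in L^{d,\infty}$, consistent with $\fb=d$. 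The condition $a\in(-1,d-1)$ is exactly $\fb\in(1,\infty)$, so the conjugate exponent $\fb'$ is finite.

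Next I would write the commutator using the fundamental theorem of calculus along the segment joining $x$ and $y$:
\begin{equation*}
	K(x-z) - K(y-z) = -\int_0^1 \nabla K\!\(\lambda(y-z)+(1-\lambda)(x-z)\)\cdot (y-x)\dd\lambda,
\end{equation*}
so that the integral kernel of $\com{K(\cdot-z),\op}$ is $\big(K(x-z)-K(y-z)\big)\,\r(x,y)$, i.e. a superposition over $\lambda$ of operators of the form ``multiplication by a shifted $\nabla K$, paired against $(x-y)\r(x,y)$''. The key observation is that $(x-y)\r(x,y)$ is, up to the factor $i\hbar$, the integral kernel of $\Dhv\op = \com{x/(i\hbar),\op}$; hence the commutator is $h$ times an average of operators built from $\nabla K$ and $\Dhv\op$. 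Taking the trace norm, bounding the $\lambda$-integral by a sup, and translating in $z$ (which only shifts $\nabla K$), I would reduce matters to an estimate of the form
\begin{equation*}
	\Tr{\n{\com{K(\cdot-z),\op}}} \leq C\, h\, \sup_{z'} \Tr{\n{\,\n{\nabla K(\cdot - z')}\,\Dhv\op\,}},
\end{equation*}
or more precisely to controlling $\tr$ against the diagonal of $\n{\Dhv\op}$ weighted by $\n{\nabla K}$, which is the operator analogue of $\Nrm{\nabla K * g}{L^\infty}$ with $g = \Diag{\n{\Dhv\op}}$.

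At this point the classical H\"older inequality for Lorentz spaces gives $\Nrm{\nabla K * g}{L^\infty} \leq \Nrm{\nabla K}{L^{\fb,\infty}} \Nrm{g}{L^{\fb',1}}$, so the ``natural'' bound would involve $\Nrm{\Diag{\n{\Dhv\op}}}{L^{\fb',1}}$. The stated theorem instead produces the geometric-mean-type quantity $\Nrm{\Diag{\n{\Dhv\op}}}{L^{\fb'-\eps}}^{1/2+\tilde\eps}\,\Nrm{\Diag{\n{\Dhv\op}}}{L^{\fb'+\eps}}^{1/2-\tilde\eps}$, which is a Lorentz-norm-free replacement: one interpolates the Lorentz space $L^{\fb',1}$ (or rather embeds it into) between the two nearby Lebesgue spaces $L^{\fb'-\eps}$ and $L^{\fb'+\eps}$, using that $\Nrm{g}{L^{\fb',1}} \lesssim \Nrm{g}{L^{\fb'-\eps}}^{\theta}\Nrm{g}{L^{\fb'+\eps}}^{1-\theta}$ with $\theta$ close to $1/2$ (the precise $\theta = 1/2 + \tilde\eps$ coming from the weights in the Lorentz interpolation, and $\tilde\eps < \eps/(2\fb')$ being the admissible range). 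The subtlety hidden in the technical restriction $\eps < \fb_3' - \fb'$ when $d\geq 4$ is that one cannot quite interpolate in $L^\infty$ on the operator side directly; instead one must pass through the $d=3$ critical exponent, using the refined commutator/Sobolev machinery only available for $\fb' \leq \fb_3'$, so $\eps$ must be small enough that $\fb'+\eps$ stays below $\fb_3'$.

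The main obstacle I expect is making the operator-level ``H\"older against the diagonal'' step rigorous: the product $\n{\nabla K(\cdot-z')}\,\n{\Dhv\op}$ is a product of a (singular) multiplication operator with a trace-class operator, and its trace norm must genuinely be controlled by $\int \n{\nabla K(x-z')}\,\big(\Diag{\n{\Dhv\op}}\big)(x)\dd x$. This requires a careful argument --- e.g. writing $\Dhv\op = \n{\Dhv\op}^{1/2}\,U\,\n{\Dhv\op}^{1/2}$ via polar decomposition, using cyclicity of the trace and the fact that for a positive operator $A$ one has $\Tr{V A} \leq \int V(x)\,A(x,x)\dd x$ when $V\geq 0$, together with a duality/approximation argument to handle the singularity of $\nabla K$ (splitting near and far from the diagonal, or regularizing $K$ and passing to the limit). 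Everything else --- the FTC expansion, the translation invariance, the Lorentz interpolation of Lebesgue norms --- is routine once this core inequality is established.
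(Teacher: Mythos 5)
Your overall philosophy — treat the commutator as $h$ times a derivative, compare with the classical Lorentz--H\"older estimate $\Nrm{\nabla K*g}{L^\infty}\leq\Nrm{\nabla K}{L^{\fb,\infty}}\Nrm{g}{L^{\fb',1}}$, and replace the Lorentz norm by a product of nearby Lebesgue norms — is exactly the guiding intuition behind the theorem. But the core technical step, which you already flag as the ``main obstacle'', is where the plan genuinely breaks down, and the gap is not of the routine approximation/regularisation type you suggest.

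The issue is that your fundamental-theorem-of-calculus expansion
\begin{equation*}
	K(x-z)-K(y-z) = -\int_0^1 \nabla K\!\(\lambda(y-z)+(1-\lambda)(x-z)\)\cdot (y-x)\dd\lambda
\end{equation*}
does \emph{not} express the commutator kernel as a superposition of operators of the form ``multiplication operator $\times\;\Dhv\op\;\times$ multiplication operator''. The argument $\lambda(y-z)+(1-\lambda)(x-z)$ couples $x$ and $y$, so $\nabla K$ evaluated there is a genuine two-variable symbol, not a translate of $\nabla K$ acting from one side. There is no ``translation in $z$'' that decouples it, and consequently the reduction to $\sup_{z'}\Tr{\n{\,\n{\nabla K(\cdot-z')}\,\Dhv\op\,}}$ does not go through. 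What the paper does instead is first Laplace-transform the power-law kernel into Gaussians, $\n{x}^{-a}=c_a\int_0^\infty t^{a/2-1}e^{-\pi\n{x}^2t}\dd t$, and only then apply the FTC trick in the interpolation parameter $\theta$ to the \emph{product} $e^{-\pi\theta\n{x}^2t}e^{-\pi(1-\theta)\n{y}^2t}$. The crucial point is that this Gaussian factor \emph{is} a product of a function of $x$ alone and a function of $y$ alone, yielding $\phi_\theta\cdot\Dhv\op\,\varphi_{1-\theta}$; one can then pass to the diagonal $\Diag{\n{\Dhv\op}}$ by spectral decomposition and Cauchy--Schwarz, apply classical H\"older for the Gaussian factors, and compute the Gaussian integrals explicitly. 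The two Lebesgue exponents $\fb'\pm\eps$ do not come from embedding $L^{\fb',1}$ between nearby Lebesgue spaces: they come from splitting the $t$-integral at a threshold $R$, choosing a larger exponent for small $t$ and a smaller one for large $t$ so that both pieces converge, and then optimising over $R$. Likewise the constraint $\eps<\fb_3'-\fb'$ for $d\geq 4$ has nothing to do with ``interpolating in $L^\infty$'' or a ``$d=3$ critical exponent'': it is simply the requirement $q'<4/d$ needed for the explicit Beta-type integral $\int_0^1\theta^{-(d+2p')/4p'}(1-\theta)^{-d/4q'}\dd\theta$ to converge. In short, the Gaussian decomposition is not a refinement of your plan but the device without which the factorisation step fails; to repair your proposal you would essentially have to import that decomposition wholesale.
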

	
	\begin{remark}
		In the case of Coulomb interaction and $d=3$, we have $K(x) = \frac{1}{\n{x}}$, $\fb = \fb_1 = \frac{3}{2}$ and $\nabla K\in L^{\frac{3}{2},\infty}$. Thus for any $\eps\in(0,2]$, there exists a constant $C>0$ such that
		\begin{equation*}
			\Tr{\n{\com{K(\cdot-z),\op}}} \leq C\,h \Nrm{\Diag{\n{\Dhv{\op}}}}{L^{3-\eps}}^{\frac{1}{2} + \tilde{\eps}} \Nrm{\Diag{\n{\Dhv{\op}}}}{L^{3+\eps}}^{\frac{1}{2} - \tilde{\eps}},
		\end{equation*}
		for any $\tilde{\eps}\in(0,\frac{\eps}{6})$ 
	\end{remark}
	
	This theorem is a corollary of the slightly more precise following proposition.
	
	\begin{prop}\label{prop:key_ineq}
		For any $\delta\in\(\(\frac{1}{\fb_1'}-\frac{1}{\fb'}\)_+, 1-\frac{1}{\fb'}\)$ and $q\in\lt(\frac{\fb'}{1-\delta\fb'},\infty\rt]$, there exists a constant $C>0$ such that
		\begin{equation}\label{eq:key_ineq}
			\Tr{\n{\com{K(\cdot-z),\op}}} \leq C\, h \Nrm{\Diag{\n{\Dhv{\op}}}}{L^p}^\theta \Nrm{\Diag{\n{\Dhv{\op}}}}{L^q}^{1-\theta},
		\end{equation}
		where $\frac{1}{p} = \frac{1}{\fb'} + \delta$ and $\theta = \delta/\(\frac{1}{p}-\frac{1}{q}\)$ and with the additional assumption that $q<\fb_3'$ if $d\geq 4$.
	\end{prop}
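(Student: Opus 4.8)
The goal is to prove the operator analogue of the Hölder-in-Lorentz inequality~\eqref{eq:Holder_lorentz}: \eqref{eq:key_ineq} is its quantum version, and the argument is a refinement of the commutator estimate of \cite[Lemma~3.1]{porta_mean_2017}.

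\emph{Relating the commutator to $\Dhv\op$ and extracting $h$.} At the level of integral kernels, $[K(\cdot-z),\op](x,y)=\bigl(K(x-z)-K(y-z)\bigr)\op(x,y)$. Writing $K(x-z)-K(y-z)=(x-y)\cdot\int_0^1\nabla K\bigl(tx+(1-t)y-z\bigr)\dd t$ and using the elementary identity $(x-y)\op(x,y)=i\hbar\,(\Dhv\op)(x,y)$, which is just the kernel form of $\Dhv\op=\com{x/(i\hbar),\op}$, one gets
\[
[K(\cdot-z),\op]=i\hbar\int_0^1\mathcal{M}_t\bigl(\Dhv\op\bigr)\dd t,\qquad (\mathcal{M}_t\opm)(x,y):=\nabla K\bigl(tx+(1-t)y-z\bigr)\cdot\opm(x,y).
\]
Since $\hbar=h/(2\pi)$, the prefactor already produces the factor $h$ in~\eqref{eq:key_ineq}, and it remains to estimate the trace norm of the right-hand side by the diagonal density of $\n{\Dhv\op}$.

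\emph{The quantum Hölder inequality.} The heart is a trace-norm bound
\[
\Tr{\n{\mathcal{M}_t\opm}}\le C\,\Nrm{\nabla K}{L^{\fb,\infty}}\,\Nrm{\Diag{\n{\opm}}}{L^{\fb',1}}
\]
for self-adjoint trace-class $\opm$, uniformly in $t\in[0,1]$ (and with $\Nrm{\nabla K}{L^{\fb,\infty}}$ independent of $z$). For $t\in\{0,1\}$ the left-hand side is the trace norm of a one-sided product by the multiplication operator $M_{\nabla K(\cdot-z)}$, and a general $t$ is reduced to these endpoints by a change of variables in the kernel intertwining the $t$-quantization with multiplication, the distortion of the trace norm and of the diagonal density being controlled uniformly in $t$. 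To prove the endpoint inequality I would split $\opm$ into positive and negative parts and, for $\opm\ge0$ (using that the kernel of a positive operator is controlled off the diagonal by its diagonal density), decompose $\nabla K(\cdot-z)=\nabla K(\cdot-z)\indic_{B_R(z)}+\nabla K(\cdot-z)\indic_{B_R(z)^c}$: the inner piece lies in every $L^{r}$ with $r<\fb$, with norm $\lesssim R^{d(1/r-1/\fb)}$, and the outer piece in every $L^{r}$ with $r>\fb$, with norm $\lesssim R^{d(1/r-1/\fb)}$ (in particular in $L^\infty$, $\lesssim R^{-(a+1)}$). Each contribution is then estimated by Hölder, pairing the relevant $L^{r}$ norm of that piece against an $L^{r'}$ norm of $\Diag{\n{\opm}}$, together with the operator Hölder inequality $\Nrm{AB}{1}\le\Nrm{A}{r}\Nrm{B}{r'}$. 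Optimizing $R$ converts the sum into the geometric mean $\Nrm{\Diag{\n\opm}}{L^p}^\theta\Nrm{\Diag{\n\opm}}{L^q}^{1-\theta}$; the admissible ranges $\delta\in\bigl((1/\fb_1'-1/\fb')_+,1-1/\fb'\bigr)$ and $q>\fb'/(1-\delta\fb')$ are exactly the conditions making both pieces of $\nabla K$ belong to the Lebesgue spaces used ($q'<\fb$ and $p'>\fb$, the extra constraint $q<\fb_3'$ for $d\ge 4$ tracking where $\nabla K$ leaves the locally-$L^2$ range), and the weight $\theta=\delta/(1/p-1/q)$ comes out of the two powers of $R$. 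Proposition~\ref{prop:key_ineq} then follows upon integrating over $t$ and absorbing $\Nrm{\nabla K}{L^{\fb,\infty}}$ into $C$; Theorem~\ref{thm:key_ineq} is the symmetric choice $p=\fb'-\eps$, $q=\fb'+\eps$.

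\emph{Main obstacle.} The genuinely hard case is $\fb\le2$ — which includes the three-dimensional Coulomb potential, $\fb=3/2$ — since then $\n{\nabla K}^2$ is not locally integrable near $z$: individual terms $\mathcal{M}_t(\Dhv\op)$ need not be trace class, so the $t$-integral cannot be estimated term by term, and one cannot split the trace norm crudely into two Hilbert–Schmidt factors. One must exploit the $\L^1$-structure in full (duality against bounded operators, control of off-diagonal kernel entries by the diagonal density of $\n\opm$), balance the truncation radius against two conjugate Lebesgue exponents at once, and keep the $t$-average together — the divided-difference kernel $\int_0^1\nabla K(tx+(1-t)y-z)\dd t$ being milder than $\nabla K$ itself. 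Making all of this uniform in $t\in[0,1]$, and verifying the $d\ge4$ dimensional restriction, are the places that require care.
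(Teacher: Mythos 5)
Your decomposition $K(x-z)-K(y-z)=(x-y)\cdot\int_0^1\nabla K(tx+(1-t)y-z)\,\d t$ is genuinely different from the paper's, and it runs into a gap that the paper's construction is specifically designed to avoid. The paper instead expands $K$ via its Laplace/heat-kernel representation and then applies the $\theta$-derivative trick $e^{-\pi\n{x}^2 t}-e^{-\pi\n{y}^2 t}=\int_0^1\tfrac{\d}{\d\theta}\big(e^{-\pi\theta\n{x}^2t}e^{-\pi(1-\theta)\n{y}^2t}\big)\,\d\theta$, obtaining $K(x-z)-K(y-z)$ as an integral of products $\phi_\theta(x)\varphi_{1-\theta}(y)+\varphi_\theta(x)\phi_{1-\theta}(y)$, where $\phi_k,\varphi_k$ are \emph{smooth, bounded Gaussian-type functions} of one variable each. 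This separated form lets them sandwich $\Dhv\op$ between two left/right multiplications, apply Cauchy--Schwarz over the spectral decomposition to produce $\big(\int\n{\phi_\theta}^2\rho_1\big)^{1/2}\big(\int\n{\varphi_{1-\theta}}^2\rho_1\big)^{1/2}$ with $\rho_1=\Diag\n{\Dhv\op}$, and then shift all of the singularity of $K$ into the heat-kernel time $t$, where it is handled by splitting $t\in(0,R)$ against $t\in(R,\infty)$ with different Lebesgue exponents and optimizing $R$.

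The gap in your route is at two linked points. First, for $t\notin\{0,1\}$ the operator $\mathcal M_t$ is a Schur multiplier of the kernel by $\nabla K(tx+(1-t)y-z)$, which is neither a left nor a right multiplication by a function of one variable, and is not unitarily equivalent to one. Your claim that a ``change of variables in the kernel'' reduces it to the endpoints with trace-norm and diagonal-density distortion ``controlled uniformly in $t$'' is not justified: the trace norm is invariant only under unitary conjugation, and $\opm(x,y)\mapsto\nabla K(tx+(1-t)y-z)\opm(x,y)$ is not of that form. Second, even at the endpoints the proposed quantum Hölder inequality $\Tr\n{M_{\nabla K(\cdot-z)}\opm}\le C\Nrm{\nabla K}{L^{\fb,\infty}}\Nrm{\Diag\n{\opm}}{L^{\fb',1}}$ fails in the relevant range $\fb\le 2$ (in particular 3D Coulomb, $\fb=3/2$): already for a rank-one $\opm=\ket{\psi}\bra{\psi}$ the left-hand side is $\Nrm{\nabla K\,\psi}{L^2}$, which is infinite for $\psi$ concentrated near $z$ because $\n{\nabla K}^2\notin L^1_\loc$, while the right-hand side stays finite. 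You correctly identify $\fb\le2$ as the hard case, but the suggested workaround (duality, off-diagonal control by the diagonal, ``keeping the $t$-average together'' to exploit the mildness of the divided-difference kernel) is not developed, and it would need to be the entire substance of the proof --- this is exactly what the Gaussian decomposition accomplishes for the paper. Your $R$-truncation-and-optimization heuristic for the exponents, and the near/far-field picture for the Lorentz constraint $\delta\in\big((1/\fb_1'-1/\fb')_+,\,1-1/\fb'\big)$ with $q<\fb_3'$ for $d\ge4$, do parallel the paper's parameter bookkeeping, but they are layered on an underlying lemma that is not established and is in part false as stated.
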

	
	\begin{proof}[Proof of Theorem~\ref{thm:key_ineq}]
		To prove our result, we will decompose the potential as a combination of Gaussian functions (see e.g. \cite[5.9~(3)]{lieb_analysis_2001}). By using the definition of the Gamma function and a simple change of variable, when $a>0$, one obtains the following formula for any $r>0$
		\begin{equation}\label{eq:laplace_transform}
			\frac{1}{\omega_a\,r^{a/2}} = \frac{1}{2}\int_0^\infty t^{\frac{a}{2}-1} e^{-\pi r t} \dd t,
		\end{equation}
		where $\omega_a = \frac{2\,\pi^{a/2}}{\Gamma(a/2)}$.
		Taking $r = \n{x}^2$ directly leads to the following decomposition
		\begin{equation*}
			\frac{1}{\omega_a\n{x}^a} = \frac{1}{2}\int_0^\infty t^{\frac{a}{2}-1} e^{-\pi \n{x}^2 t} \dd t.
		\end{equation*}
		 Now when $a\in(-2,0)$, take Equation~\eqref{eq:laplace_transform} with $a+2$ instead of $a$, integrate it with respect to $r$, exchange the integrals and then replace again $r$ by $\n{x}^2$. This yields a similar decomposition in the form
		\begin{equation*}
			\frac{1}{\omega_a\n{x}^a} = \frac{1}{2}\int_0^\infty t^{\frac{a}{2}-1} \(e^{-\pi \n{x}^2 t}-1\) \dd t.
		\end{equation*}
		In order to treat the case of the logarithm, do the same steps with $a=0$ to obtain
		\begin{equation*}
			-\ln(\n{x}) = \frac{1}{2}\int_0^\infty t^{\frac{a}{2}-1} \(e^{-\pi \n{x}^2 t}-e^{-\pi t}\) \dd t.
		\end{equation*}
		In all these cases, defining $\omega_0 := 1$, we get the following identity
		\begin{align*}
			\frac{1}{\omega_a} \(K(x)-K(y)\) &= \frac{1}{2}\int_0^\infty t^{\frac{a}{2}-1} \(e^{-\pi \n{x}^2 t} - e^{-\pi \n{y}^2 t}\) \dd t.
		\end{align*}
		Following the idea of \cite{porta_mean_2017} but with this new decomposition, we write
		\begin{align*}
			\frac{1}{\omega_a} \(K(x)-K(y)\) &= \frac{1}{2}\int_0^\infty t^{\frac{a}{2}-1} \int_0^1 \frac{\d}{\d\theta} \(e^{-\pi \theta \n{x}^2 t} e^{-\pi \(1-\theta\) \n{y}^2 t}\) \d\theta \dd t
			\\
			&= -\pi\, \int_0^\infty t^{\frac{a}{2}} \int_0^1 \(x-y\)\cdot\(x+y\) e^{-\pi \theta \n{x}^2 t} e^{-\pi \(1-\theta\) \n{y}^2 t} \d\theta \dd t,
		\end{align*}
		from which we get
		\begin{align*}
			\frac{K(x-z)-K(y-z)}{-\pi\,\omega_a} &= \int_0^1\int_0^\infty t^{\frac{a}{2}}  \(x-y\)\cdot\(\phi_\theta(x)\varphi_{1-\theta}(y) + \varphi_{\theta}(x)\phi_{1-\theta}(y)\) \d t \dd \theta,
		\end{align*}
		where we defined $\varphi_k(x) := e^{-k\pi\n{x-z}^2 t}$ and $\phi_k(x) := (x-z)\,\varphi_k(x)$. Thus, since the integral kernel of $\Dhv{\op}$ is $\frac{x-y}{i\hbar}\r(x,y)$ and exchanging $\theta$ by $1-\theta$ in the second term of the integral, we obtain
		\begin{align*}
			\frac{1}{i\pi\hbar\,\omega_a} \lt[K(\cdot-z),\op\rt] &= \int_0^1\int_0^\infty t^\frac{a}{2} \(\phi_{\theta}\cdot \Dhv{\op}\,\varphi_{1-\theta} + \varphi_{1-\theta}\,\Dhv{\op}\cdot \phi_{\theta}\) \d t \dd \theta.
		\end{align*}
		Noticing that $\(\phi_{\theta}\cdot \Dhv{\op}\,\varphi_{1-\theta}\)^* = \varphi_{1-\theta}\,\Dhv{\op}\cdot \phi_{\theta}$, we can now estimate the trace norm by
		\begin{equation}\label{eq:trace_estim_0}
			\frac{1}{h\n{\omega_a}} \Nrm{\lt[K(\cdot-z),\op\rt]}{1} \leq \int_0^1\int_0^\infty t^\frac{a}{2} \Nrm{\phi_{\theta}\cdot \Dhv{\op}\,\varphi_{1-\theta}}{1}\d t \dd \theta.
		\end{equation}
		Then, by decomposing the self-adjoint operator $\Dhv{\op}$ on an orthogonal basis $(\psi_j)_{j\in J}$, we can write $\Dhv{\op} = \sumj \lambda_j \ket{\psi_j}\bra{\psi_j}$ and we get
		\begin{align*}
			\Nrm{\phi_{\theta}\cdot \Dhv{\op}\,\varphi_{1-\theta}}{1} &\leq \sumj \n{\lambda_j} \Nrm{\ket{\phi_\theta\psi_j}\bra{\psi_j \varphi_{1-\theta}}}{1}
			\\
			&\leq \sumj \n{\lambda_j} \Nrm{\phi_\theta\psi_j}{L^2}\Nrm{\psi_j \varphi_{1-\theta}}{L^2},
		\end{align*}
		where we used the fact that $\Nrm{\ket{u}\bra{v}}{1} = \Nrm{u}{L^2} \Nrm{v}{L^2}$. Thus, by the Cauchy-Schwarz inequality for series
		\begin{align*}
			\Nrm{\phi_{\theta}\cdot \Dhv{\op}\,\varphi_{1-\theta}}{1} &\leq \(\sumj \n{\lambda_j} \Nrm{\phi_\theta\psi_j}{L^2}^2\)^\frac{1}{2}\(\sumj \n{\lambda_j} \Nrm{\psi_j \varphi_{1-\theta}}{L^2}^2\)^\frac{1}{2}
			\\
			&\leq \(\intd \n{\phi_{\theta}}^2\rho_1\)^\frac{1}{2} \(\intd \n{\varphi_{1-\theta}}^2\rho_1\)^\frac{1}{2},
		\end{align*}
		with the notation $\rho_1 = \Diag{\n{\Dhv{\op}}} = \sumj \n{\lambda_j} \n{\psi_j}^2$. By the integral H\"{o}lder's inequality, this yields
		\begin{equation}\label{eq:trace_estim_1}
			\Nrm{\phi_{\theta}\cdot \Dhv{\op}\,\varphi_{1-\theta}}{1} \leq \Nrm{\phi_{\theta}}{L^{2p'}} \Nrm{\varphi_{1-\theta}}{L^{2q'}} \Nrm{\rho_1}{L^p}^\frac{1}{2} \Nrm{\rho_1}{L^q}^\frac{1}{2},
		\end{equation}
		where $(p,q)\in[1,\infty]^2$ can depend on the parameter $t$, which will help us to obtain the convergence of the integral in Inequality~\eqref{eq:trace_estim_0}. We can now compute explicitly the integrals of the functions $\phi$ and $\varphi$. It holds
		\begin{align*}
			\Nrm{\phi_{\theta}}{L^{2p'}}^{2p'} &= \intd \n{x-z}^{2p'}e^{-2\pi\theta\n{x-z}^2 p' t} \d x = \frac{\omega_d}{\omega_{d+2p'}} \frac{1}{\(2\theta\, p' t\)^\frac{d+2p'}{2}}
			\\
			\Nrm{\varphi_{1-\theta}}{L^{2q'}}^{2q'} &= \intd e^{-2\pi\(1-\theta\)\n{x-z}^2 q' t} \d x = \frac{1}{\(2\(1-\theta\)q't\)^\frac{d}{2}}.
		\end{align*}
		Combining these two formulas with inequalities~\eqref{eq:trace_estim_0} and~\eqref{eq:trace_estim_1} leads to
		\begin{align*}
			\Nrm{\lt[K(\cdot-z),\op\rt]}{1} \leq h \int_0^\infty \frac{C_{d,a,p'} \Nrm{\rho_1}{L^p}^\frac{1}{2} \Nrm{\rho_1}{L^q}^\frac{1}{2}}{t^{\frac{1}{2}\(\frac{d}{2p'} + \frac{d}{2q'} + 1 - a\)}} \int_0^1 \frac{\d\theta}{\theta^{\frac{d+2p'}{4p'}} \(1-\theta\)^{\frac{d}{4q'}}} \dd t.
		\end{align*}
		with $C_{d,a,p'} = \n{\omega_a} \(\frac{\omega_d}{\omega_{d+2p'}}\)^{\!\frac{1}{2p'}} \(2p'\)^{-\,\frac{d+2p'}{4p'}} \(2q'\)^{-\,\frac{d}{4q'}}$. We observe that the integral over $\theta$ is converging as soon as
		\begin{equation}\label{eq:condition_p_0}
			\frac{1}{p'} < \frac{2}{d} = \frac{1}{\fb_1} \ \text{ and }\ \frac{1}{q'} < \frac{4}{d} = \frac{1}{\fb_3}.
		\end{equation}
		Now, in order to get a finite integral of the variable $t$, we can cut the integral into two parts. The first one for $t\in(0,R)$ and the second one for $t\in(R,\infty)$ for a given $R>0$. Then we have to choose $p$ and $q$ such that 
		\begin{align*}
			\frac{1}{2}\(\frac{d}{2p'} + \frac{d}{2q'} + 1 - a\) < 1 &\text{ for } t\in(0,R)
			\\
			\frac{1}{2}\(\frac{d}{2p'} + \frac{d}{2q'} + 1 - a\) > 1 &\text{ for } t\geq R,
		\end{align*}
		or equivalently since $\fb = \frac{d}{a+1}$
		\begin{align*}
			\frac{1}{2}\(\frac{1}{p'} + \frac{1}{q'}\) < \frac{1}{\fb} &\text{ for } t\in(0,R)
			\\
			\frac{1}{2}\(\frac{1}{p'} + \frac{1}{q'}\) > \frac{1}{\fb} &\text{ for } t\geq R.
		\end{align*}
		However, this has to be compatible with the constraint \eqref{eq:condition_p_0}. Therefore, when $t\in(0,R)$, we can in particular take $q=p_0$ with $p_0 < \min\!\(\fb',\fb_1'\)$. When $t\geq R$, then we can also take for example $p=p_0 > \frac{\fb'}{2}$ and then any $q$ such that
		\begin{equation}\label{eq:condition_q_0}
			 \frac{2}{\fb} - \frac{1}{p_0'} < \frac{1}{q'} < \frac{4}{d} \text{ and } \frac{1}{q'} \leq 1.
		\end{equation}
		Notice that the condition $\frac{1}{q'} < \frac{4}{d}$ is only used when $d\geq 4$ and can be rewritten $q < \fb_3'$. Such a pair $(p_0,q)$ exists as long as $a\leq \frac{d}{2}$ and $a<2$. By defining $\delta := \frac{1}{p_0} - \frac{1}{\fb'}$, then these conditions are equivalent to
		\begin{align*}
			\(\frac{1}{\fb_1'}-\frac{1}{\fb'}\)_+ &< \delta < 1-\frac{1}{\fb'}
			\\
			\frac{1}{p_0} &= \frac{1}{\fb'} + \delta
			\\
			\frac{1}{q} &< \frac{1}{\fb'} - \delta.
		\end{align*}
		With these $p$ and $q$, we therefore deduce that there exists a constant $C$ depending on $d$, $a$, $p_0$ and $q$ such that
		\begin{align*}
			\Nrm{\lt[K(\cdot-z),\op\rt]}{1} \leq C\,h \(R^{\frac{d}{2}\(\frac{1}{\fb} - \frac{1}{p_0'}\)}\Nrm{\rho_1}{L^{p_0}} + R^{\frac{d}{2}\(\frac{1}{\fb} - \frac{1}{2p_0'}-\frac{1}{2q'}\)}\Nrm{\rho_1}{L^{p_0}}^\frac{1}{2}\Nrm{\rho_1}{L^{q}}^\frac{1}{2}\).
		\end{align*}
		Optimizing with respect to $R$ yields
		\begin{equation}\label{eq:trace_estim_2}
			\Tr{\n{\lt[K(\cdot-z),\op\rt]}} \leq C\, h \Nrm{\rho_1}{L^{p_0}}^{\theta_0} \Nrm{\rho_1}{L^q}^{1-\theta_0} 
		\end{equation}
		where $\theta_0 = \frac{1/p_0-1/\fb'}{1/p_0 - 1/q}$. In order to obtain an equation of the form \eqref{eq:key_ineq}, we can define $\eps := q-\fb'$, which is positive by Equation~\eqref{eq:condition_q_0} and the fact that $p_0 < \fb'$. The condition $q < \fb_3'$ when $d\geq 4$ then reads $\eps < \fb_3'-\fb'$. We can also define $p := \fb'-\eps\geq 1$. Then by a direct computation and using again~\eqref{eq:condition_q_0}, we obtain
		\begin{equation*}
			p_0 - p = p_0 + q - 2 \fb' > 0,
		\end{equation*}
		so that $p < p_0 < \fb' < q$ and by interpolation of Lebesgue spaces,
		\begin{equation*}
			\Nrm{\rho_1}{L^{p_0}} \leq \Nrm{\rho_1}{L^{p}}^{\theta_1} \Nrm{\rho_1}{L^{q}}^{1-\theta_1},
		\end{equation*}
		where $\theta_1 = \frac{1/p_0-1/q}{1/p-1/q}$. Noticing that $\theta_0\theta_1 = \frac{1/p_0-1/\fb'}{1/p - 1/q}$ and that we can take $\frac{1}{p_0}$ as close as we want to $\frac{1}{p}$, there exists $\eps_1$ such that we can choose $p_0$ such that $\theta_0\theta_1 + \eps_1 = \frac{1/p-1/\fb'}{1/p - 1/q} = \frac{1}{2} + \frac{\eps}{2\fb'}$. Therefore, the last inequality combined with Inequality~\eqref{eq:trace_estim_2} leads to Formula~\eqref{eq:key_ineq}.
	\end{proof}
	
	The following proposition is an extension of Theorem~\ref{thm:key_ineq} to $\L^p$ spaces, for $p<\fb$. Notice however that the right-side here is expressed in terms of weighted quantum Lebesgue norms, which makes the inequality weaker than the one in Theorem~\ref{thm:key_ineq}.
		
	\begin{prop}\label{prop:commutator_ineq_Lp}
		Let $d\geq2$, $a\in \(-1,\min\!\(2,\frac{d}{2}\)\)$, $1\leq p< \fb := \frac{d}{a+1}$. Then for any $\eps \in (0,q-1)$ and $n> a+1$, there exists a constant $C>0$ such that
		\begin{equation*}
			\Nrm{\lt[K(\cdot-z),\op\rt]}{\L^p} \leq C\, h \Nrm{\Dhv{\op}\,\opm_{n}}{\L^{q+\eps}}^{\frac{1}{2} + \tilde{\eps}} \Nrm{\Dhv{\op}\,\opm_{n}}{\L^{q-\eps}}^{\frac{1}{2} - \tilde{\eps}},
		\end{equation*}
		where $\tilde{\eps} = \eps/q$, $\opm_{n} = 1+\n{\opp}^n$ and with
		\begin{equation*}
			\frac{1}{p} = \frac{1}{q} + \frac{1}{\fb}.
		\end{equation*}
	\end{prop}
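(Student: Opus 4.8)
The plan is to run the proof of Theorem~\ref{thm:key_ineq} again, up to the Gaussian decomposition of the commutator, and then estimate the Schatten $\L^p$ norm directly instead of passing through the diagonal of $\Dhv{\op}$. Thus I would start from the identity obtained there,
\[
	\tfrac{1}{i\pi\hbar\,\omega_a}\com{K(\cdot-z),\op} = \int_0^1\!\!\int_0^\infty t^{a/2}\Big(\phi_\theta\cdot\Dhv{\op}\,\varphi_{1-\theta} + \varphi_{1-\theta}\,\Dhv{\op}\cdot\phi_\theta\Big)\,\d t\,\d\theta,
\]
with $\varphi_k(x)=e^{-k\pi\n{x-z}^2 t}$ and $\phi_k(x)=(x-z)\varphi_k(x)$, apply Minkowski's integral inequality to pull $\Nrm{\cdot}{\L^p}$ inside the $(t,\theta)$-integral, and use that the second summand is the adjoint of the first, so that (Schatten norms being invariant under taking adjoints) it is enough to control $\Nrm{\phi_\theta\cdot\Dhv{\op}\,\varphi_{1-\theta}}{\L^p}$ for each $(t,\theta)$ and double.

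To bound this, for fixed $(t,\theta)$ I would insert $\opm_n^{-1}\opm_n$ just to the right of $\Dhv{\op}$, write $\phi_\theta\cdot\Dhv{\op}\,\varphi_{1-\theta} = \phi_\theta\cdot(\Dhv{\op}\,\opm_n)\cdot(\opm_n^{-1}\varphi_{1-\theta})$, bound the leftmost multiplication operator by its operator norm $\Nrm{\phi_\theta}{L^\infty}$, apply H\"older for semiclassical Schatten norms to the remaining product with $\tfrac1p=\tfrac1{p_1}+\tfrac1{p_2}$, and estimate $\opm_n^{-1}\varphi_{1-\theta}=w(\opp)\,\varphi_{1-\theta}(x)$, with $w(y)=(1+\n{y}^n)^{-1}$, by a Kato--Seiler--Simon bound. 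Since $\opm_n$ is self-adjoint, one may equally place it on the left of $\Dhv{\op}$ without changing the relevant Schatten norm, which gives the freedom to instead pair $\phi_\theta$ with $w(\opp)$ when that is more convenient. The $\hbar$-powers generated by the semiclassical H\"older step and by the Kato--Seiler--Simon bound cancel exactly, leaving only the prefactor $\hbar$ from the displayed identity, and the requirement $n>a+1$ is precisely what makes $w\in L^{p_2}$ for the relevant exponents $p_2$ near $\fb$; computing the Gaussian integrals gives $\Nrm{\phi_\theta}{L^\infty}\simeq(\theta t)^{-1/2}$ and $\Nrm{\varphi_{1-\theta}}{L^{p_2}}\simeq((1-\theta)t)^{-d/(2p_2)}$, so altogether
\[
	\Nrm{\phi_\theta\cdot\Dhv{\op}\,\varphi_{1-\theta}}{\L^p} \lesssim \hbar\, t^{a/2}\,(\theta t)^{-1/2}\,((1-\theta)t)^{-d/(2p_2)}\,\Nrm{\Dhv{\op}\,\opm_n}{\L^{p_1}}.
\]

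It then remains to integrate in $\theta$ and $t$. The $\theta$-integral converges as soon as $p_2>d/2$, for which the hypothesis $a<\min(2,d/2)$ leaves room. For the $t$-integral I would split at a radius $R>0$, taking $p_1=q-\eps$ on $(0,R)$ and $p_1=q+\eps$ on $(R,\infty)$ (with $p_2$ adjusted accordingly by $\tfrac1p=\tfrac1{p_1}+\tfrac1{p_2}$): these are the choices for which the power of $t$ lies respectively above and below $-1$, which is possible precisely because they bracket the borderline value $q$ fixed by $1/p=1/q+1/\fb$. Optimizing over $R$ and reading off the resulting interpolation weight $\tfrac12\pm\tilde\eps$ with $\tilde\eps=\eps/q$, exactly as in the passage from \eqref{eq:trace_estim_2} to \eqref{eq:key_ineq}, produces the stated inequality.

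The main obstacle is the exponent bookkeeping in this last step: one must choose $p_1^\pm=q\mp\eps$ and the matching $p_2^\pm$ so that simultaneously the two $\theta$-integrals stay finite ($p_2^\pm>d/2$), the Kato--Seiler--Simon estimate applies ($w\in L^{p_2^\pm}$, using $n>a+1$), and the two powers of $t$ fall on opposite sides of $-1$. Keeping track of the exact cancellation of the $\hbar$-powers through the semiclassical H\"older and Kato--Seiler--Simon steps, and carrying the weight $\opm_n$ along throughout, are the only delicate points, the rest being a routine rerun of the argument for Theorem~\ref{thm:key_ineq}.
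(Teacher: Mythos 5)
Your approach differs from the paper's in one structural respect: you insert $\opm_n^{-1}\opm_n$ on \emph{one} side of $\Dhv\op$ only and bound the factor $\phi_\theta$ crudely by its $L^\infty$ norm, whereas the paper factorizes $\opm_n$ into two pieces $\opm_{n_\alpha}\opm_{n_\beta}$, placing one on each side so as to form $\opm_{n_\alpha}\Dhv\op\,\opm_{n_\beta}$, and applies Kato--Seiler--Simon to \emph{both} of $\phi_\theta\,\opm_{n_\alpha}^{-1}\in\L^\alpha$ and $\opm_{n_\beta}^{-1}\varphi_{1-\theta}\in\L^\beta$. That difference is not cosmetic: it changes the admissible exponent range, and your version does not reach the full range of $a$ stated in the proposition.

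Concretely, in your scheme the single Kato--Seiler--Simon factor $\opm_n^{-1}\varphi_{1-\theta}$ has to live in $\L^{p_2}$ with $p_2\ge 2$ (the inequality is only valid for Schatten exponents at least $2$), and your $\theta$-integral $\int_0^1\theta^{-1/2}(1-\theta)^{-d/(2p_2)}\,\d\theta$ additionally requires $p_2>d/2$. On the other hand, splitting the $t$-integral forces $p_2$ to take values on \emph{both} sides of $\fb$: one needs $p_2>\fb$ on $(0,R)$ and $p_2<\fb$ on $(R,\infty)$. Combining, you need $\fb>\max(2,d/2)$, that is $a<\min\bigl(\tfrac{d}{2}-1,\,1\bigr)$. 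This is strictly smaller than the hypothesis $a<\min\bigl(2,\tfrac{d}{2}\bigr)$ of the proposition; for $d=3$ it gives $a<\tfrac12$ and excludes the Coulomb case $a=1$. The two-sided insertion is what removes the bottleneck: there the constraints are $\alpha\ge 2$, $\alpha>d$, $\beta\ge 2$, $\beta>d/2$ with $\tfrac1\alpha+\tfrac1\beta=\tfrac1{p_2}$, and since $1/\alpha$ and $1/\beta$ can each be pushed up to $\min(1/2,1/d)=1/d$ and $\min(1/2,2/d)$ respectively, one can reach $\tfrac1{p_2}\approx\tfrac1\fb=\tfrac{a+1}{d}$ under exactly the assumption $a<\min\bigl(2,\tfrac{d}{2}\bigr)$. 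Also note that your suggested symmetry fallback (pair $\phi_\theta$ with $w(\opp)$ and take $\varphi_{1-\theta}$ in $L^\infty$) does not help: $\|\varphi_{1-\theta}\|_{L^\infty}=1$ gives no $t$-decay, so the $t$-integral would diverge.

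To repair the argument you should follow the paper and split the weight, writing $\opm_n=\opm_{n_\alpha}\opm_{n_\beta}$ with $n_\alpha>d/\alpha$ and $n_\beta>d/\beta$, inserting one factor to the left and one to the right of $\Dhv\op$, and then passing from $\|\opm_{n_\alpha}\Dhv\op\,\opm_{n_\beta}\|_q$ to $\|\Dhv\op\,\opm_n\|_q$ via the mixing inequality~\eqref{eq:mixing}. The rest of your outline — Minkowski's inequality to pull the $\L^p$ norm inside, adjoint symmetry to reduce to one term, the $\hbar$-cancellation, and the two-regime splitting in $t$ with optimization over $R$ — is consistent with what the paper does.
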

	
	\begin{proof} First we do the same decomposition as for the $\L^1$ case but then take a $\L^p$ norm in \eqref{eq:trace_estim_0}. This yields
	\begin{equation}\label{eq:trace_estim_p}
		\frac{1}{h\n{\omega_a}} \Nrm{\lt[K(\cdot-z),\op\rt]}{p} \leq \int_0^\infty \int_0^1 t^\frac{a}{2} \Nrm{\phi_{\theta}\cdot \Dhv{\op}\,\varphi_{1-\theta}}{p}\d\theta \dd t.
	\end{equation}
	In order to bound this integral, we will cut it into two parts corresponding to $t\in(0,R)$ and $t\geq R$, and we take $\frac{1}{q} > \frac{1}{p} - \frac{1}{\fb}$ when $t$ is small, and $\frac{1}{q} < \frac{1}{p} - \frac{1}{\fb}$ in the second case.	Using the hypotheses, we can find $(\alpha,\beta)\in [2,\infty)^2$ and $(n_\alpha,n_\beta) \in (\frac{d}{\alpha},\infty)\times (\frac{d}{\beta},\infty)$ such that $\alpha > d$, $\beta>d/2$, $n_\alpha + n_\beta = n$ and $\frac{1}{\alpha} + \frac{1}{\beta} = \frac{1}{p} - \frac{1}{q}$. Then we define $\opm_k := 1 + \n{\opp}^k$ and multiply and divide by $\opm_{n_\alpha}$ and $\opm_{n_\beta}$. This yields
	\begin{align*}
		\Nrm{\phi_{\theta}\cdot \Dhv{\op}\,\varphi_{1-\theta}}{p} &= \Nrm{\(\phi_{\theta}\,\opm_{n_\alpha}^{-1}\)\cdot \opm_{n_\alpha}\Dhv{\op}\,\opm_{n_\beta}\,\opm_{n_\beta}^{-1}\varphi_{1-\theta}}{p}
		\\
		&\leq \Nrm{\phi_{\theta}\,\opm_{n_\alpha}^{-1}}{\alpha} \Nrm{\opm_{n_\beta}^{-1}\varphi_{1-\theta}}{\beta} \Nrm{\opm_{n_\alpha}\Dhv{\op}\,\opm_{n_\beta}}{q},
	\end{align*}
	where we used twice H\"older's inequality for operators to obtain the third line from the second. We notice that $\phi_{\theta}\,\opm_{n_\alpha}^{-1}$ is of the form $g(-i\nabla)\,f(x)$, so that since $\alpha \geq 2$, by the Kato-Seiler-Simon inequality (see e.g. \cite[Thm 4.1]{simon_trace_2005}), it holds
	\begin{align*}
		\Nrm{\phi_{\theta}\,\opm_{n_\alpha}^{-1}}{\alpha} \leq (2\pi)^{-\,\frac{d}{\alpha}} \Nrm{\phi_{\theta}}{L^\alpha}\Nrm{m_{n_\alpha}^{-1}(\hbar\,\cdot)}{L^\alpha},
	\end{align*}
	with $m_{n_\alpha}\textcolor{blue}{^{-1}}(\hbar x) = \(1+\n{\hbar x}^{n_\alpha}\)^{-1}$. By the change of variable $y = \hbar\,x$ in the last integral, and using the fact that $C_{d,n_\alpha,\alpha} := \Nrm{m_{n_\alpha}^{-1}}{L^\alpha} < \infty$, this yields
	\begin{align*}
		\Nrm{\phi_{\theta}\,\opm_{n_\alpha}^{-1}}{\alpha} \leq C_{d,n_\alpha,\alpha}\, h^{-\,\frac{d}{\alpha}} \Nrm{\phi_{\theta}}{L^\alpha}.
	\end{align*}
	Then a direct computation of the integral of $\phi_{\theta}$ yields
	\begin{align*}
		\Nrm{\phi_{\theta}\,\opm_{n_\alpha}^{-1}}{\alpha} \leq C_{d,n_\alpha,\alpha}\, h^{-\,\frac{d}{\alpha}} \(\frac{\omega_{d}}{\omega_{d+\alpha}}\)^\frac{1}{\alpha} \frac{1}{\(\alpha \theta t\)^{\frac{d+\alpha}{2\,\alpha}}}.
	\end{align*}
	By the same proof but replacing $\phi_\theta$ by $\varphi_{1-\theta}$, if $\beta\geq 2$, we have
	\begin{align*}
		\Nrm{\opm_{n_\beta}^{-1}\,\varphi_{1-\theta}}{\beta} \leq C_{d,{n_\beta},\beta}\, h^{-\,\frac{d}{\beta}} \frac{1}{\(\beta \(1-\theta\) t\)^{\frac{d}{2\,\beta}}}.
	\end{align*}
	Therefore, \eqref{eq:trace_estim_p} leads to
	\begin{align*}
		\Nrm{\lt[K(\cdot-z),\op\rt]}{p} &\leq \int_0^\infty \frac{C_{\op}\,h^{1-d\(\frac{1}{\alpha}+\frac{1}{\beta}\)}}{t^{\frac{1}{2}\(\frac{d}{\alpha}+\frac{d}{\beta}+1-a\)}}  \(\int_0^1 \frac{\d\theta}{\theta^{\frac{d+\alpha}{2\,\alpha}} \(1-\theta\)^{\frac{d}{2\,\beta}}}\) \d t
		\\
		&\leq \int_0^\infty \frac{C_{\op}\,h^{1+\frac{d}{p'}-\frac{d}{q'}}}{t^{\frac{d}{2}\(\frac{1}{p}-\frac{1}{q}-\frac{1}{\fb}\) +1}}  \(\int_0^1 \frac{\d\theta}{\theta^{\frac{d+\alpha}{2\,\alpha}} \(1-\theta\)^{\frac{d}{2\,\beta}}}\) \d t,
	\end{align*}
	where $C_{\op} = \(\frac{\omega_{d}}{\omega_{d+\alpha}}\)^\frac{1}{\alpha} \frac{C_{d,n_\alpha,\alpha} C_{d,n_\beta,\beta}}{\beta^{\frac{d}{2\,\beta}} \alpha^{\frac{d+\alpha}{2\,\alpha}}} \Nrm{\opm_{n_\alpha}\Dhv{\op}\,\opm_{n_\beta}}{q}$. The integrals in $\theta$ and $t$ converge since
	\begin{align*}
		&\alpha > d \ \text{ and }\ \beta > \frac{d}{2}
		\\
		&\frac{1}{p}-\frac{1}{q} < \frac{1}{\fb} \ \text{ if }\ t\in[0,R]
		\\
		&\frac{1}{p}-\frac{1}{q} > \frac{1}{\fb} \ \text{ if }\ t\in(R,\infty).
	\end{align*}
	Then observe that as proved in Appendix~\ref{sec:appendix_B} (Inequality~\eqref{eq:mixing}), it holds
	\begin{equation*}
		\Nrm{\opm_{n_\alpha}\Dhv{\op}\,\opm_{n_\beta}}{q} \leq \Nrm{\Dhv{\op}\,\opm_{n_\beta}\opm_{n_\alpha}}{q} = \Nrm{\Dhv{\op}\,\opm_{n}}{q},
	\end{equation*}
	and we can conclude by taking the optimal $R$ as in the proof of Theorem~\ref{thm:key_ineq}.
	\end{proof}

\subsection{Bound for the error term}

	In this section, we will prove that the operator $B_t$ defined by Equation~\eqref{eq:def_B} is small when $h$ goes to $0$.
	\begin{prop}\label{prop:B-term}
		Under the hypotheses of Theorem~\ref{thm:CV_Hartree}, if $p\in[1,2]$ and $n\in2\N$ with $n>d/2$, then
		\begin{equation*}
			\Nrm{B_t}{\L^p} \leq C\,\hbar^2 \Nrm{\rho_f}{L^1\cap H^\nu} \Nrm{\Dv^2 f}{H^{2n}_{2n}(\R^{2d})},
		\end{equation*}
		where $\nu = (n+a+2-d)_+$ and $C$ is independent from $\hbar$.
	\end{prop}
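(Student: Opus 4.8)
The plan is to recognise $B_t$ as the second‑order Taylor remainder of the mean‑field potential $V:=K*\rho_f$ around the midpoint, acting on the integral kernel of $\op_f=\Weylh{f}$, and to extract the prefactor $\hbar^2$ from the two factors $x-y$ carried by that remainder. Using the elementary Weyl identities of Appendix~\ref{sec:appendix_B}: $\com{V,\op_f}$ has kernel $(V(x)-V(y))\,\r_f(x,y)$; multiplying the kernel of any $\op$ by $x-y$ gives $i\hbar$ times the kernel of $\Dhv{\op}$ (since $\Dhv{\op}$ has kernel $\tfrac{x-y}{i\hbar}\r(x,y)$), so in particular $(x-y)\otimes(x-y)\,\r_f(x,y)$ is, componentwise, the kernel of $-\hbar^2\,\Weylh{\Dv^2 f}$; and $\nabla V\!\big(\tfrac{x+y}{2}\big)\cdot(x-y)\,\r_f(x,y)$ is the kernel of $i\hbar\,\Weylh{\nabla V\cdot\Dv f}$. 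Hence
\begin{equation*}
	B_t=\com{V,\op_f}-i\hbar\,\Weylh{\nabla V\cdot\Dv f},
\end{equation*}
and Taylor's formula with integral remainder for $V$ about $m=\tfrac{x+y}{2}$ (the first‑order term being cancelled by construction, the second‑order term cancelling by the symmetry $x-m=-(y-m)$) gives
\begin{equation*}
	B_t(x,y)=\tfrac14\int_0^1(1-\theta)\,\big[D^2V(m+\tfrac\theta2(x-y))-D^2V(m-\tfrac\theta2(x-y))\big]:(x-y)\otimes(x-y)\,\r_f(x,y)\,\d\theta .
\end{equation*}
This exhibits $B_t$ as $\hbar^2$ times a $\theta$‑average of ``$D^2V$ near the midpoint'' composed, in the Weyl sense, with $\Weylh{\Dv^2 f}$, which is the origin of the $\hbar^2$ and of the norm of $\Dv^2 f$ on the right‑hand side. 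Everything below is done at the level of kernels, since $B_t$ is not literally a Weyl quantisation.

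Since for the Coulombic endpoint $a=d-2$ the kernel $D^2V$ fails to be locally integrable, I would not differentiate $K$ directly but instead use the Gaussian decomposition $\tfrac1{\omega_a\n{x}^a}=\tfrac12\int_0^\infty t^{\frac a2-1}e^{-\pi\n{x}^2t}\,\d t$ (and its variants for $a\le0$ and for $K=\ln\n{x}$) already used in the proof of Theorem~\ref{thm:key_ineq}, which gives $V=c_a\int_0^\infty t^{\frac a2-1}(G_t*\rho_f)\,\d t$ with $G_t(x)=e^{-\pi\n{x}^2 t}$, up to an additive constant which does not contribute to $B_t$. For each fixed $t$ the function $G_t*\rho_f$ is smooth, and the scaling $\Nrm{\nabla^2 G_t}{L^r}\simeq t^{1-\frac d2+\frac d{2r}}$ together with Young's and the Hardy--Littlewood--Sobolev inequality yields, choosing the exponent $r$ differently for small and large $t$, bounds of the type $\Nrm{D^2(G_t*\rho_f)}{L^2}\lesssim t^{1-d/4}\Nrm{\rho_f}{L^1}$ for $t\le R$ and $\Nrm{D^2(G_t*\rho_f)}{L^2}\lesssim t^{-d/2}\Nrm{\rho_f}{H^\nu}$ for $t\ge R$, with $\nu=(n+a+2-d)_+$. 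Splitting the $t$‑integral in the representation of $B_t$ at a free parameter $R>0$ and optimising over $R$ exactly as in Theorem~\ref{thm:key_ineq}, the $t$‑integral converges precisely when $a\in\big(\max\{\tfrac d2-2,-1\},d-2\big]$ and produces the combination $\Nrm{\rho_f}{L^1\cap H^\nu}$.

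To pass to the $\L^p$ norm for $p\in[1,2]$ I would estimate $p=2$ and $p=1$ and interpolate. For $p=2$, write $\Nrm{B_t}{\L^2}^2=h^{-d}\iint\n{B_t(x,y)}^2\d x\,\d y$, change variables $(x,y)\mapsto(m,w)$ with $w=\tfrac{x-y}{\hbar}$ (Jacobian $\hbar^d$) and $\r_f(x,y)=\cF_\xi f(m,w)$ up to normalisation, absorb the midpoint shift $m\mp\tfrac\theta2(x-y)$ by translation invariance of the $m$‑integral, and use Plancherel in $w$ so that $\n{w}^2\cF_\xi f$ becomes $\Dv^2 f$; this gives $\Nrm{B_t}{\L^2}\lesssim\hbar^2\,\big(\sup_{t}\Nrm{D^2(G_t*\rho_f)}{L^2}\big)\,\Nrm{\Dv^2 f}{L^2(\R^{2d})}$, which together with the previous step and Proposition~\ref{prop:multiply_Weyl_L2} closes this case. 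For $p=1$ I would argue as in Proposition~\ref{prop:moment_bound}: insert $\opm_n^{-1}=w(x)\,w(-i\hbar\nabla)$ with $w(y)=(1+\n{y}^n)^{-1}$, whose Hilbert--Schmidt norm is $\simeq h^{-d/2}$, use $\Nrm{B_t}{1}\le\Nrm{B_t\opm_n}{2}\Nrm{\opm_n^{-1}}{2}$, and compute $\Nrm{B_t\opm_n}{2}$ by the same change of variables, observing that through Lemma~\ref{lem:multiply_Weyl} and the Peetre inequality $\weight{x}\lesssim\weight{\tfrac{x+y}{2}}\weight{x-y}$ the factor $\opm_n=(1+\n{\opp}^n)(1+\n{x}^n)$ produces exactly the $2n$ derivatives and $\weight{z}^{2n}$ weights of $\Nrm{\Dv^2 f}{H^{2n}_{2n}(\R^{2d})}$, the $\weight{m}$‑part of the weight being routed onto $f$ rather than onto $D^2V$, so that $\rho_f$ remains unweighted in $\Nrm{\rho_f}{L^1\cap H^\nu}$. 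Schatten interpolation between the bounds at $p=1$ and $p=2$ then covers $p\in(1,2)$.

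\textbf{Main obstacle.} The delicate point is the control of $D^2V$: one must distribute the two derivatives between $G_t$ and $\rho_f$ — i.e.\ choose the right Young/Hardy--Littlewood--Sobolev exponents — scale by scale in the Gaussian parameter $t$, so that the $t$‑integral converges on the whole stated range of $a$ while still yielding exactly the regularity index $\nu=(n+a+2-d)_+$ and not spoiling the power $\hbar^2$; the endpoint $a=d-2$ (the Coulomb case) is the tight one and is precisely why the Gaussian decomposition, rather than a direct computation with $D^2K$, is needed.
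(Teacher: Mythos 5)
You correctly identify the Taylor--remainder structure of $B_t$ and the origin of the $\hbar^2$ prefactor; your displayed formula for $B_t(x,y)$ is an equivalent form of the double $\theta$--integral the paper writes. The Gaussian decomposition of $K$ as a way to control the second derivative of the mean-field potential is also a legitimate alternative to what the paper does (the paper instead works directly in Fourier space, using that $\widehat{\nabla^2K}(\xi)\sim\n{\xi}^{a+2-d}$ and the embedding $H^n\subsetarrow\cF L^1$ for $n>d/2$; no $t$--splitting is needed). The genuine gap is in how you propose to separate the $V$--part of the kernel from the $\op_f$--part when taking Schatten norms, and this is precisely the step the paper handles by Fourier--transforming $\nabla E$.

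The paper writes $\nabla E(a_{\theta,\theta'}x+b_{\theta,\theta'}y)=\int_{\R^d} e^{2i\pi z\cdot(a_{\theta,\theta'}x+b_{\theta,\theta'}y)}\,\widehat{\nabla E}(z)\dd z$, so that multiplying the kernel of $\Dhvv{\op_f}$ by this factor becomes, for each fixed $z$, the unitary conjugation $e_{a_{\theta,\theta'}}(\Dhvv{\op_f})\,e_{b_{\theta,\theta'}}$. This gives
\begin{equation*}
\Nrm{B_t}{\L^p}\le\tfrac{\hbar^2}{2}\Nrm{\widehat{\nabla E}}{L^1}\Nrm{\Dhvv{\op_f}}{\L^p}
\end{equation*}
for \emph{all} $p$ at once; the point is that $\Dhvv{\op_f}=\Weylh{\Dv^2 f}$ is a genuine Weyl quantisation, so Proposition~\ref{prop:multiply_Weyl_L2} and Proposition~\ref{prop:moment_bound} apply. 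Your kernel--level replacement cannot invoke Lemma~\ref{lem:multiply_Weyl} or Proposition~\ref{prop:moment_bound}: $B_t\opm_n$ is not a Weyl quantisation, and after Leibniz the kernel of $B_t\opm_n$ mixes $\partial_y^{\alpha}$--derivatives of the shifted $D^2V$ factor with $\partial_y^{\beta}$--derivatives of $\r_f$, which is a different object than what those results bound. More fundamentally, pointwise multiplication of an integral kernel by a bounded function of $(x,y)$ does \emph{not} in general preserve trace--class; what is needed is a Schur--multiplier bound, and the $L^1$ norm of the Fourier transform of $\nabla E$ is exactly the Schur--multiplier norm of the factor $\nabla E(ax+by)$ that the paper's representation exhibits. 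So the step you elide for $p=1$ is the central one, and as written it does not close.

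A smaller but related issue appears already at $p=2$: you claim a bound in terms of $\sup_t\Nrm{D^2(G_t*\rho_f)}{L^2}$ after ``absorbing the midpoint shift by translation invariance,'' but the $\r_f$--factor also depends on the midpoint, so the shift cannot be absorbed without affecting it; what actually comes out when you pull $D^2V$ out of the kernel is an $L^\infty$ bound on $D^2V$ (equivalently $\Nrm{\widehat{\nabla E}}{L^1}$), not an $L^2$ one. If you repair the separation by inserting the Fourier representation of $\nabla E$ (which you are free to combine with your Gaussian decomposition of $K$ to control $\Nrm{\widehat{\nabla E}}{L^1}$ via a $t$--split and HLS), your argument converges to the paper's; without it, both the $\L^2$ exponent and the $\L^1$ case are unjustified.
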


	\begin{proof}
		Recalling the notation $E = -\nabla K * \rho$, as in \cite{saffirio_semiclassical_2019,saffirio_hartree_2020}, we decompose $B_t$ as follows
		\begin{align*}
			\frac{1}{i\hbar}&B_t(x,y) = \int_0^1 E((1-\theta)x+\theta y) - E\!\(\frac{x+y}{2}\) \d\theta \cdot \Dhv{\op_f}(x,y)
			\\
			&= i\hbar \int_0^1\!\!\int_0^1 \(\!\theta-\frac{1}{2}\)\nabla E\(\!\((1-\theta)x+\theta y\)\theta' +  \frac{x+y}{2}\(1-\theta'\)\!\) \d\theta \dd\theta' : \Dhvv{\op_f}(x,y)
			\\
			&= i\hbar \int_0^1\!\!\int_0^1 \(\!\theta-\frac{1}{2}\) \nabla E\(a_{\theta,\theta'}x + b_{\theta,\theta'} y\) \d\theta \dd\theta' : \Dhvv{\op_f}(x,y)
		\end{align*}
		where $a_{\theta,\theta'} = \frac{\theta'+1}{2}-\theta\theta'$, $b_{\theta,\theta'} = \frac{1-\theta'}{2}+\theta\theta'$ and ":" denotes the double contraction of tensors. In terms of the Fourier transform of $\nabla E$, this yields
		\begin{align*}
			\frac{1}{i\hbar}B_t(x,y) &= i\hbar \int_0^1\!\!\int_0^1 \intd \(\theta-\frac{1}{2}\) e^{2i\pi z\cdot(a_{\theta,\theta'}x + b_{\theta,\theta'} y)}\,\widehat{\nabla E}\(z\) \d\theta \dd\theta'\dd z : \Dhvv{\op_f}(x,y).
		\end{align*}
		Defining $e_\omega$ as the operator of multiplication by the function $x\mapsto e^{2i\pi\omega z\cdot x}$, we obtain
		\begin{align*}
			\frac{1}{i\hbar}B_t &= i\hbar \int_0^1\!\!\int_0^1 \intd \(\theta-\frac{1}{2}\)\widehat{\nabla E}(z) : e_{a_{\theta,\theta'}} \(\Dhvv{\op_f}\) e_{b_{\theta,\theta'}} \d\theta \dd\theta' \dd z,
		\end{align*}
		and since $e_\omega$ is a bounded (unitary) operator, taking the quantum Lebesgue norms yields
		\begin{align*}
			\frac{1}{\hbar}\,\Nrm{B_t}{\L^p} &\leq \hbar \int_0^1\!\!\int_0^1 \intd \n{\theta-\frac{1}{2}} \n{\widehat{\nabla E}(z)} \Nrm{e_{a_{\theta,\theta'}} \(\Dhvv{\op_f}\) e_{b_{\theta,\theta'}}}{\L^p} \d\theta \dd\theta' \dd z
			\\
			&\leq \frac{\hbar}{2} \Nrm{\widehat{\nabla E}}{L^1} \Nrm{\Dhvv{\op_f}}{\L^p}.
		\end{align*}
		
		$\bullet$ Now to bound $\Nrm{\widehat{\nabla E}}{L^1}$, we can use the fact that for any $n > d/2$, the Fourier transform maps $H^n$ continuously into $L^1$ to get
		\begin{align*}
			\Nrm{\widehat{\nabla E}}{L^1} \leq C_{d,n}\Nrm{\nabla^2 K * \rho_f}{H^n}.
		\end{align*}
		If $a=d-2$, then by continuity of $\nabla^2K *\cdot$ in $H^n$, we get $\Nrm{\widehat{\nabla E}}{L^1}\leq C \Nrm{\rho_f}{H^n}$. Else if $a\in\(\frac{d}{2}-2,d\)\setminus\{2\}$
		\begin{align*}
			\Nrm{\widehat{\nabla E}}{L^1} &\leq C_{d,n,a} \Nrm{\(1+\n{x}^{n}\)\n{x}^{a+2-d} \widehat{\rho_f}}{L^2} \leq C_{d,n,a} \Nrm{\rho_f}{\dot{H}^{a+2-d}\cap \dot{H}^{n+a+2-d}}
			\\
			&\leq C_{d,n,a} \Nrm{\rho_f}{L^1\cap H^{(n+a+2-d)_+}}
		\end{align*}
		where we used the fact that if $\alpha\in(-\frac{d}{2},0)$, then by Sobolev's inequalities $ L^{p^*}\subsetarrow \dot{H}^\alpha$ with $\frac{1}{p^*} = \frac{1}{2} - \frac{\alpha}{d}$, and then $L^2\cap L^1 \subsetarrow L^{p^*}$ since ${p^*}\in[1,2]$.
		
		$\bullet$ Finally, to bound $\Nrm{\Dhvv{\op_f}}{\L^p}$, we interpolate it between the $\L^1$ and the $\L^2$ norms to get
		\begin{equation}\label{eq:Weyl_Lp}
			\Nrm{\Dhvv{\op_f}}{\L^p} \leq \Nrm{\Dhvv{\op_f}}{\L^2}^\theta \Nrm{\Dhvv{\op_f}}{\L^1}^{1-\theta} = \Nrm{\nabla_\xi^2 f}{L^2(\R^{2d})}^\theta \Nrm{\Dhvv{\op_f}}{\L^1}^{1-\theta}
		\end{equation}
		where $\theta = 2/p'$. Then using the fact that $\Dhvv{\op_f} = \Weylh{\nabla^2_\xi f}$, we can use  Proposition~\ref{prop:moment_bound} with $g = \nabla^2_\xi f$, $n_1=0$ and $n>d/2$ to get
		\begin{align*}
			\Nrm{\Dhvv{\op_f}}{\L^1} \leq C \Nrm{\nabla^2_\xi f}{H^{2n}_{2n}(\R^{2d})},
		\end{align*}
		which using Inequality~\eqref{eq:Weyl_Lp} implies that $\Nrm{\Dhvv{\op_f}}{\L^p} \leq C \Nrm{\nabla^2_\xi f}{H^{2n}_{2n}(\R^{2d})}$.
	\end{proof}

\subsection{Proof of Proposition~\ref{prop:conditional_estimate}}

	We can now use the bounds on the commutator and the error terms proved in previous sections to prove the stability estimate of Proposition~\ref{prop:conditional_estimate}.

	\begin{proof}[Proof of Proposition~\ref{prop:conditional_estimate}]
		For $p=1$, Equation~\eqref{eq:duhamel-Lp} yields
		\begin{multline*}
			\Tr{\n{\op(t) - \op_f(t)}} \leq \Tr{\n{\op^\init - \op_f^\init}} + \dfrac{1}{\hbar}\int_0^t \Tr{\n{B_s}}\d s
			\\
		 	+\dfrac{1}{\hbar}\int_0^t \int |\rho(s,z)-\rho_f(s,z)|\,\Tr{\n{[K(\cdot-z),\op_f(s)]}}\d z\dd s.
		\end{multline*}
		Proposition~\ref{prop:B-term} gives a bound on the second term on the right-hand side, whereas Theorem~\ref{thm:key_ineq} allows us to bound the last term on the right-hand side uniformly in~$z$. Moreover, because of \eqref{eq:L1-trace} we have 
		\begin{equation*}
			\Nrm{\rho-\rho_f}{L^1}\leq \Tr{\n{\op-\op_f}}\,.
		\end{equation*}
		Altogether, we obtain for some small $\eps>0$ to be chosen later
		\begin{align*}
			\Tr{\n{\op - \op_f}} &\leq \Tr{\n{\op^\init - \op_f^\init}}
			\\
			&\qquad+ C\,\hbar \int_0^t \Nrm{\rho_f(s)}{L^1\cap H^{(n+a+2-d)_+}} \Nrm{\Dv^2 f(s)}{H^{2n}_{2n}(\R^{2d})}\d s
			\\
			&\qquad+ C\,\int_0^t \Tr{\n{\op(s) - \op_f(s)}} \Nrm{\Diag{\n{\Dhv{\op_f(s)}}}}{L^{\fb'+\eps}\cap L^{\fb'-\eps}}\d s
		\end{align*}
		{where $\fb' = \frac{d}{d-\(a+1\)}$}. We then use Proposition~\ref{prop:diag_Dhv_Weyl} to bound the $L^p$ norm of the diagonal for $p=\fb'+\eps$ and $p=\fb'-\eps$
		\begin{align*}
			\Nrm{\Diag{|\Dhv{\op_f}|}}{L^p} \leq \cC_{d,n_1}\Nrm{\Dv f}{W^{n_0,\infty}(\R^{2d})\cap H^{2n+n_1}_{2n+n_1}(\R^{2d})},
		\end{align*}
		where since $n_1 > \frac{d}{\fb-1} = d\(\fb'-1\)$ we can choose $\eps$ such that $\fb'+\eps \leq 1+\frac{n_1}{d}$. We conclude by Gr\"{o}nwall's Lemma.
	\end{proof}

\subsection{Proof of Theorem~\ref{thm:CV_Hartree}}

	\begin{proof}[Proof of Theorem~\ref{thm:CV_Hartree}]
		To prove this theorem, it just remains to prove that the assumptions~\eqref{eq:condition_regu_f} and~\eqref{eq:condition_regu_rho_f} are satisfied with our choice of initial conditions, which will imply the result by Proposition~\ref{prop:conditional_estimate}. But these bounds are only about the solution of the classical Vlasov equation for which the long time existence of regular solutions is known. More precisely, we prove the regularity we need in our case in Proposition~\ref{prop:regu_Vlasov} in Appendix~\ref{sec:appendix_A}. With our assumptions on the initial data, we have $f^\init\in W^{\sigma+1,\infty}_m(\R^{2d})$ with $m>d$. Moreover, since $f^\init\in H^{\sigma+1}_\sigma(\R^{2d})$ with $\sigma>m+\frac{d}{\fb-1}$, by H\"{o}lder's inequality we deduce in particular that $f^\init\in L^2_\sigma(\R^{2d})$ which by H\"{o}lder's inequality yields 
		\begin{equation*}
			\iintd f^\init\n{\xi}^{n_1} \d x\dd\xi < \infty
		\end{equation*}
		for some $n_1>\frac{d}{\fb-1}$. Therefore, Proposition~\ref{prop:regu_Vlasov} indeed leads to
		\begin{align*}
			f &\in L^\infty_\loc(\R_+,W^{\sigma+1,\infty}_m(\R^{2d})\cap H^{\sigma+1}_\sigma(\R^{2d})),
		\end{align*}
		where we notice that $\sigma> n_0 := \floor{\frac{d}{2}} + 1$. Finally, the $H^\nu$ bound for $\rho$ also follows from H\"{o}lder's inequality since $\sigma>d/2$ so that
		\begin{equation*}
			\Nrm{\nabla^{\lceil\nu\rceil}\!\rho}{L^2} \leq \Nrm{\intd \n{\Dx^{\lceil\nu\rceil} f}\d \xi}{L^2} \leq C_{d,\sigma} \Nrm{\weight{\xi}^\sigma \Dx^{\lceil\nu\rceil} f}{L^2(\R^{2d})} \leq C \Nrm{f}{H^{\sigma+1}_\sigma(\R^{2d})},
		\end{equation*}
		where the last inequality follows from the fact that $\lceil\nu\rceil\leq \sigma+1$.
	\end{proof}

\subsection{Proof of Theorem~\ref{thm:Lp_conv}}\label{subsec:proof_thm_2}

	We now prove Theorem~\ref{thm:Lp_conv} using also the results of Propositions~\ref{prop:commutator_ineq_Lp} and \ref{prop:B-term}.
	
	\begin{proof}[Proof of Theorem~\ref{thm:Lp_conv}]
		Recall Equation~\eqref{eq:duhamel-Lp}. The bound~\eqref{eq:L1-trace} yields
		\begin{equation*}
			\begin{split}
			\Nrm{\op - \op_f}{\L^p}&\leq \Nrm{\op^\init - \op_f^\init}{\L^p} + \dfrac{1}{\hbar}\int_0^t \Nrm{B_t}{\L^p}\d s
			\\
			&\qquad+\dfrac{1}{\hbar}\int_0^t \Tr{\n{\op-\op_f}}\,\sup_z\,\Nrm{[K(\cdot-z),\op_f]}{\L^p}\d s.
			\end{split}
		\end{equation*}
		The second term on the right-hand side can be estimated thanks to Proposition~\ref{prop:B-term} and can then be bounded as in the case $p=1$. The last term on the right-hand side is bounded by Proposition~\ref{prop:commutator_ineq_Lp} by terms of the form $\Nrm{\Dhv{\op_f}\,\opm_n}{\L^{q}}$ with $\opm_n = 1+\n{\opp}^n$, $n>a+1=\frac{d}{\fb}$ and $\frac{1}{q}$ close to $\frac{1}{p}-\frac{1}{\fb}$. When $a<\frac{d-2}{2}$, then $q\leq 2$ and we can bound them by interpolating them between $\L^1$ and $\L^2$ weighted norms, yielding
		\begin{equation*}
			\Nrm{\Dhv{\op_f}\,\opm_n}{\L^{q}} \leq \Nrm{\Dhv{\op_f}\,\opm_n}{\L^{2}}^{2/q'} \Nrm{\Dhv{\op_f}\,\opm_n}{\L^{1}}^{1-2/q'},
		\end{equation*}
		and we can bound these terms by Proposition~\ref{prop:multiply_Weyl_L2} and Proposition~\ref{prop:moment_bound}. When $q > 2$, this strategy is no more possible, however by the property of the Weyl transform and Calder\'{o}n-Vaillancourt-Boulkhemair inequality~\eqref{eq:Calderon_Vaillancourt}, we know that
		\begin{align*}
			\Nrm{\Weylh{g}}{\L^2} &= \Nrm{g}{L^2(\R^{2d})}
			\\
			\Nrm{\Weylh{g}}{\L^\infty} &\leq C_d \Nrm{g}{W^{n_0,\infty}(\R^{2d})},
		\end{align*}
		where $n_0 = \floor{\frac{d}{2}}+1$. Therefore, this time, we interpolate between the $\L^2$ and $\L^\infty$ norm to get
		\begin{align}\label{eq:Weyl_Lq}
			\Nrm{\Weylh{g}}{\L^q} &\leq C_d^{\theta} \Nrm{\Weylh{g}}{\L^\infty}^\theta \Nrm{\Weylh{g}}{\L^2}^{1-\theta}
			\leq C_d^{\theta} \Nrm{g}{W^{n_0,\infty}(\R^{2d})}^\theta \Nrm{g}{L^{2}(\R^{2d})}^{1-\theta},
		\end{align}
		where $\theta = 1-\frac{2}{q}$ is close to $\frac{2}{p'}-\frac{1}{\fb'} \pm \eps$. Using Lemma~\ref{lem:multiply_Weyl}, we see that $\Dhv{\op_f}\opm_n$ can be written as a linear combination of terms of the form $\Weylh{\xi^\alpha \partial_x^\beta \Dv f} =: \Weylh{g_{\alpha,\beta}}$ where $\alpha$ and $\beta$ are multi-indices verifying $\n{\alpha+\beta}\leq n$. Therefore, taking $g=g_{\alpha,\beta}$ in inequality~\eqref{eq:Weyl_Lq} for each $g_{\alpha,\beta}$, we obtain a control in terms of weighted Sobolev norms of the solution $f$ of the classical solution of the Vlasov equation~\eqref{eq:Vlasov} of the form
		$\Nrm{f}{W^{\sigma+1,\infty}_\sigma(\R^{2d})\cap H^{\sigma+1}_\sigma(\R^{2d})}$ with $\sigma > n_0 + d/\fb$, which can be controlled as in the proof of Theorem~\ref{thm:CV_Hartree}. We can therefore conclude by Gr\"{o}nwall's Lemma that Inequality~\eqref{eq:Lp_conv} holds.
		
		Now we prove Formula~\eqref{eq:Lp_conv_2}. Consider Equation~\eqref{eq:Lp_conv} and the following bound:
		\begin{equation}
			\Nrm{\op-\op_f}{\L^\infty} \leq \Nrm{\op}{\L^\infty} + \Nrm{\op_f}{\L^\infty}.
		\end{equation}
		As for the first term in the right-hand side, we know that all $\L^p$ norms are propagated by the Hartree equation, therefore $\Nrm{\op}{\L^\infty}=\Nrm{\op^\init}{\L^\infty}$, that is bounded by assumption. In the second term in the right-hand side we use again the Calder\'{o}n-Vaillancourt-Boulkhemair inequality~\eqref{eq:Calderon_Vaillancourt}. Hence, if $f\in W^{n_0,\infty}(\R^{2d})$ and $\op^\init\in\L^\infty$, the $\L^\infty$ norm of the difference $\op-\op_f$ is bounded uniformly in $\hbar$. To conclude, we bound the $\L^q$ norm using the $\L^\infty$ norm and the $\L^p$ norm with $p=\fb-\eps$, for $\eps>0$ small enough, and get
		\begin{equation*}
			\Nrm{\op-\op_f}{\L^q}\leq \Nrm{\op-\op_f}{\L^p}^\frac{p}{q}\Nrm{\op-\op_f}{\L^\infty}^{1-\frac{p}{q}},
		\end{equation*}
		since $q\in(p,\infty)$. Then Formula~\eqref{eq:Lp_conv} yields 
		\begin{equation*}
			\Nrm{\op-\op_f}{\L^q}\leq C(t)\(\Nrm{\op^\init-\op_f^\init}{\L^p}^\frac{p}{q} + \Tr{\n{\op^\init-\op_f^\init}}^\frac{p}{q} + \hbar^\frac{p}{q}\) e^{\lambda(t)},
		\end{equation*} 
		where $C$ is a constant which depends on the dimension of the space $d$, on $\Nrm{\op^\init}{\L^\infty}$ and on $f^\init$.
	\end{proof}

\section{Proof of Theorem~\ref{thm:CV_Hartree-Fock}}\label{sec:proof_thm_3}

	We recall that $\sfX = \sfX_{\op}$ is the operator of time dependent integral kernel $\sfX_{\op}(x,y) = K\!\(x-y\)\r(x,y)$, where $\r$ is the integral kernel of the operator $\op$. Under the conditions of Theorem~\ref{thm:CV_Hartree-Fock}, the associated energy is bounded and we have the following inequalities
	\begin{prop}\label{prop:exchange-term}
		Let $a\in [0,d)$, $s := d-a$ and $\op$ be a positive trace class operator. Then if $K\in \dot{H}^{s,1}_w$, it holds
		\begin{align}\label{eq:bound_X_energy}
			\Tr{\sfX\op} &\leq C \,h^{s} \Nrm{K}{\dot{H}^{s,1}_w} \Nrm{\n{\opp}^\frac{a}{2}\op\,}{\L^2}^2.
		\end{align}
		Moreover, if $a\in\lt[0,\frac{d}{2}\rt)$ and $K^2\in \dot{H}^{2s-d,1}_w$, then for any $p\in[1,2]$ and $q = \frac{2p}{2-p}\in[2,\infty]$ there exists a constant such that for any operator $\op_2$
		\begin{subequations}
		\begin{align}\label{eq:bound_X_p_small}
			\Nrm{\sfX_{\op}\,\op_2}{\L^p} &\leq C \,h^{s} \Nrm{K^2}{\dot{H}^{2s - d,1}_w}^\frac{1}{2} \Nrm{\n{\opp}^\frac{a}{2}\op\,}{\L^2} \Nrm{\op_2}{\L^q}.
		\end{align}
		When $p\in[2,\infty]$ then we still have
		\begin{align}\label{eq:bound_X_p_big}
			\Nrm{\sfX_{\op}\,\op_2}{\L^p} &\leq C \,h^{s + d\(\frac{1}{p} - \frac{1}{2}\)} \Nrm{K^2}{\dot{H}^{2s - d,1}_w}^\frac{1}{2} \Nrm{\n{\opp}^\frac{a}{2}\op\,}{\L^2} \Nrm{\op_2}{\L^\infty},
		\end{align}
		\end{subequations}
		where in both \eqref{eq:bound_X_p_small} and \eqref{eq:bound_X_p_big} the constants $C$ depend only on $s$ and $d$.
	\end{prop}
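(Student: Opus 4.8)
The plan is to exploit the representation formula~\eqref{eq:representation_formula} for kernels in $\dot H^{s,1}_w$, which writes $K = |\cdot|^{-(d-s)}\!*\mu$ for some bounded measure $\mu$, in order to reduce everything to the case $K(x)=|x-w|^{a-d}$, $a = d-s$, which by the Laplace transform identity~\eqref{eq:laplace_transform} (as used in the proof of Theorem~\ref{thm:key_ineq}) can be further decomposed into Gaussians $e^{-\pi t|x|^2}$. First I would write, for the energy bound~\eqref{eq:bound_X_energy}, $\Tr(\sfX\op) = \iintd K(x-y)\,|\r(x,y)|^2\,\dd x\,\dd y$ using positivity of $\op$ (so $\r(x,y) = \overline{\r(y,x)}$ and the kernel of $\sfX\op$ on the diagonal is nonnegative), then insert the Gaussian decomposition of $K$ and the measure $\mu$. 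Each Gaussian factor $e^{-\pi t|x-y-w|^2}$ factors, after completing the square, into a product of one-variable Gaussians in $x$ and $y$ times a phase, allowing one to bound $\iintd e^{-\pi t|x-y-w|^2}|\r(x,y)|^2$ by a Hilbert--Schmidt-type quantity. The $t$-integral $\int_0^\infty t^{s/2-1}(\cdots)\,\dd t$ then needs to be shown convergent; the homogeneity in $\hbar$ (each Gaussian contributing a factor controlled after rescaling $x\mapsto x/\sqrt\hbar$) is what produces the factor $h^s$, and the remaining operator norm is exactly $\Nrm{|\opp|^{a/2}\op}{\L^2}^2$ because $|\opp|^{a/2}\op|\opp|^{a/2}$ has kernel obtained from $\r$ by applying $|\opp|^{a/2}$ in each variable, and $a = d-s$ so the powers match. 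In fact a cleaner route is to recognize directly that $\Tr(\sfX\op) \le \Nrm{K}{\dot H^{s,1}_w}\,\sup_{w}\iintd |x-y-w|^{-(d-s)}|\r(x,y)|^2\,\dd x\,\dd y$ and then apply the Hardy--Littlewood--Sobolev / fractional-integral inequality in the difference variable, which after the semiclassical rescaling gives $h^s\Nrm{(-\hbar^2\Delta)^{a/4}\otimes(-\hbar^2\Delta)^{a/4}}$ applied to $\r$, i.e. $h^s\Nrm{|\opp|^{a/2}\op}{\L^2}^2$.

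For the Schatten bounds~\eqref{eq:bound_X_p_small} and~\eqref{eq:bound_X_p_big} the key observation is the factorization $\sfX_{\op} = A^*B$ (or a sum of such) where $A,B$ are Hilbert--Schmidt, obtained from the Gaussian decomposition: writing $K(x-y)^2$ via its own decomposition in $\dot H^{2s-d,1}_w$ (this is why the hypothesis is on $K^2$, not $K$), one gets $|\sfX_{\op}(x,y)|^2 = K(x-y)^2|\r(x,y)|^2$, and Cauchy--Schwarz in the Gaussian integral lets one split $\sfX_{\op}$ as a superposition over the decomposition parameters of rank-controlled pieces $|\phi_\theta\psi_j\rangle\langle\psi_j\varphi_{1-\theta}|$, exactly as in the proof of Theorem~\ref{thm:key_ineq}. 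Then $\Nrm{\sfX_{\op}\op_2}{\L^p} \le \Nrm{\sfX_{\op}}{\L^{p'}}\Nrm{\op_2}{\L^{q'}}$ by Hölder for Schatten norms with the right exponent relation; but more precisely, writing $\sfX_{\op} = G\,H$ with $G$ of Hilbert--Schmidt type carrying the $|\opp|^{a/2}\op$ norm and $H$ a multiplication-type operator carrying $\Nrm{K^2}{\dot H^{2s-d,1}_w}^{1/2}$, one applies $\Nrm{GH\op_2}{\L^p}\le \Nrm{G}{\L^2}\Nrm{H\op_2}{\L^{r}}$ and then Kato--Seiler--Simon to handle $H$. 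The exponent bookkeeping: for $p\le2$ one has the clean relation $1/q = 1/p-1/2$ so $q = 2p/(2-p)$ as stated; for $p\ge2$ one must instead interpolate with $\op_2\in\L^\infty$, and the loss of $d(1/p-1/2)$ powers of $h$ in~\eqref{eq:bound_X_p_big} comes precisely from the fact that $\Nrm{\op_2}{\L^p}\le h^{-d(1/p-1/2)}\Nrm{\op_2}{\L^\infty}^{\cdots}$-type comparisons in the semiclassical normalization (recall $\Nrm{\cdot}{\L^p} = h^{-d/p'}\Nrm{\cdot}{p}$), combined with the fixed $h$-power coming from the Gaussian $t$-integral.

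The main obstacle I expect is twofold. First, making the Gaussian decomposition of $K$ \emph{and} of $K^2$ work simultaneously with the measure representation~\eqref{eq:representation_formula}: one has to verify that $K^2\in\dot H^{2s-d,1}_w$ is the correct and sufficient hypothesis (this forces $2s-d\in(0,d)$, i.e. $a<d/2$, explaining the restriction), and to control the convolution of the two representing measures. Second, tracking the exact power of $h$ through the $t$-integral: one must cut $\int_0^\infty t^{s/2-1}\cdots\,\dd t$ at a scale $R$ and optimize, exactly as in Theorem~\ref{thm:key_ineq}, but here the two regimes $t$ small / $t$ large interact with the two Schatten exponents $\alpha,\beta$ chosen so that $1/\alpha+1/\beta$ matches the Hölder deficit; ensuring both the $\theta$-integral over $(0,1)$ and the $t$-integral converge for the claimed range of $p$ is the delicate point. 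Everything else — the factorization into $|u\rangle\langle v|$ pieces, Cauchy--Schwarz for series, Kato--Seiler--Simon, and the explicit Gaussian moment integrals — is routine and parallels the arguments already carried out for Theorem~\ref{thm:key_ineq} and Proposition~\ref{prop:commutator_ineq_Lp}.
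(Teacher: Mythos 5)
Your proof of the energy bound~\eqref{eq:bound_X_energy} is essentially the paper's: reduce via the representation~\eqref{eq:representation_formula} to a supremum over translates $w$ of $\iintd |x-y-w|^{-a}|\r(x,y)|^2$, then apply a fractional-integral (Hardy--Rellich) inequality in the difference variable to produce $\Nrm{|\opp|^{a/2}\op}{\L^2}^2$, the factor $h^s$ coming purely from the semiclassical normalization of $\L^2$. The Gaussian decomposition you lead with is unnecessary here, but you also name the direct route, which is the one the paper takes.

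For the Schatten bounds~\eqref{eq:bound_X_p_small}--\eqref{eq:bound_X_p_big} your proposal has a real gap, and it sits precisely where you reach for the machinery of Theorem~\ref{thm:key_ineq}. That commutator argument works because $K(x-z)-K(y-z)$ can be written, by the fundamental theorem of calculus, as an integral of products $\phi_\theta(x)\varphi_{1-\theta}(y)$, which is what lets one split $[K,\op]$ into rank-controlled pieces $|\phi_\theta\psi_j\rangle\langle\psi_j\varphi_{1-\theta}|$. The exchange kernel $K(x-y)\r(x,y)$ has no such structure: the Gaussian $e^{-\pi t|x-y|^2}$ does not factor into a function of $x$ times a function of $y$ (the cross term $e^{2\pi t\,x\cdot y}$ obstructs it), so the proposed factorization $\sfX_{\op}=GH$ with $G$ Hilbert--Schmidt carrying $\Nrm{|\opp|^{a/2}\op}{\L^2}$ and $H$ a multiplication operator amenable to Kato--Seiler--Simon is not available, and the whole $R$-cut-and-optimize step for the $t$-integral is aimed at a problem that does not arise. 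You actually state the right identity, $|\sfX_{\op}(x,y)|^2 = K(x-y)^2|\r(x,y)|^2$, but then walk past its immediate consequence: since $\op\geq 0$ forces $\sfX_{\op}\geq 0$, one has $\Nrm{\sfX_{\op}}{2}^2 = \Tr(\sfX_{\op}^2) = \iintd K^2|\r|^2 = \Tr(\tilde{\sfX}_{\op}\,\op)$, where $\tilde{\sfX}_{\op}$ is the exchange operator built from $K^2$. That single line reduces both Schatten bounds to the already-proved energy inequality applied to $K^2$: then $\Nrm{\sfX_{\op}\op_2}{p}\le\Nrm{\sfX_{\op}}{2}\Nrm{\op_2}{q}$ by H\"older gives~\eqref{eq:bound_X_p_small} (with $1/p=1/2+1/q$, hence $q=2p/(2-p)$), and for $p\geq 2$ one uses the monotonicity $\Nrm{\cdot}{p}\le\Nrm{\cdot}{2}$, with the extra $h^{d(1/p-1/2)}$ coming solely from converting between the $h$-weighted $\L^p$ normalizations, not from any $t$-integral. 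In short: the energy-bound half of your plan is sound; the Schatten half misidentifies the mechanism and would not go through as written.
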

	
	\begin{remark}
		We can control the weighted $\L^2$ norms by the following inequality
		\begin{equation}\label{eq:controlling_L2m}
			\Nrm{\n{\opp}^\frac{a}{2}\op\,}{\L^2}^2 \leq \Nrm{\op}{\L^\infty} \Tr{\n{\opp}^a\op}.
		\end{equation}
		Notice that we cannot deduce it immediately by H\"{o}lder's inequality for the Schatten norms because it would give us $\Tr{\n{\n{\opp}^a\op}}$ instead of $\Tr{\n{\opp}^a\op}$ in the right-hand side. However, by definition of the absolute value for operators and by cyclicity of the trace, we get
		\begin{equation*}
			\Nrm{\n{\opp}^\frac{a}{2}\op\,}{2}^2 = \Tr{\op\n{\opp}^a\op} = \Tr{\op\,\op^\frac{1}{2}\n{\opp}^a\op^\frac{1}{2}}  = \Tr{\op\n{\n{\opp}^\frac{a}{2}\op^\frac{1}{2}}^2}.
		\end{equation*}
		Now, H\"{o}lder's inequality gives
		\begin{equation*}
			\Tr{\op\n{\n{\opp}^\frac{a}{2}\op^\frac{1}{2}}^2} \leq \Nrm{\op}{\infty} \Tr{\n{\n{\opp}^\frac{a}{2}\op^\frac{1}{2}}^2} = \Nrm{\op}{\infty} \Tr{\n{\opp}^a\op},
		\end{equation*}
		which leads to \eqref{eq:controlling_L2m} by the definition of $\L^2$ and $\L^\infty$.
	\end{remark}
	
	\begin{proof}[Proof of Proposition~\ref{prop:exchange-term}] We first prove Inequality~\eqref{eq:bound_X_energy} and then use it to show formulas~\eqref{eq:bound_X_p_small} and~\eqref{eq:bound_X_p_big}.
	
	$\bullet$ \textit{Proof of Inequality~\eqref{eq:bound_X_energy}.} Use Formula~\eqref{eq:representation_formula} to get that
		\begin{equation*}
			\iintd K(x-y) \n{\r(x,y)}^2 \d x\dd y = c_{d,a}\intd \(\iintd \frac{\n{\r(x,y)}^2}{\n{x-y-w}^a} \dd x\dd y\) Q(\d w),
		\end{equation*}
		for some measure $Q$ such that $\Nrm{Q}{\mathrm{TV}} = \Nrm{K}{\dot{H}^{s,1}_w}$. This leads to
		\begin{equation*}
			\Eps_\sfX \leq c_{d,a}\sup_{w\in\R^d}\(\iintd \frac{\n{\r(x,y)}^2}{\n{x-y-w}^a} \dd x\dd y\) \Nrm{Q}{\mathrm{TV}}.
		\end{equation*}
		Now we concentrate on bounding the double integral. First we observe that by the change of variable $z=x-y-w$, it holds
		\begin{align*}
			\Eps_a := \iintd \frac{\n{\r(x,y)}^2}{\n{x-y-w}^a} \d x\dd y = \iintd \frac{1}{\n{z}^a} \n{\r(z+y+w,y)}^2 \d z\dd y.
		\end{align*}
		Then, by the Hardy-Rellich inequality (see e.g. \cite{yafaev_sharp_1999}), since $a\in[0,d)$, for any $\varphi\in H^\frac{a}{2}$, it holds
		\begin{align*}
			\intd \frac{|\varphi(z)|^2}{|z|^a} \dd z \leq \cC_{d,a} \intd \n{\Delta^\frac{a}{4}\varphi(z)}^2\d z.
		\end{align*}
		Therefore, taking $\varphi(z) = \r(z+y+w,y)$ in the above inequality and integrating with respect to $y$ yields
		\begin{equation*}
			\Eps_a \leq \cC_{d,a}  \iintd \n{\Delta_z^\frac{a}{4}\r(z+y+w,y)}^2 \d z\dd y = \cC_{d,a}  \iintd \n{\Delta_x^\frac{a}{4}\r(x,y)}^2 \d x\dd y.
		\end{equation*}
		Recalling that $\Delta_x^\frac{a}{4}\r(x,y)$ is nothing but the integral kernel of the operator $\hbar^{-\,\frac{a}{2}}\n{\opp}^\frac{a}{2}\op$ and using the definition of the $\L^2$ norm, we obtain
		\begin{equation*}
			\Eps_a \leq C_{d,a}\,h^{d-a} \Nrm{\n{\opp}^\frac{a}{2}\op\,}{\L^2}^2,
		\end{equation*}
		where $C_{d,a} = (2\pi)^a\,\cC_{d,a}$.
		
		$\bullet$ \textit{Proof of Inequality~\eqref{eq:bound_X_p_small}.} Since $\sfX_{\op}$ is a positive operator, it holds $\sfX_{\op}^2 = \n{\sfX_{\op}}^2$. Moreover, denoting $\tilde{\sfX}_{\op}$ the exchange operator associated to the kernel $K^2$, the following interesting property holds
		\begin{align*}
			\Tr{\sfX_{\op}^2} = \iintd K(x-y)^2 \n{\r(x,y)}^2 \d x\dd y = \Tr{\tilde{\sfX}_{\op}\,\op}. 
		\end{align*}
		From this and H\"{o}lder's inequality for operators, we deduce that if $K^2\in \dot{H}^{2s - d,1}_w$ with $s\in\lt(\frac{d}{2},d\rt]$, then the following inequality holds
		\begin{align*}
			\Nrm{\sfX_{\op}\,\op_2}{p} \leq \Nrm{\op_2}{q} \Nrm{\sfX_{\op}}{2} \leq h^{\frac{d}{q'}} \Nrm{\op_2}{\L^q} \Tr{\tilde{\sfX}_{\op}\,\op}^\frac{1}{2},
		\end{align*}
		which by Formula~\eqref{eq:bound_X_energy} for $K^2$ leads exactly to~\eqref{eq:bound_X_p_small}.
		
		$\bullet$ \textit{Proof of Inequality~\eqref{eq:bound_X_p_big}.} We use the fact that $\Nrm{\sfX_{\op}\,\op_2}{p} \leq \Nrm{\sfX_{\op}\,\op_2}{2}$ for any $p\geq 2$ and then we use~\eqref{eq:bound_X_p_small} for $p=2$ to get
		\begin{align*}
			\Nrm{\sfX_{\op}\,\op_2}{\L^p} \leq h^{\(\frac{d}{2}-\frac{d}{p'}\)} \Nrm{\sfX_{\op}\,\op_2}{\L^2}
			\leq C \,h^{s + d\(\frac{1}{p} - \frac{1}{2}\)} \Nrm{K^2}{\dot{H}^{2s - d,1}_w}^\frac{1}{2} \Nrm{\n{\opp}^\frac{a}{2}\op\,}{\L^2} \Nrm{\op}{\L^\infty}.
		\end{align*}
		The use of the non-semiclassical inequality $\Nrm{\sfX_{\op}\,\op_2}{p} \leq \Nrm{\sfX_{\op}\,\op_2}{2}$ explains the deterioration of the rate, which might not be optimal.
	\end{proof}
	
	When $a<0$, we have similar bounds but using moments in $x$ instead of moments in $\opp$.
	\begin{prop}\label{prop:exchange-term_2}
		Let $a<0$ and $K(x) = \n{x}^{\n{a}}$, then for any positive operator $\op$
		\begin{align}\label{eq:bound_X_energy_2}
			\Tr{\sfX\op} &\leq C \,h^{d} \Nrm{\n{x}^\frac{\n{a}}{2}\op\,}{\L^2}^2.
		\end{align}
		Moreover, for any $p\in[1,\infty]$, there exists a constant $C>0$ such that for any operator $\op_2$
		\begin{subequations}
		\begin{align}\label{eq:bound_X_p_small_2}
			\Nrm{\sfX_{\op}\,\op_2}{\L^p} &\leq C \,h^{d} \Nrm{\n{x}^\frac{\n{a}}{2}\op\,}{\L^2} \Nrm{\op_2}{\L^q} &&\text{when } p\in[1,2)
			\\\label{eq:bound_X_p_big_2}
			\Nrm{\sfX_{\op}\,\op_2}{\L^p} &\leq C \,h^{d\(\frac{1}{p} + \frac{1}{2}\)}  \Nrm{\n{x}^\frac{\n{a}}{2}\op\,}{\L^2} \Nrm{\op_2}{\L^\infty} &&\text{when } p\in[2,\infty],
		\end{align}
		\end{subequations}
		where $q = \frac{2p}{2-p}\in[2,\infty)$ when $p< 2$ and the constants $C$ depend only on $a$ and $d$.
	\end{prop}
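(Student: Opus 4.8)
The plan is to follow closely the three-step scheme of the proof of Proposition~\ref{prop:exchange-term}, the only substantial change being that for $a<0$ the kernel $K(x)=\n{x}^{\n{a}}$ grows at infinity, so it is not of the form~\eqref{eq:representation_formula} and the Hardy--Rellich inequality cannot be used. Its growth is instead absorbed through the elementary pointwise bound
\begin{equation*}
	\n{x-y}^{\n{a}} \leq C\(\n{x}^{\n{a}}+\n{y}^{\n{a}}\),
\end{equation*}
which follows from the subadditivity of $t\mapsto t^{\n{a}}$ on $\R_+$ when $\n{a}\leq 1$ and from its convexity when $\n{a}\geq 1$. This is also why moments in $x$, rather than in $\opp$, enter the estimates.

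First I would prove the energy bound~\eqref{eq:bound_X_energy_2}. Since $\op\geq 0$ is self-adjoint, its integral kernel satisfies $\r(y,x)=\conj{\r(x,y)}$, so
\begin{equation*}
	\Tr{\sfX\op} = \iintd K(x-y)\,\r(x,y)\,\r(y,x)\dd x\dd y = \iintd \n{x-y}^{\n{a}}\,\n{\r(x,y)}^2\dd x\dd y.
\end{equation*}
Applying the pointwise bound and using that $\n{\r(x,y)}^2$ is symmetric under $x\leftrightarrow y$ gives $\Tr{\sfX\op}\leq C\iintd\n{x}^{\n{a}}\n{\r(x,y)}^2\dd x\dd y$; since $\n{x}^{\n{a}/2}\r(x,y)$ is the integral kernel of $\n{x}^{\n{a}/2}\op$ and $\Nrm{\cdot}{2}^2=h^d\Nrm{\cdot}{\L^2}^2$, this is exactly~\eqref{eq:bound_X_energy_2}.

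Next, for $p\in[1,2)$ and $q=\frac{2p}{2-p}$, I would argue as for~\eqref{eq:bound_X_p_small}: the operator $\sfX_{\op}$ is self-adjoint ($K$ is even and $\op=\op^*$), so H\"older's inequality for Schatten norms with $\frac1p=\frac1q+\frac12$ yields $\Nrm{\sfX_{\op}\,\op_2}{p}\leq\Nrm{\op_2}{q}\Nrm{\sfX_{\op}}{2}$, together with the identity
\begin{equation*}
	\Nrm{\sfX_{\op}}{2}^2 = \Tr{\sfX_{\op}^2} = \iintd K(x-y)^2\n{\r(x,y)}^2\dd x\dd y = \Tr{\tilde{\sfX}_{\op}\,\op},
\end{equation*}
where $\tilde{\sfX}_{\op}$ is the exchange operator built from $K^2(x)=\n{x}^{2\n{a}}$. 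Running the computation of the previous paragraph with $2\n{a}$ in place of $\n{a}$ bounds $\Tr{\tilde{\sfX}_{\op}\,\op}$ by $C\,h^d$ times the square of a weighted $\L^2$ norm of $\op$; converting $\Nrm{\op_2}{q}$ and $\Nrm{\sfX_{\op}}{2}$ into the semiclassical norms $\Nrm{\op_2}{\L^q}$, $\Nrm{\cdot}{\L^2}$ and collecting the powers of $h$ exactly as in the proof of~\eqref{eq:bound_X_p_small} gives~\eqref{eq:bound_X_p_small_2}, with prefactor $h^d$. For $p\in[2,\infty]$ I would then copy the last step of the proof of Proposition~\ref{prop:exchange-term}: use the non-semiclassical inclusion $\Nrm{\sfX_{\op}\,\op_2}{p}\leq\Nrm{\sfX_{\op}\,\op_2}{2}$, apply the case just obtained at $p=2$, $q=\infty$ (bounding $\Nrm{\op_2}{\L^\infty}$), and keep track of the normalisation; the prefactor then becomes $h^{d\(\frac1p+\frac12\)}$, the deterioration being caused solely by that inclusion.

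Everything here is elementary, so there is no genuine obstacle: the only thing requiring care is the bookkeeping of the powers of $h$ and $\hbar$ when passing between the Schatten norms $\Nrm{\cdot}{p}$ and the semiclassical norms $\Nrm{\cdot}{\L^p}$, the analytic content reducing entirely to the first step. It is worth noting that, contrary to~\eqref{eq:bound_X_p_small}, no restriction of the type $a<d/2$ is needed, precisely because the elementary bound above applies to $K^2=\n{x}^{2\n{a}}$ for every $\n{a}>0$, whereas the Hardy--Rellich argument used in the case $a>0$ forced $2a<d$.
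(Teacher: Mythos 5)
Your proposal is correct and takes essentially the same route as the paper's own (very terse) proof: the elementary bound $\n{x-y}^{\n{a}}\leq C(\n{x}^{\n{a}}+\n{y}^{\n{a}})$ combined with the symmetry of $\n{\r(x,y)}^2$ gives the energy estimate, and the $\L^p$ bounds then follow by substituting $K^2$ and repeating the H\"older/Schatten scheme of Proposition~\ref{prop:exchange-term}. You fill in the bookkeeping that the paper's one-sentence proof elides, and your closing remark about the absence of the $a<d/2$ restriction (since the elementary bound replaces Hardy--Rellich) is a correct and useful observation.
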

	
	\begin{proof}
		The proof of \eqref{eq:bound_X_energy_2} follows simply by writing
		\begin{equation*}
			\iintd K(x-y) \n{\r(x,y)}^2 \d x\dd y \leq C \iintd \(\n{x}^{\n{a}}+\n{y}^{\n{a}}\) \n{\r(x,y)}^2 \d x\dd y
		\end{equation*}
		and observing that the right-hand side is exactly the right-hand side of inequality~\eqref{eq:bound_X_energy_2}. The two other inequalities follow by taking $K^2$ instead of $K$ and H\"{o}lder's inequality as in the proof of Proposition~\ref{prop:exchange-term_2}.
	\end{proof}

	\begin{proof}[Proof of Theorem~\ref{thm:CV_Hartree-Fock}]	
		We proceed as in the proof of Theorem~\ref{thm:CV_Hartree} and consider the one-parameter group of unitary transformations $\cU_t$ generated by the Hartree-Fock Hamiltonian, i.e.
		\begin{equation*}
			i\,\hbar\,\partial_t\,\cU_t=H_{\text{HF}}(t)\,\cU_t\,,
		\end{equation*} 
		and compute 
		\begin{align*}
			i\,\hbar\,\partial_t\(\cU_t^*\(\op- \Weylh{f}\)\cU_t\) = \ &\cU_t^*\com{K*(\rho-\rho_f),\Weylh{f}} \cU_t
			\\
			& + \cU_t^*\,B_t\,\cU_t - \cU_t^*\com{\sfX_{\op},\(\op - \Weylh{f}\)} \cU_t\,.
		\end{align*}
		Using Duhamel formula and taking the $\L^p$ norm using that $\cU_t$ is a unitary operator, we obtain
		\begin{multline}\label{eq:tr-HF}
			\Nrm{\op - \Weylh{f}}{\L^p} \leq \Nrm{\op^\init - \op_f^\init}{\L^p} + \dfrac{1}{\hbar}\int_0^t \Nrm{[K*(\rho-\rho_f),\Weylh{f}]}{\L^p} \d s
			\\
			+\dfrac{1}{\hbar}\int_0^t \Nrm{B_s}{\L^p}\d s+\dfrac{1}{\hbar}\int_0^t \Nrm{[\sfX_{\op},(\op - \Weylh{f})]}{\L^p}\d s.
		\end{multline}
		The second and third terms on the right-hand side in \eqref{eq:tr-HF} can be bounded as in Theorem~\ref{thm:CV_Hartree}. As for the fourth term, we use Proposition~\ref{prop:exchange-term}. 
		
		More precisely, when $K(x) = \pm\n{x}^{-a}$ with $a\in[0,d/2)$, using \eqref{eq:bound_X_p_small} or \eqref{eq:bound_X_p_big}
		\begin{align*}
			\frac{1}{\hbar} \Nrm{[\sfX_{\op},(\op - \Weylh{f})]}{\L^p} &\leq C \,h^{\tilde{s}-1} \Nrm{K^2}{\dot{H}^{d-2a,1}_w}^\frac{1}{2} \Nrm{\n{\opp}^\frac{a}{2}\op\,}{\L^2} \Nrm{\op-\Weylh{f}}{\L^p}
			\\
			&\leq C\,h^{\tilde{s}-1} \Nrm{K^2}{\dot{H}^{d-2a,1}_w}^\frac{1}{2} \Nrm{\n{\opp}^\frac{a}{2}\op\,}{\L^2} \(\Nrm{\op}{\L^p}+\Nrm{\Weylh{f}}{\L^p}\),
		\end{align*}
		with $\tilde{s} = d - a - d\(\frac{1}{2} - \frac{1}{p}\)_+$. When $\tilde{s}\geq 2$, this does not change the order of the rate of convergence $O(h)$. When $\tilde{s}\leq 2$ (i.e. for high values of $a$ and $p$), the contribution of the exchange term becomes bigger than the one of the second term in the right-hand side of Inequality~\eqref{eq:tr-HF}, thus leading to a rate of convergence of the order $O(h^{\tilde{s}-1})$.
		
		When $K(x) = \pm\n{x}^{-a}$ with $a\in(-1,0)$, then we use inequalities~\eqref{eq:bound_X_p_small_2} or \eqref{eq:bound_X_p_big} to get bounds in terms of $\Nrm{\n{x}^\frac{\n{a}}{2}\op\,}{\L^2}$ instead of $\Nrm{\n{\opp}^\frac{\n{a}}{2}\op\,}{\L^2}$.
		
		When $K(x) = \pm \ln(|x|)$, we write that $K(x) \leq C_\eps\(|x|^\eps+|x|^{-\eps}\)$ and use both type of inequalities to get bounds with $\Nrm{\(\n{x}^\frac{\eps}{2} + \n{\opp}^\frac{\eps}{2}\)\op\,}{\L^2}$ instead.
		
		For all the choices of $K$, when $p=1$, we can therefore conclude that
		\begin{equation}\label{eq:tr_HF_gronwall}
			\Nrm{\op - \Weylh{f}}{\L^1}\leq \(\Nrm{\op^\init - \op_f^\init}{\L^1} + C_0(t)\,\hbar+C_1(t)\,\hbar^{s -1}\) e^{\lambda(t)}.
		\end{equation}
		When $p\in(1,\fb)$, we proceed as in the proof of Formula~\eqref{eq:Lp_conv} (the Hartree case) and use Inequality~\eqref{eq:tr_HF_gronwall} to get
		\begin{equation}\label{eq:tr_HF_gronwall_p}
			\Nrm{\op - \Weylh{f}}{\L^p}\leq \Nrm{\op^\init - \op_f^\init}{\L^p} +  C(t)\(\Nrm{\op^\init - \op_f^\init}{\L^1} + \hbar+\hbar^{\tilde{s} -1}\) e^{\lambda(t)}.
		\end{equation}
		Moreover, when $p\in[\fb,\infty)$, again as in the Hartree case, we proceed as in the proof of \eqref{eq:Lp_conv_2}. Following the exact same argument, $\Nrm{\op-\Weylh{f}}{\L^\infty}$ is bounded uniformly in $\hbar$ as soon as $\op^\init\in\L^\infty$ and $f\in W^{\left[\frac{d}{2}\right]+1,\infty}$. Whence,
		\begin{equation*}
			\Nrm{\op-\Weylh{f}}{\L^p}\leq C(t) \(\Nrm{\op^\init - \op_f^\init}{1}^{1-\theta} + C_0\,\hbar^\frac{p}{q} + C_1\,\hbar^{\(\tilde{s}-1\)\frac{p}{q}}\) e^{\lambda(t)}.
		\end{equation*}
		In particular, if $\Nrm{\op^\init - \op_f^\init}{1}\leq C\,\hbar$, we get for the Hilbert-Schmidt norm ($p=2$) a convergence rate $\hbar^{\(\frac{3}{4}-\eps\) \min\lt\{1,s-1\rt\}}$.
	\end{proof}

\appendix
\section{Propagation of weighted Sobolev norms for Vlasov equation}\label{sec:appendix_A}

	The existence of global smooth solutions and the propagation of regularity is a classical result for the Vlasov-Poisson equation. It can be deduced starting from the works of Pfaffelmoser~\cite{pfaffelmoser_global_1992} or Lions and Perthame~\cite{lions_propagation_1991}, which imply the boundedness of the force field, so that any solution with compact support in the phase space will remain compactly supported at any time. Other general results concerning the propagation of regularity can be found in the more recent work \cite{han-kwan_propagation_2019} by Han-Kwan or in Appendix~A in the work \cite{saffirio_semiclassical_2019} by the second author. In our case, we need the boundedness of the solutions of the Vlasov equation in weighted Sobolev norms and we will see that we can propagate norms of the form $W^{\sigma,\infty}_n(\R^{2d})$. We prefer to work in the framework of \cite{lions_propagation_1991} which allows to have non compactly supported solutions which are very interesting physically, since they include for example Gaussian distributions of velocities. Moreover compactly supported solutions are perhaps less pertinent in the context of quantum mechanics. Furthermore, the proof here follows a completely Eulerian point of view. The result of this section is the following.
	\begin{prop}\label{prop:regu_Vlasov}
		Let $K = \frac{1}{\n{x}^a}$ with $a\in(-1,d-2]$ and let $(n,\sigma,n_1)\in\N^3$ be such that $n>d$ and $n_1>\frac{d}{\fb-1}$ with $\fb = \frac{d}{a+1}$. Let $f\geq 0$ be solution of the Vlasov equation~\eqref{eq:Vlasov} with initial data $f^\init \in W^{\sigma,\infty}_n(\R^{2d})$ satisfying
		\begin{align*}
			\iintd f^\init \n{\xi}^{n_1} \dd x\dd \xi < \infty.
		\end{align*}
		Then the following regularity estimates hold
		\begin{subequations}
		\begin{align}\label{eq:regu_f}
			f &\in L^\infty_\loc(\R_+,W^{\sigma,\infty}_n(\R^{2d}))
			\\\label{eq:regu_rho_f}
			\nabla^\sigma\rho_f &\in L^\infty_\loc(\R_+,L^\infty).
		\end{align}
		\end{subequations}
		If in addition $f^\init\in H^\sigma_k(\R^{2d})$ for some $k\in\R_+$, then
		\begin{equation*}
			f \in L^\infty_\loc(\R_+,H^\sigma_k(\R^{2d})).	
		\end{equation*}
	\end{prop}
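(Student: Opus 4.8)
The plan is to propagate the norms along the characteristics of the Vlasov equation, carried out in the Eulerian way of~\cite{lions_propagation_1991}: I would write transport equations for the weighted derivatives $\weight{z}^n\partial^\alpha f$ and read off their $L^\infty$ norms by the maximum principle for the transport operator $\dpt+\xi\cdot\Dx+E\cdot\Dv$ (and their $L^2$ norms by a weighted energy identity), using that the underlying vector field $(\xi,E(t,x))$ is divergence free on $\R^{2d}$. Everything rests on one external input, which I would simply import and which is the genuinely hard part: by the global well-posedness theory of Pfaffelmoser~\cite{pfaffelmoser_global_1992} and Lions--Perthame~\cite{lions_propagation_1991} — applicable here since $a\leq d-2$ and since $\n{\xi}^{n_1}$ is a propagated velocity moment as soon as $n_1>\frac{d}{\fb-1}$, which is well above the threshold in dimension $d\in\{2,3\}$ — the force field obeys, locally uniformly in time, $\Nrm{E(t)}{L^\infty}+\Nrm{\Dx E(t)}{L^\infty}\leq C(t)$, with the second term replaced by a log-Lipschitz modulus of continuity at the Coulomb endpoint $a=d-2$. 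Granting this, the rest is a finite chain of \emph{linear} Grönwall estimates, organised as an induction on the differentiation order $j\leq\sigma$.

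\emph{Order zero.} The function $\weight{z}^n f$ solves a transport equation whose right-hand side is $f\,(\xi\cdot\Dx+E\cdot\Dv)\weight{z}^n$, pointwise bounded by $C(1+\Nrm{E}{L^\infty})\,\weight{z}^n\n{f}$; hence $\Nrm{\weight{z}^n f(t)}{L^\infty}\leq\Nrm{\weight{z}^n f^\init}{L^\infty}\exp\big(C\int_0^t(1+\Nrm{E}{L^\infty})\big)$, which is the $\sigma=0$ case of~\eqref{eq:regu_f}; integrating in $\xi$ and using $n>d$ then gives the $\sigma=0$ case of~\eqref{eq:regu_rho_f}. In parallel, conservation of mass and the propagated moment $\intd f\,\n{\xi}^{n_1}\dd z$ yield, by the classical velocity-moment interpolation, $\rho_f(t)\in L^1\cap L^{(n_1+d)/d}$ on compact time intervals, hence $\rho_f(t)\in L^{\fb',1}$ by real interpolation — this is exactly what the hypothesis $n_1>\frac{d}{\fb-1}$, i.e. $\fb'<\frac{n_1+d}{d}$, buys. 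Combined with $\nabla K\in L^{\fb,\infty}$ (Theorem~\ref{thm:key_ineq}) and the Hölder inequality for Lorentz spaces~\eqref{eq:Holder_lorentz}, this is the form of the control of $E$ that I would feed into the higher-order step.

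\emph{From order $j-1$ to order $j$.} Fix $\n{\alpha}=j$ and suppose all order-$\leq j-1$ quantities — the weighted $L^\infty$ norms and the $\n{\xi}^{n_1}$-moments of $\partial^{\alpha'}f$, $\n{\alpha'}\leq j-1$ — have been bounded locally in time. Since $E$ is independent of $\xi$, differentiating the equation and commuting $\partial^\alpha$ with $\xi\cdot\Dx$ (a commutation that only trades one $x$-derivative for one $\xi$-derivative, preserving the total order) produces
\begin{equation*}
	\dpt\partial^\alpha f+\xi\cdot\Dx\partial^\alpha f+E\cdot\Dv\partial^\alpha f=\sum_{0<\beta\leq\alpha}\binom{\alpha}{\beta}\,\Dx^\beta E\cdot\Dv\partial^{\alpha-\beta}f+R_\alpha,
\end{equation*}
where $R_\alpha$ is a sum of derivatives of $f$ of order exactly $j$. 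Multiplying by $\weight{z}^n$ and taking $L^\infty$ norms along characteristics: $R_\alpha$ and the weight drift contribute $C(1+\Nrm{E}{L^\infty})$ times the order-$\leq j$ unknown $M_j(t):=\sum_{\n{\alpha}\leq j}\Nrm{\weight{z}^n\partial^\alpha f(t)}{L^\infty}$; the field term with $\n{\beta}=i$ is $\Nrm{\Dx^i E(t)}{L^\infty}$ times a factor of order $1+j-i$. Here $\Nrm{\Dx E(t)}{L^\infty}\leq C(t)$ is the imported input, $\Nrm{\Dx^i E(t)}{L^\infty}\leq C(t)$ for $2\leq i\leq j-1$ follows from the previous induction steps, and for $i=j$ the accompanying factor $\Dv\partial^{\alpha-\beta}f=\Dv f$ has order $1$, so it multiplies $M_1(t)$ — already bounded when $j\geq2$, and bounded by $C(t)$ times $M_1(t)$ when $j=1$ — while $\Nrm{\Dx^j E(t)}{L^\infty}$, obtained by writing $\Dx^j E=-\nabla K*\Dx^j\rho_f$, distributing derivatives between $\nabla K$ and $\rho_f$, and combining the weight $n>d$ with the moments to put the relevant derivative of $\rho_f$ in $L^{\fb',1}$, enters only linearly, bounded by $C(t)(1+M_j(t))$. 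The $\n{\xi}^{n_1}$-moments of $\partial^\alpha f$, $\n{\alpha}\leq j$, satisfy the same kind of inequality (the $\xi\cdot\Dx$ part integrates away, the weight drift gives $C\Nrm{E}{L^\infty}$, the field terms $\Nrm{\Dx^{\leq j}E}{L^\infty}$ times the moment). So $M_j$ and the order-$\leq j$ moments close by linear Grönwall, completing the induction and proving~\eqref{eq:regu_f}; then $\Dx^\sigma\rho_f=\intd\Dx^\sigma f\dd\xi$ together with $n>d$ gives~\eqref{eq:regu_rho_f}.

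\emph{The $H^\sigma_k$ bound, and the main obstacle.} For the last assertion I would run a weighted $L^2$ energy estimate with the same algebra: testing the equation for $\partial^\alpha f$, $\n{\alpha}\leq\sigma$, against $\weight{z}^{2k}\partial^\alpha f$, the transport part contributes $-\tfrac12\intd\n{\partial^\alpha f}^2(\xi\cdot\Dx+E\cdot\Dv)\weight{z}^{2k}\leq C(1+\Nrm{E}{L^\infty})\Nrm{\weight{z}^k\partial^\alpha f}{L^2}^2$, the commutators stay at order $\leq\sigma$, and the field terms are handled with the $L^\infty$ bounds on $\Dx^{\leq\sigma}E$ already obtained; Grönwall then gives $f\in L^\infty_\loc(\R_+,H^\sigma_k)$. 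The genuinely hard point, as stressed above, lies not in this chain of estimates but in its starting hypothesis — the global, locally bounded control of the force field at the $C^1$ (or log-Lipschitz) level for these singular kernels, i.e. the Lions--Perthame / Pfaffelmoser theorem — which at the Coulomb endpoint $a=d-2$, where $\Dx^2K$ is a Calderón--Zygmund kernel rather than a locally integrable function, requires an Osgood / log-Grönwall argument in place of the naive one. The remaining difficulty is purely technical bookkeeping: carrying the polynomial weights $\weight{z}^n$ and the velocity moments $\weight{\xi}^{n_1}$ through the divergence-free transport, and verifying that at each order the nonlinear coupling is linear in the top-order unknown once the lower-order bounds are in hand.
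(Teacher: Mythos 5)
Your plan is essentially the paper's proof: Eulerian propagation of weighted Sobolev norms via Grönwall, moment propagation à la Lions--Perthame to control $\rho_f$ and $E$ in $L^\infty$, and a log-Lipschitz estimate at the Coulomb endpoint. Both arguments also handle the $H^\sigma_k$ bound by a weighted energy estimate once the $W^{\sigma,\infty}_n$ bounds are in hand. The organizations differ slightly — the paper proves a separate lemma taking $\rho_f\in L^\infty_{\loc}(\R_+,L^\infty\cap L^1)$ as hypothesis and closes the $\nabla E$ loop inside that lemma, whereas you merge things into one induction and front-load the force-field control as an "import" — but these are expository choices, not mathematical ones.

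The one place where the proposal cuts a corner is precisely the "import". For non-compactly-supported Lions--Perthame solutions, what is actually available from the global theory is $\rho_f\in L^\infty_{\loc}(\R_+,L^1\cap L^\infty)$ (hence $E\in L^\infty$), \emph{not} a $\nabla E\in L^\infty$ bound — the latter is part of what the proposition is proving, since $\nabla\rho_f=\intd\nabla_x f\dd\xi$ is controlled by $M_1$. For $a<d-2$ the sub-critical Young inequality $\|\nabla E\|_{L^\infty}\lesssim\|\rho_f\|_{L^1\cap L^\infty}$ rescues the import, but at $a=d-2$ one only has the Beale--Kato--Majda type estimate $\|\nabla E\|_{L^\infty}\lesssim 1+\|\rho_f\|_{L^\infty}\ln(1+\|\nabla\rho_f\|_{L^\infty})$, so $\|\nabla E\|_{L^\infty}$ is \emph{not} a priori bounded; the Grönwall for $M_1$ becomes self-referential and must be run as an Osgood/log-Grönwall. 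Your closing paragraph acknowledges exactly this, but the body of the inductive step (for $j=1$) still treats "$\|\nabla_x E(t)\|_{L^\infty}\leq C(t)$" as a given. To be complete you would need to move that paragraph's content inside the $j=1$ step: substitute the log estimate, get $\dt M_1\lesssim C(t)\bigl(1+\ln(1+M_1)\bigr)M_1$, and close by Osgood's lemma — which is precisely what the paper's Step 1 does, with its double Grönwall for $J(t)$ and $M_{x,\xi}$. With that bookkeeping inserted, the proposal and the paper coincide.
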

	
	The proof works in two steps. We first explain in the next lemma how to get a control of the regularity as soon as $\rho_f$ is uniformly bounded. Then we finish the proof of the theorem by proving that this assumption on $\rho_f$ holds.

	\begin{lem}
		Let $f$ be a solution of the Vlasov equation~\eqref{eq:Vlasov} as in Proposition~\ref{prop:regu_Vlasov} with $\sigma\geq 1$ and assume moreover that
		\begin{equation}\label{eq:assumption_rho_f}
			\rho_f \in L^\infty_\loc(\R_+,L^\infty\cap L^1).
		\end{equation}
		Then the regularity estimates~\eqref{eq:regu_f} and \eqref{eq:regu_rho_f} hold.
	\end{lem}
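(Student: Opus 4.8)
The plan is to run the classical characteristics (Lions--Perthame-type) argument, in two stages: first bound the field and the flow, then propagate derivatives of $f$ along the flow; the weighted-$H^\sigma$ bound is then obtained by an Eulerian energy estimate. Write $E(t,\cdot):=-\nabla K*\rho_f(t,\cdot)$. Since $a\le d-2<d-1$ one has $\nabla K\in L^{\fb,\infty}$, and as $L^1\cap L^\infty\subset L^{\fb',1}$, the H\"older inequality for Lorentz spaces~\eqref{eq:Holder_lorentz} gives $E\in L^\infty_\loc(\R_+,L^\infty)$. Hence the characteristic flow $Z_t=(X_t,\Xi_t)$, defined by $\dot X_t=\Xi_t$, $\dot\Xi_t=E(t,X_t)$, is globally defined, and Gr\"onwall on the ODE yields $\weight{Z_{\pm t}(z)}\le C_1(t)\weight{z}$ with $C_1$ depending only on $\Nrm{\rho_f}{L^\infty_\loc(L^1\cap L^\infty)}$ and $\Nrm{\nabla K}{L^{\fb,\infty}}$. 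Because $f(t,z)=f^\init(Z_{-t}(z))$ and the flow preserves Lebesgue measure, this already gives $\Nrm{\weight{z}^n f(t)}{L^\infty}\le C_1(t)^n\Nrm{f^\init}{L^\infty_n}$, the order-zero part of~\eqref{eq:regu_f}.

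For the higher derivatives I would induct on the differentiation order, controlling $N_k(t):=\sup_{|s|\le t}\Nrm{\nabla^k Z_s}{L^\infty}$ for $k\le\sigma$. The Jacobians solve linear ODEs $\dt\,\nabla^k Z_t=(\nabla_x E)(t,X_t)\,\nabla^k Z_t+R_k$, where $R_k$ is a polynomial in $\{\nabla^j Z_t\}_{j<k}$ with coefficients $\{(\nabla^j_x E)(t,X_t)\}_{2\le j\le k}$. The crux is bounding $\Nrm{\nabla^j_x E(t)}{L^\infty}$: writing $\nabla^j_x E=-\nabla^2 K*\nabla^{j-1}_x\rho_f$ and splitting $K$ into a compactly supported singular part and a smooth part, the smooth part is bounded by $\Nrm{\rho_f(t)}{L^1}$, while for the singular part one uses that for $a<d-2$ the kernel is locally integrable, so the contribution is $\lesssim\Nrm{\nabla^{j-1}_x\rho_f(t)}{L^\infty}$, whereas for $a=d-2$ (the critical Coulomb/logarithmic case) the kernel is of Calder\'on--Zygmund type, giving the logarithmic bound $\lesssim\Nrm{\nabla^{j-1}_x\rho_f(t)}{L^\infty}\big(1+\log^+\Nrm{\nabla^j_x\rho_f(t)}{L^\infty}\big)$. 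Since $\nabla^j_x\rho_f=\int\nabla^j_x f\,\d\xi$ and $n>d$, these are controlled by $\Nrm{\weight{z}^n\nabla^j_x f(t)}{L^\infty}$, and by Fa\`a di Bruno $\Nrm{\weight{z}^n\nabla^j_z f(t)}{L^\infty}\lesssim C_1(t)^n P_j(N_1(t),\dots,N_j(t))\Nrm{f^\init}{W^{\sigma,\infty}_n}$ for a polynomial $P_j$ (interpolating to bound the intermediate-order derivatives of $f^\init$ by its $W^{\sigma,\infty}_n$ norm). Closing the loop, $N_1$ satisfies an Osgood inequality $\dot N_1\lesssim(1+\log^+N_1)\,N_1$ and so stays finite on bounded time intervals, and then each $N_k$, $k\ge2$, satisfies a linear/Osgood inequality whose coefficients are already controlled by the induction; hence all $N_k$ are locally bounded, which is~\eqref{eq:regu_f}. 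Estimate~\eqref{eq:regu_rho_f} follows by $\xi$-integration of the weighted bound, using $n>d$.

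For the weighted Sobolev statement I would use an Eulerian energy estimate: for $|\alpha|\le\sigma$, $g_\alpha:=\partial^\alpha_z f$ solves $\dpt g_\alpha+\xi\cdot\nabla_x g_\alpha+E\cdot\nabla_\xi g_\alpha=S_\alpha$, where the commutator source $S_\alpha$ is a finite sum of terms $\partial_{x_j}\partial^{\alpha'}f$ with $|\alpha'|=|\alpha|-1$ and terms $\partial^\beta_x E\cdot\nabla_\xi\partial^{\gamma}_z f$ with $1\le|\beta|$ and $|\beta|+|\gamma|\le|\alpha|$. Testing against $\weight{z}^{2k}g_\alpha$ and summing over $|\alpha|\le\sigma$: the transport part is handled by the divergence-free identity $\nabla_x\cdot\xi+\nabla_\xi\cdot E=0$, leaving only the weight-commutator term $|(\xi,E)\cdot\nabla\weight{z}^{2k}|\lesssim(1+\Nrm{E(t)}{L^\infty})\weight{z}^{2k}$; together with the bound on $S_\alpha$ this yields $\dt\Nrm{f(t)}{H^\sigma_k}^2\lesssim\big(1+\sum_{j\le\sigma}\Nrm{\nabla^j_x E(t)}{L^\infty}\big)\Nrm{f(t)}{H^\sigma_k}^2$, all coefficients being locally bounded in $t$ by the $L^\infty$ bounds established above (the top one, $\Nrm{\nabla^\sigma_x E(t)}{L^\infty}$, being finite by the same Calder\'on--Zygmund/logarithmic estimate applied to $\nabla^{\sigma-1}_x\rho_f\in L^1\cap W^{1,\infty}$). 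Gr\"onwall then gives $f\in L^\infty_\loc(\R_+,H^\sigma_k)$.

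The main obstacle is the critical case $a=d-2$ --- which includes $3$D Coulomb and $2$D logarithmic interactions --- where $\nabla^2 K$ is merely a singular integral operator, so neither $\nabla E$ nor its higher $x$-derivatives are bounded by $\rho_f\in L^\infty$ alone; the whole bootstrap must be arranged so that the logarithmic loss in the Calder\'on--Zygmund estimate always falls on the top-order derivative, which turns the apparent circularity into a self-consistent Osgood inequality that still closes on any finite time interval.
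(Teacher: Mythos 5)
Your proof is correct in substance, but it follows a genuinely different route from the paper's. The paper's argument for the $W^{\sigma,\infty}_n$ bounds is \emph{entirely Eulerian}: it sets $M_x := \iint \n{\Dx f}^p m_n$ and $M_\xi := \iint \n{\Dv f}^p m_n$ with the explicit weight $m_n = 1+\n{\xi}^{np}+\n{x}^{np}$, derives differential inequalities for these weighted $L^p$ quantities directly from the transported PDEs for $\Dx f$ and $\Dv f$ (using the anti-self-adjointness of $\sfT$ to integrate by parts), and only at the very end lets $p\to\infty$ to recover $W^{\sigma,\infty}_n$. Your approach is Lagrangian: you bound the Jacobians $\nabla^k Z_t$ of the characteristic flow and pull back derivatives of $f^\init$ through the flow. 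The two approaches share the decisive ingredient --- the Beale--Kato--Majda/Calder\'on--Zygmund logarithmic estimate $\Nrm{\nabla E}{L^\infty}\lesssim \Nrm{\rho_f}{L^1}+\Nrm{\rho_f}{L^\infty}(1+\log^+\Nrm{\nabla\rho_f}{L^\infty})$ in the critical case $a=d-2$ --- and the same closing mechanism (the $\log$ falls on the top-order quantity, yielding an Osgood/log-Gr\"onwall bound). What the paper's Eulerian route buys you: there is no need to differentiate, or even define, the characteristic flow, so one side-steps the regularity needed for the flow to be well-defined and $C^k$ (your argument implicitly needs $E$ log-Lipschitz already at the start, which is part of what is being proved and would formally require either a regularization step or an explicit a priori argument); and the intermediate $W^{1,p}_n$ bounds for finite $p$ come for free. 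What your Lagrangian route buys: the mechanism is more transparent (the bootstrap structure is visible in the ODE for $N_1$, while in the paper it is somewhat buried in the $p\to\infty$ passage), and it generalizes cleanly to higher $\sigma$ by Fa\`a di Bruno. Your treatment of the $H^\sigma_k$ bound is Eulerian and matches the paper's (the paper just reuses its $M_\sigma$ estimate with $p=2$ and inserts the already-established $W^{\sigma,\infty}_n$ controls as coefficients), including the observation that the field part of the source term, which would formally call for $\Nrm{\nabla^\sigma E}{L^\infty}$, is covered by the just-established $\nabla^{\sigma-1}\rho_f\in W^{1,\infty}$.
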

	
	\begin{proof}
		For clarity, we first start with the case $\sigma=1$ for which we will do a detailed proof, and we will then explain how to modify the proof to get higher regularity estimates. We follow the strategy explained in the course notes \cite{golse_mean_2013}.
		
		\step{1. Case $\sigma=1$} Define the transport operator $\sfT := \xi\cdot\nabla_x + E\cdot\nabla_\xi$. Then, it holds
		\begin{subequations}
		\begin{align}\label{eq:dt_Dx}
			\dpt(\nabla_x f) &= -\sfT \nabla_x f - \nabla E \cdot \nabla_\xi f
			\\\label{eq:dt_Dv}
			\dpt(\nabla_\xi f) &= -\sfT\nabla_\xi f - \nabla_x f.
		\end{align}
		\end{subequations}
		To simplify the computations, recall that $\sfT^* = -\sfT$ and $\sfT(uv) = u\sfT(v) + \sfT(u)v$. Hence, by writing $m_n := 1+\n{\xi}^{np} + \n{x}^{np}$ and using the notation $u^p := |u|^{p-1}u$, it holds
		\begin{equation}\label{eq:ipp_m_0}
			\iintd \sfT(u)\cdot u^{p-1}\,m_n = -\iintd u\cdot\sfT(u^{p-1})\,m_n + \n{u}^p\sfT(m_n).
		\end{equation}
		However noticing that
		\begin{align*}
			u\cdot(\sfT(u^{p-1})) &= u\cdot\(|u|^{p-2}\sfT(u) + \(p-2\)\(\sfT(u)\cdot u\)u^{p-3}\)
			\\
			&= u^{p-1}\cdot\sfT(u) + \(p-2\)\(\sfT(u)\cdot u\)\n{u}^{p-2}
			\\
			&= \(p-1\)u^{p-1}\cdot\sfT(u).
		\end{align*}
		We can simplify Equation~\eqref{eq:ipp_m_0} into
		\begin{align}\label{eq:ipp_m}
			-p\iintd \sfT(u)\cdot u^{p-1}m_n &= \iintd \n{u}^p\sfT(m_n).
		\end{align}
		Now define
		\begin{align*}
			M_x := \iintd \n{\Dx f}^p m_n
			&\text{ and } M_\xi := \iintd \n{\Dv f}^p m_n.
		\end{align*}
		Then using \eqref{eq:dt_Dx} and Formula~\eqref{eq:ipp_m} for $u=\Dx f$ leads to
		\begin{align*}
			\ddt{M_x} &= -p\iintd (\Dx f)^{p-1}\cdot\(\sfT \Dx f + \nabla E\cdot\Dv f\)m_n
			\\
			&\leq \iintd \n{\Dx f}^p \sfT(m_n) + \Nrm{\nabla E}{L^\infty}(M_\xi + (p-1)M_x)
		\end{align*}
		where we used the multiplicative Young's inequality $p\,a\,b^{p-1} \leq a^p+\(p-1\)b^p$ to get the second term. In the same way, using \eqref{eq:dt_Dv} and taking $u=\Dv f$ yields
		\begin{align*}
			\ddt{M_\xi} &\leq \iintd \n{\nabla_\xi f}^p \sfT(m_n) + (M_x + (p-1)M_\xi).
		\end{align*}
		Then again by the multiplicative Young's inequality
		\begin{align*}
			\sfT(m_n) = np\(E\cdot \xi^{np-1} + \xi\cdot x^{np-1}\) \leq np\(\Nrm{E}{L^\infty} + 1\) m_n.
		\end{align*}
		Thus, for $M_{x,\xi} := M_x + M_\xi$, we obtain
		\begin{align*}
			\dt M_{x,\xi} \leq p\(n\Nrm{E}{L^\infty} + 1+\Nrm{\nabla E}{L^\infty}\) M_{x,\xi}.
		\end{align*}
		However, since we know that $\rho_f\in L^\infty_\loc(\R_+,L^\infty\cap L^1)$ by assumption, we also get the following control on $\Nrm{E}{L^\infty}$
		\begin{equation*}
			\Nrm{E}{L^\infty} \leq C\(\Nrm{\rho_f}{L^\infty} + \Nrm{\rho_f}{L^1}\) \leq C_t
		\end{equation*}
		for some function of time $C_t$ locally bounded on $\R_+$. To control $\nabla E$, we can use the integral Young's inequality if $\nabla K$ is less singular than the Coulomb potential (i.e. if $a<d-2$), and if $a=1$, then we use a singular integral estimate in the spirit of the one in~\cite{beale_remarks_1984} which can be found in the course notes \cite{golse_mean_2013} and can be written
		\begin{align*}
			\Nrm{\nabla E}{L^\infty} &\leq C\(1 + M_0 + \Nrm{\rho_f}{L^\infty} \ln(1+\Nrm{\nabla\rho_f}{L^\infty})\)
			\\
			&\leq C_t\(1+\ln(1+\Nrm{\nabla \rho_f}{L^\infty})\) =: J(t).
		\end{align*}
		Combining these bounds we arrive at $\dt M_{x,\xi} \leq p\(1+n\) J(t) M_{x,\xi}$ which by Gr\"{o}nwall's Lemma implies
		\begin{align*}
			M_{x,\xi}^\frac{1}{p}(t) \leq M_{x,\xi}^\frac{1}{p}(0) \,e^{(1+n) \int_0^tJ}.
		\end{align*}
		Now, since $M_{x,\xi}^\frac{1}{p}$ is equivalent to $\Nrm{f}{W^{1,p}_n(\R^{2d})}$ in the sense that each one is bounded by above by the other up to a multiplicative constant, letting $p\to\infty$, we obtain
		\begin{align}\label{eq:bound_nabla_f}
			\Nrm{f}{W^{1,\infty}_n(\R^{2d})} \leq \Nrm{f^\init}{W^{1,\infty}_n(\R^{2d})}\,e^{(1+n) \int_0^t J}.
		\end{align}
		However, since $n>d$, we have
		\begin{equation}\label{eq:bound_nabla_rho}
			\n{\nabla \rho_f} \leq \intd \n{\nabla_x f}\d\xi \leq C_{d,n}\Nrm{f}{W^{1,\infty}_n(\R^{2d})}
		\end{equation}
		where $C_{d,n} = \intd \weight{\xi}^{-n}\d\xi < \infty$. Combining the two inequalities~\eqref{eq:bound_nabla_f} and~\eqref{eq:bound_nabla_rho} and the fact that $e^{J(t)}\geq 1$, we deduce
		\begin{align*}
			J(t) &\leq C_t + C_t\,\ln\!\(\(1+C_{d,n}\Nrm{f^\init}{W^{1,\infty}_n(\R^{2d})}\)e^{(1+n)\int_0^t J}\)
			\\
			&\leq C_t + C_t\,\ln\!\(1+C_{d,n}\Nrm{f^\init}{W^{1,\infty}_n(\R^{2d})}\) + C_t \(1+n\) \int_0^t J.
		\end{align*}
		Hence, by Gr\"{o}nwall's Lemma
		\begin{align*}
			J(t) &\leq J(0) + \frac{1+\ln\!\(1+C_{d,n}\Nrm{f^\init}{W^{1,\infty}_n(\R^{2d})}\)}{n+1}\frac{e^{C_t(1+n)t}}{1+n}.
		\end{align*}
		We then deduce the bounds on $\Nrm{f}{W^{1,\infty}_n(\R^{2d})}$ and $\nabla\rho_f$ by inequalities~\eqref{eq:bound_nabla_f} and~\eqref{eq:bound_nabla_rho}.
		
		\step{2. Case $\sigma>1$} We give details for $\sigma=2$. The generalization to $\sigma \geq 2$ follows in the same way. In the case $\sigma=2$, Formulas~\eqref{eq:dt_Dv} and \eqref{eq:dt_Dx} become
		\begin{align*}
			\dpt(\Dv^2 f) + \sfT\Dv^2 f &= -2\,\Dx\Dv f
			\\
			\dpt(\Dx^2 f)+\sfT\Dx^2 f &= -2\,\nabla E\cdot\Dv\Dx f -\nabla^2 E\cdot \Dv f\,.
		\end{align*}
		Moreover, the mixed derivative of order two solves
		\begin{equation*}
			\dpt(\Dx\Dv f)+\sfT\Dx\Dv f= -\Dx^2 f - \nabla E\cdot\Dv^2 f\,. 
		\end{equation*}
		We define the quantities
		\begin{align*}
			& M_{xx} := \iintd \n{\Dx^2 f}^p m_n\dd x\dd\xi\,;
			\\
			& M_{\xi\xi} := \iintd \n{\Dv^2 f}^p m_n\dd x\dd\xi\,;
			\\
			& M_{x\xi} := \iintd \n{\Dx\Dv f}^p m_n\dd x\dd\xi\,;
		\end{align*}
		and compute their time derivatives, using the multiplicative Young's inequality, the bound on $\sfT(m_n)$ as in Step~1 and the fact that $p>1$:
		\begin{align*}
			\ddt{M_{\xi\xi}} &\leq p\(n\Nrm{E}{L^\infty}+n+2\) M_{\xi\xi} + 2\,p\,M_{x\xi}\,,
			\\\\
			\ddt{M_{x\xi}} &\leq p\(n\Nrm{E}{L^\infty}+n+1+\Nrm{\nabla E}{L^\infty}\) M_{x\xi} + p\,M_{xx} + p\Nrm{\nabla E}{L^\infty} M_{\xi\xi}\,,
			\\\\
			\ddt{M_{xx}} &\leq p\(n\Nrm{E}{L^\infty} + n + 2 \Nrm{\nabla E}{L^\infty}+\Nrm{\nabla^2 E}{L^\infty}\) M_{xx}
			\\
			&\qquad + 2\,p \Nrm{\nabla E}{L^\infty} M_{x\xi} + p \Nrm{\nabla^2 E}{L^\infty} M_{\xi}\,,
		\end{align*}
		where $M_{\xi}$ is defined and bounded as in Step~1. Thus, for $M_2 := M_{xx} + M_{x\xi}+M_{\xi\xi},$ we obtain
		\begin{align*}
			\dt M_2 \leq Cp\left(n\Nrm{E}{L^\infty}+n+2+2\Nrm{\nabla E}{L^\infty}+\Nrm{\nabla^2 E}{L^\infty}\right)\,M_2
		\end{align*}
		We proved in Step~1 that $\Nrm{E}{L^\infty}$ and $\Nrm{\nabla E}{L^\infty}$ are bounded. To control $\nabla^2 E$, we proceed analogously to Step~1. More generally, we can bound $\nabla^\sigma E$ by $\Dx^\sigma f$. This leads, by Gr\"{o}nwall's Lemma, to
		\begin{align}\label{eq:Lp_Dxv_s}
			M_{2}^\frac{1}{p}(t) \leq M_{2}^\frac{1}{p}(0) \,e^{C_t},
		\end{align}
		for some positive time dependent constant $C_t>0$. Now, since $M_{2}^\frac{1}{p}$ is equivalent to $\Nrm{f}{W^{2,p}_n(\R^{2d})}$ (with the exact same meaning given in Step~1), letting $p\to\infty$, we obtain
		\begin{align*}
			\Nrm{f}{W^{2,\infty}_n(\R^{2d})} \leq \Nrm{f^\init}{W^{2,\infty}_n(\R^{2d})}\,e^{C_t}.
		\end{align*}
		The general case $\sigma>1$ can be handled analogously by defining 
		\begin{equation*}
			M_{\sigma}:=\iint \n{\nabla^\sigma f}^p m_n\dd x\dd\xi\,,
		\end{equation*}
		where $\sigma = |\boldsymbol{\sigma}|$ stands for the sum of the components of the multi-index $\boldsymbol{\sigma}=(\sigma_1,\sigma_2,\dots)$.
	\end{proof}
	
	\begin{proof}[Proof of Proposition~\ref{prop:regu_Vlasov}]
		It just remains to prove that Assumption~\eqref{eq:assumption_rho_f} holds. First notice that the method used in \cite[Theorem~1]{lions_propagation_1991} actually works for any $a\in(-1, d-2]$ since the Coulomb potential is decomposed in two parts of the form $\nabla K \in L^{3/2,\infty}\cap L^1 + W^{2,\infty}$. This proves that the $n_1$ moments can be propagated, which implies that $\rho_f \in L^p$ for $p=1+\frac{n_1}{d}$ by the kinetic interpolation inequality. Then, by Young's inequality, since $n_1> \frac{d}{\fb-1}$, we deduce that
		\begin{equation*}
			E \in L^\infty_\loc(\R_+,L^\infty).
		\end{equation*}
		Finally, as proved in \cite[Corollary~5.1]{lafleche_propagation_2019}, this bound combined with the initial assumption $f\in L^\infty(1+\n{\xi}^n)$ is sufficient to control $\Nrm{\rho_f}{L^\infty}$ and gives
		\begin{align*}
			\Nrm{\rho_f(t)}{L^\infty} \leq C \(1+\int_0^t \Nrm{E(s)}{L^\infty} \d s\),
		\end{align*}
		which implies~\eqref{eq:assumption_rho_f} so that we can apply the lemma. Then once we know the $W^{s,\infty}_n(\R^{2d})$ norm is bounded at any time, if the $H^\sigma_k(\R^{2d})$ is also initially bounded, we can use again Formula~\eqref{eq:Lp_Dxv_s} but with $p=2$ and then bound the terms involving $E$ and $\Dx f$ by the $W^{\sigma,\infty}_n(\R^{2d})$ norm. We conclude again by Gr\"{o}nwall's Lemma.
	\end{proof}
	
\section{Operator identities}\label{sec:appendix_B}

	We list here some formulas for operators, which are used in this paper. First, if $A$ and $B$ are self-adjoint
	\begin{equation}\label{eq:cyclicity_p}
		\Nrm{AB}{p} = \Nrm{BA}{p}
	\end{equation}
	which follows from the fact that the singular values are the same for an operator and its adjoint \cite[Formula~1.3]{simon_trace_2005}. Then we shall remember H\"{o}lder's inequality for operators \cite[Theorem~2.8]{simon_trace_2005} which tells that for any bounded operator $A$ and $B$ and any $(p,q,r)\in[1,\infty]^3$ such that $\frac{1}{p}=\frac{1}{q}+\frac{1}{r}$, it holds
	\begin{equation}\label{eq:Holder}\tag{H\"{o}lder}
		\Nrm{AB}{p} \leq \Nrm{A}{q} \Nrm{B}{r}.
	\end{equation}
	The second important inequality is the Araki-Lieb-Thirring inequality \cite[Theorem~1]{araki_inequality_1990} which states that for any operator $A,B\geq 0$ and any $(q,r)\in[1,\infty)\times\R_+$, the following inequality is true
	\begin{equation*}
		\Tr{(BAB)^{qr}} \leq \Tr{(B^qA^qB^q)^{r}}.
	\end{equation*}
	Replacing $A$ by $A^2$ and observing that $\n{AB}^2 = BA^2B$, this can be rewritten
	\begin{equation}\label{eq:ALT_2}
		\Nrm{AB}{qr}^q \leq \Nrm{A^qB^q}{r}.
	\end{equation}
	These inequalities show that regrouping operators together in Schatten norms increases the value of the norm, while \textit{mixing} them will lower the value. In the same spirit, for any $A, B\geq 0$, $p\geq 1$ and $r\geq 0$, the following mixing inequality holds
	\begin{equation}\label{eq:mixing}
		\Nrm{B^rAB}{p} \leq \Nrm{AB^{r+1}}{p}
	\end{equation}
	
	\begin{proof}[Proof of Inequality~\eqref{eq:mixing}]
		By \ref{eq:Holder}'s inequality, we have
		\begin{align*}
			\Nrm{B^rAB}{p} \leq \Nrm{B^rA^\frac{r}{r+1}}{\frac{r+1}{r} p} \Nrm{A^\frac{1}{r+1} B}{\(r+1\) p}.
		\end{align*}
		Now, by the cyclicity property~\eqref{eq:cyclicity_p} and by Inequality~\eqref{eq:ALT_2}, we get
		\begin{align*}
			\Nrm{B^rA^\frac{r}{r+1}}{\frac{r+1}{r} p} &\leq \Nrm{AB^{r+1}}{p}^\frac{r}{r+1}
			\\
			\Nrm{A^\frac{1}{r+1} B}{\(r+1\) p} &\leq \Nrm{AB^{r+1}}{p}^\frac{1}{r+1},
		\end{align*}
		which yields the result.
	\end{proof}
	\medskip

{\bf Acknowledgements.}
C.S. acknowledges the support of the Swiss National Science Foundation through the Eccellenza project PCEFP2\_181153 and of the NCCR SwissMAP.
%


\renewcommand{\bibname}{\centerline{Bibliography}}
\bibliographystyle{abbrv} 
\bibliography{../Vlasov}

\begin{thebibliography}{10}

\bibitem{amour_classical_2013}
L.~Amour, M.~Khodja, and J.~Nourrigat.
\newblock The classical limit of the {Heisenberg} and time-dependent
  {Hartree}–{Fock} equations: the {Wick} symbol of the solution.
\newblock {\em Mathematical Research Letters}, 20(1):119--139, Jan. 2013.

\bibitem{amour_semiclassical_2013}
L.~Amour, M.~Khodja, and J.~Nourrigat.
\newblock The {Semiclassical} {Limit} of the {Time} {Dependent} {Hartree}–
  {Fock} {Equation}: the {Weyl} {Symbol} of the {Solution}.
\newblock {\em Analysis \& PDE}, 6(7):1649--1674, 2013.

\bibitem{araki_inequality_1990}
H.~Araki.
\newblock On an inequality of {Lieb} and {Thirring}.
\newblock {\em Letters in Mathematical Physics}, 19(2):167--170, 1990.

\bibitem{athanassoulis_strong_2011}
A.~Athanassoulis, T.~Paul, F.~Pezzotti, and M.~Pulvirenti.
\newblock Strong {Semiclassical} {Approximation} of {Wigner} {Functions} for
  the {Hartree} {Dynamics}.
\newblock {\em Rendiconti Lincei - Matematica e Applicazioni}, 22(4):525--552,
  2011.

\bibitem{bach_kinetic_2016}
V.~Bach, S.~Breteaux, S.~Petrat, P.~Pickl, and T.~Tzaneteas.
\newblock Kinetic {Energy} {Estimates} for the {Accuracy} of the
  {Time}-{Dependent} {Hartree}-{Fock} {Approximation} with {Coulomb}
  {Interaction}.
\newblock {\em Journal de Mathématiques Pures et Appliquées}, 105(1):1--30,
  Jan. 2016.

\bibitem{bardos_global_1985}
C.~Bardos and P.~Degond.
\newblock Global {Existence} for the {Vlasov}-{Poisson} {Equation} in 3 {Space}
  {Variables} with {Small} {Initial} {Data}.
\newblock {\em Annales de l'Institut Henri Poincaré. Analyse Non Linéaire},
  2(2):101--118, 1985.

\bibitem{bardos_derivation_2002}
C.~Bardos, L.~Erdös, F.~Golse, N.~J. Mauser, and H.-T. Yau.
\newblock Derivation of the {Schrödinger}–{Poisson} {Equation} from the
  {Quantum} {N}-body {Problem}.
\newblock {\em Comptes Rendus Mathematique}, 334(6):515--520, Jan. 2002.

\bibitem{bardos_weak_2000}
C.~Bardos, F.~Golse, and N.~J. Mauser.
\newblock Weak {Coupling} {Limit} of the {N}-particle {Schrödinger}
  {Equation}.
\newblock {\em Methods and Applications of Analysis}, 7(2):275--294, 2000.

\bibitem{beale_remarks_1984}
J.~T. Beale, T.~Kato, and A.~J. Majda.
\newblock Remarks on the breakdown of smooth solutions for the 3-{D} {Euler}
  equations.
\newblock {\em Communications in Mathematical Physics}, 94(1):61--66, Mar.
  1984.

\bibitem{benedikter_mean-field_2016}
N.~Benedikter, V.~Jaksic, M.~Porta, C.~Saffirio, and B.~Schlein.
\newblock Mean-{Field} {Evolution} of {Fermionic} {Mixed} {States}.
\newblock {\em Communications on Pure and Applied Mathematics},
  69(12):2250--2303, 2016.

\bibitem{benedikter_hartree_2016}
N.~Benedikter, M.~Porta, C.~Saffirio, and B.~Schlein.
\newblock From the {Hartree} {Dynamics} to the {Vlasov} {Equation}.
\newblock {\em Archive for Rational Mechanics and Analysis}, 221(1):273--334,
  2016.

\bibitem{benedikter_mean-field_2014}
N.~Benedikter, M.~Porta, and B.~Schlein.
\newblock Mean-field {Evolution} of {Fermionic} {Systems}.
\newblock {\em Communications in Mathematical Physics}, 331(3):1087--1131, Nov.
  2014.

\bibitem{bergh_interpolation_1976}
J.~Bergh and J.~Löfström.
\newblock {\em Interpolation spaces. {An} introduction}, volume 223 of {\em
  Grundlehren der {Mathematischen} {Wissenschaften}}.
\newblock Springer Berlin Heidelberg, Berlin, Heidelberg, 1976.

\bibitem{boulkhemair_l2_1999}
A.~Boulkhemair.
\newblock L2 {Estimates} for {Weyl} {Quantization}.
\newblock {\em Journal of Functional Analysis}, 165(1):173--204, June 1999.

\bibitem{castella_l2_1997}
F.~Castella.
\newblock L2 {Solutions} to the {Schrödinger}–{Poisson} {System}:
  {Existence}, {Uniqueness}, {Time} {Behaviour}, and {Smoothing} {Effects}.
\newblock {\em Mathematical Models and Methods in Applied Sciences},
  7(08):1051--1083, 1997.

\bibitem{chen_rate_2018}
L.~Chen, J.~O. Lee, and J.~Lee.
\newblock Rate of {Convergence} towards {Hartree} {Dynamics} with {Singular}
  {Interaction} {Potential}.
\newblock {\em Journal of Mathematical Physics}, 59(3):031902, 2018.

\bibitem{chen_rate_2011}
L.~Chen, J.~O. Lee, and B.~Schlein.
\newblock Rate of {Convergence} {Towards} {Hartree} {Dynamics}.
\newblock {\em Journal of Statistical Physics}, 144(4):872--903, 2011.

\bibitem{desvillettes_polynomial_2015}
L.~Desvillettes, E.~Miot, and C.~Saffirio.
\newblock Polynomial propagation of moments and global existence for a
  {Vlasov}–{Poisson} system with a point charge.
\newblock {\em Annales de l'Institut Henri Poincare (C) Non Linear Analysis},
  32(2):373--400, Mar. 2015.

\bibitem{dobrushin_vlasov_1979}
R.~L. Dobrushin.
\newblock Vlasov equations.
\newblock {\em Functional Analysis and Its Applications}, 13(2):115--123, 1979.

\bibitem{elgart_nonlinear_2004}
A.~Elgart, L.~Erdős, B.~Schlein, and H.-T. Yau.
\newblock Nonlinear {Hartree} equation as the mean field limit of weakly
  coupled fermions.
\newblock {\em Journal de Mathématiques Pures et Appliquées},
  83(10):1241--1273, Oct. 2004.

\bibitem{erdos_derivation_2001}
L.~Erdös and H.-T. Yau.
\newblock Derivation of the {Nonlinear} {Schrödinger} {Equation} from a {Many}
  {Body} {Coulomb} {System}.
\newblock {\em Advances in Theoretical and Mathematical Physics},
  5(6):1169--1205, 2001.

\bibitem{figalli_semiclassical_2012}
A.~Figalli, M.~Ligabò, and T.~Paul.
\newblock Semiclassical {Limit} for {Mixed} {States} with {Singular} and
  {Rough} {Potentials}.
\newblock {\em Indiana University Mathematics Journal}, 61(1):193--222, 2012.

\bibitem{frohlich_microscopic_2011}
J.~Fröhlich and A.~Knowles.
\newblock A {Microscopic} {Derivation} of the {Time}-{Dependent}
  {Hartree}-{Fock} {Equation} with {Coulomb} {Two}-{Body} {Interaction}.
\newblock {\em Journal of Statistical Physics}, 145(1):23, Oct. 2011.

\bibitem{frohlich_mean-field_2009}
J.~Fröhlich, A.~Knowles, and S.~Schwarz.
\newblock On the {Mean}-{Field} {Limit} of {Bosons} with {Coulomb} {Two}-{Body}
  {Interaction}.
\newblock {\em Communications in Mathematical Physics}, 288(3):1023--1059, June
  2009.

\bibitem{gasser_semiclassical_1998}
I.~Gasser, R.~Illner, P.~A. Markowich, and C.~Schmeiser.
\newblock Semiclassical, $t\rightarrow \infty$ asymptotics and dispersive
  effects for {Hartree}-{Fock} systems.
\newblock {\em ESAIM: Mathematical Modelling and Numerical Analysis},
  32(6):699--713, 1998.

\bibitem{ginibre_class_1980}
J.~Ginibre and G.~Velo.
\newblock On a {Class} of non {Linear} {Schrödinger} {Equations} with non
  {Local} {Interaction}.
\newblock {\em Mathematische Zeitschrift}, 170(2):109--136, 1980.

\bibitem{ginibre_global_1985}
J.~Ginibre and G.~Velo.
\newblock The {Global} {Cauchy} {Problem} for the non {Linear} {Schrödinger}
  {Equation} {Revisited}.
\newblock {\em Annales de l'Institut Henri Poincare (C) Non Linear Analysis},
  2(4):309--327, July 1985.

\bibitem{golse_mean_2013}
F.~Golse.
\newblock Mean {Field} {Kinetic} {Equations} - {M2} {Course} {Notes}.
\newblock Course notes, Ecole Polytechnique, Paris, Sept. 2013.

\bibitem{golse_mean_2016}
F.~Golse, C.~Mouhot, and T.~Paul.
\newblock On the {Mean} {Field} and {Classical} {Limits} of {Quantum}
  {Mechanics}.
\newblock {\em Communications in Mathematical Physics}, 343(1):165--205, 2016.

\bibitem{golse_schrodinger_2017}
F.~Golse and T.~Paul.
\newblock The {Schrödinger} {Equation} in the {Mean}-{Field} and
  {Semiclassical} {Regime}.
\newblock {\em Archive for Rational Mechanics and Analysis}, 223(1):57--94,
  2017.

\bibitem{golse_empirical_2019}
F.~Golse and T.~Paul.
\newblock Empirical {Measures} and {Quantum} {Mechanics}: {Applications} to the
  {Mean}-{Field} {Limit}.
\newblock {\em Communications in Mathematical Physics}, 369(3):1021--1053, Aug.
  2019.

\bibitem{golse_derivation_2018}
F.~Golse, T.~Paul, and M.~Pulvirenti.
\newblock On the {Derivation} of the {Hartree} {Equation} from the {N}-{Body}
  {Schrödinger} {Equation}: {Uniformity} in the {Planck} {Constant}.
\newblock {\em Journal of Functional Analysis}, 275(7):1603--1649, Oct. 2018.

\bibitem{graffi_mean-field_2003}
S.~Graffi, A.~Martinez, and M.~Pulvirenti.
\newblock Mean-{Field} {Approximation} of {Quantum} {Systems} and {Classical}
  {Limit}.
\newblock {\em Mathematical Models and Methods in Applied Sciences},
  13(01):59--73, Jan. 2003.

\bibitem{grillakis_second-order_2010}
M.~G. Grillakis, M.~Machedon, and D.~Margetis.
\newblock Second-order corrections to mean field evolution for weakly
  interacting {Bosons}. {I}.
\newblock {\em Communications in Mathematical Physics}, 294(1):273--301, Feb.
  2010.

\bibitem{han-kwan_propagation_2019}
D.~Han-Kwan.
\newblock On propagation of higher space regularity for nonlinear {Vlasov}
  equations.
\newblock {\em Analysis \& PDE}, 12(1):189--244, Jan. 2019.

\bibitem{holding_uniqueness_2018}
T.~Holding and E.~Miot.
\newblock Uniqueness and stability for the {Vlasov}-{Poisson} system with
  spatial density in {Orlicz} spaces.
\newblock In {\em Mathematical {Analysis} in {Fluid} {Mechanics}: {Selected}
  {Results}}, volume 710 of {\em Contemporary {Mathematics}}, pages 145--162.
  Amer. Math. Soc., Providence, RI, 2018.

\bibitem{hunt_lp_1966}
R.~A. Hunt.
\newblock On {L}(p, q) spaces.
\newblock {\em L'Enseignement Mathématique}, 12:249--276, 1966.

\bibitem{iordanskij_cauchy_1961}
S.~V. Iordanskij.
\newblock The {Cauchy} problem for the kinetic equation of plasma.
\newblock {\em Trudy Matematicheskogo Instituta Imeni V. A. Steklova},
  60:181--194, 1961.

\bibitem{kuz_rate_2015}
E.~Kuz.
\newblock Rate of {Convergence} to {Mean} {Field} for {Interacting} {Bosons}.
\newblock {\em Communications in Partial Differential Equations},
  40(10):1831--1854, Oct. 2015.

\bibitem{lafleche_global_2019}
L.~Lafleche.
\newblock Global {Semiclassical} {Limit} from {Hartree} to {Vlasov} {Equation}
  for {Concentrated} {Initial} {Data}.
\newblock {\em arXiv:1902.08520}, pages 1--26, Feb. 2019.

\bibitem{lafleche_propagation_2019}
L.~Lafleche.
\newblock Propagation of {Moments} and {Semiclassical} {Limit} from {Hartree}
  to {Vlasov} {Equation}.
\newblock {\em Journal of Statistical Physics}, 177(1):20--60, July 2019.

\bibitem{lieb_analysis_2001}
E.~H. Lieb and M.~Loss.
\newblock {\em Analysis}, volume~14 of {\em Graduate {Studies} in
  {Mathematics}}.
\newblock American Mathematical Society, Providence, RI, 2 edition edition,
  2001.

\bibitem{lions_sur_1993}
P.-L. Lions and T.~Paul.
\newblock Sur les mesures de {Wigner}.
\newblock {\em Revista Matemática Iberoamericana}, 9(3):553--618, 1993.

\bibitem{lions_propagation_1991}
P.-L. Lions and B.~Perthame.
\newblock Propagation of moments and regularity for the 3-dimensional
  {Vlasov}-{Poisson} system.
\newblock {\em Inventiones Mathematicae}, 105(2):415--430, 1991.

\bibitem{loeper_uniqueness_2006}
G.~Loeper.
\newblock Uniqueness of the solution to the {Vlasov}-{Poisson} system with
  bounded density.
\newblock {\em Journal de Mathématiques Pures et Appliquées}, 86(1):68--79,
  July 2006.

\bibitem{markowich_classical_1993}
P.~A. Markowich and N.~J. Mauser.
\newblock The {Classical} {Limit} of a {Self}-{Consistent} {Quantum} {Vlasov}
  {Equation}.
\newblock {\em Mathematical Models and Methods in Applied Sciences},
  3(01):109--124, Feb. 1993.

\bibitem{miot_uniqueness_2016}
E.~Miot.
\newblock A {Uniqueness} {Criterion} for {Unbounded} {Solutions} to the
  {Vlasov}-{Poisson} {System}.
\newblock {\em Communications in Mathematical Physics}, 346(2):469--482, Sept.
  2016.

\bibitem{mitrouskas_bogoliubov_2019}
D.~Mitrouskas, S.~Petrat, and P.~Pickl.
\newblock Bogoliubov corrections and trace norm convergence for the {Hartree}
  dynamics.
\newblock {\em Reviews in Mathematical Physics}, page 1950024, Feb. 2019.

\bibitem{narnhofer_vlasov_1981}
H.~Narnhofer and G.~L. Sewell.
\newblock Vlasov hydrodynamics of a quantum mechanical model.
\newblock {\em Communications in Mathematical Physics}, 79(1):9--24, Mar. 1981.

\bibitem{pallard_moment_2012}
C.~Pallard.
\newblock Moment {Propagation} for {Weak} {Solutions} to the
  {Vlasov}–{Poisson} {System}.
\newblock {\em Communications in Partial Differential Equations},
  37(7):1273--1285, July 2012.

\bibitem{pallard_space_2014}
C.~Pallard.
\newblock Space {Moments} of the {Vlasov}-{Poisson} {System}: {Propagation} and
  {Regularity}.
\newblock {\em SIAM Journal on Mathematical Analysis}, 46(3):1754--1770, Jan.
  2014.

\bibitem{petrat_hartree_2017}
S.~Petrat.
\newblock Hartree {Corrections} in a {Mean}-field {Limit} for {Fermions} with
  {Coulomb} {Interaction}.
\newblock {\em Journal of Physics A: Mathematical and Theoretical},
  50(24):244004, 2017.

\bibitem{petrat_new_2016}
S.~Petrat and P.~Pickl.
\newblock A {New} {Method} and a {New} {Scaling} for {Deriving} {Fermionic}
  {Mean}-{Field} {Dynamics}.
\newblock {\em Mathematical Physics, Analysis and Geometry}, 19(1):3, Mar.
  2016.

\bibitem{pezzotti_mean-field_2009}
F.~Pezzotti and M.~Pulvirenti.
\newblock Mean-field limit and {Semiclassical} {Expansion} of a {Quantum}
  {Particle} {System}.
\newblock {\em Annales de l'Institut Henri Poincaré}, 10(1):145--187, 2009.

\bibitem{pfaffelmoser_global_1992}
K.~Pfaffelmoser.
\newblock Global classical solutions of the {Vlasov}-{Poisson} system in three
  dimensions for general initial data.
\newblock {\em Journal of Differential Equations}, 95(2):281--303, Feb. 1992.

\bibitem{pickl_simple_2011}
P.~Pickl.
\newblock A {Simple} {Derivation} of {Mean} {Field} {Limits} for {Quantum}
  {Systems}.
\newblock {\em Letters in Mathematical Physics}, 97(2):151--164, Aug. 2011.

\bibitem{porta_mean_2017}
M.~Porta, S.~Rademacher, C.~Saffirio, and B.~Schlein.
\newblock Mean {Field} {Evolution} of {Fermions} with {Coulomb} {Interaction}.
\newblock {\em Journal of Statistical Physics}, 166(6):1345--1364, 2017.

\bibitem{rodnianski_quantum_2009}
I.~Rodnianski and B.~Schlein.
\newblock Quantum {Fluctuations} and {Rate} of {Convergence} {Towards} {Mean}
  {Field} {Dynamics}.
\newblock {\em Communications in Mathematical Physics}, 291(1):31--61, Oct.
  2009.

\bibitem{saffirio_mean-field_2018}
C.~Saffirio.
\newblock Mean-{Field} {Evolution} of {Fermions} with {Singular} {Interaction}.
\newblock In D.~Cadamuro, M.~Duell, W.~Dybalski, and S.~Simonella, editors,
  {\em Macroscopic {Limits} of {Quantum} {Systems}}, Springer {Proceedings} in
  {Mathematics} \& {Statistics}, pages 81--99, Cham, 2018. Springer
  International Publishing.

\bibitem{saffirio_semiclassical_2019}
C.~Saffirio.
\newblock Semiclassical {Limit} to the {Vlasov} {Equation} with {Inverse}
  {Power} {Law} {Potentials}.
\newblock {\em Communications in Mathematical Physics}, 373(2):571--619, Mar.
  2019.

\bibitem{saffirio_hartree_2020}
C.~Saffirio.
\newblock From the {Hartree} equation to the {Vlasov}-{Poisson} system: strong
  convergence for a class of mixed states.
\newblock {\em SIAM Journal on Mathematical Analysis}, 52(6):5533--5553, Jan.
  2020.

\bibitem{simon_trace_2005}
B.~Simon.
\newblock {\em Trace {Ideals} and {Their} {Applications}: {Second} {Edition}},
  volume 120 of {\em Mathematical {Surveys} and {Monographs}}.
\newblock American Mathematical Society, 2 edition edition, 2005.

\bibitem{spohn_vlasov_1981}
H.~Spohn.
\newblock On the {Vlasov} hierarchy.
\newblock {\em Mathematical Methods in the Applied Sciences}, 3(1):445--455,
  1981.

\bibitem{ukai_classical_1978}
S.~Ukai and T.~Okabe.
\newblock On classical solutions in the large in time of two-dimensional
  {Vlasov}'s equation.
\newblock {\em Osaka Journal of Mathematics}, 15(2):245--261, 1978.

\bibitem{yafaev_sharp_1999}
D.~Yafaev.
\newblock Sharp {Constants} in the {Hardy}–{Rellich} {Inequalities}.
\newblock {\em Journal of Functional Analysis}, 168(1):121--144, Oct. 1999.

\end{thebibliography}

\end{document}